\newtheorem{theorem}{Theorem}
\newtheorem{lemma}{Lemma}
\newtheorem{assumption}{Assumption}
\newtheorem{proposition}{Proposition}
\newtheorem{remark}{Remark}[section]
\renewcommand{\cite}[1]{\citep{#1}}
\renewcommand{\cite}[1]{\citep{#1}}
\newcommand{\Indicator}[1]{\mathds{1}\LRl{#1}}
\def\SCOP{\operatorname{SCOP}}
\def\FCR{\operatorname{FCR}}
\def\FDR{\operatorname{FDR}}
\def\FCP{\operatorname{FCP}}
\def\PI{\operatorname{PI}}
\def\hmu{\hat{\mu}}
\def\htau{\hat{\tau}}
\def\hkappa{\hat{\kappa}}
\def\hgS{\hat{\gS}}
\def\unif{\operatorname{Unif}}
\def\rank{\operatorname{Rank}}
\newcommand{\LRs}[1]{\left(#1\right)}
\newcommand{\LRm}[1]{\left[#1\right]}
\newcommand{\LRl}[1]{\left\{#1\right\}}
\def\gD{{\mathcal{D}}}
\def\gM{{\mathcal{M}}}
\def\gS{{\mathcal{S}}}
\def\gU{{\mathcal{U}}}
\def\gC{{\mathcal{C}}}
\def\gE{{\mathcal{E}}}
\def\gN{{\mathcal{N}}}
\def\sP{\mathbb{P}}
\newcommand{\E}{\mathbb{E}}
\def\ermR{{\textnormal{R}}}
\def\ermT{{\textnormal{T}}}
\newcommand{\LRabs}[1]{\left|#1\right|}
\def\sR{{\mathbb{R}}}
\def\rvx{{\mathbf{x}}}
\newcommand{\Eqmark}[2]{\stackrel{(#1)}{#2}}
\def\sP{{\mathbb{P}}}
\title{Selective Conformal Inference with False Coverage-statement Rate Control}
\author{
Yajie Bao$^a$, Yuyang Huo$^b$, Haojie Ren$^a$ and Changliang Zou$^b$\footnote{Corresponding Author: nk.chlzou@gmail.com}\\
$^a$School of Mathematical Sciences,  Shanghai Jiao Tong University \\ Shanghai, P.R. China \\
$^b$School of Statistics and Data Science, Nankai University\\
 Tianjin, P.R. China
}
\begin{document}
\maketitle
\begin{abstract}
Conformal inference is a popular tool for constructing prediction intervals. We consider here the scenario of post-selection/selective conformal inference, that is prediction intervals are reported only for individuals selected from unlabeled test data. To account for multiplicity, we develop a general split conformal framework to construct selective prediction intervals with the false coverage-statement rate control. We first investigate the \citet{benjamini2005false}'s false coverage rate-adjusted method in the present setting, and show that it is able to achieve false coverage-statement rate control but yields uniformly inflated prediction intervals. We then propose a novel solution to the problem called {\it selective conditional conformal prediction}. Our method performs selection procedures on both the calibration set and test set, and then constructs conformal prediction intervals for the selected test candidates with the aid of conditional empirical distribution obtained by the post-selection calibration set. When the selection rule is exchangeable, we show that our proposed method can exactly control the false coverage-statement rate in a model-free and distribution-free guarantee. For non-exchangeable selection procedures involving the calibration set, we provide non-asymptotic bounds for the false coverage-statement rate under mild distributional assumptions. Numerical results confirm the effectiveness and robustness of our method in false coverage-statement rate control and show that it achieves more narrowed prediction intervals over existing methods across various settings.
\medskip

\noindent {\it Keywords}:  Conditional empirical distribution; Distribution-free;  Non-exchangeable conditions; Post-selection inference; Prediction intervals; Split conformal.

\end{abstract}

%\textcolor{blue}{[As the reviewer and AE request, we focus on presenting the problem and our proposal in the introduction. The discussion about the related work has been moved to the next section.]}

\section{Introduction}\label{sec:introdution}

To improve the prediction performance in modern data, many sophisticated machine learning algorithms including various ``black-box'' models are proposed. While often witnessing empirical success, quantifying prediction uncertainty is one of the major issues for interpretable machine learning. 
Conformal inference \citep{vovk1999machine,vovk2005algorithmic} provides a powerful and flexible tool to quantify the uncertainty of predictions. With the extensive availability of big data, making predictive inference on all available unlabeled data is either unnecessary or inefficient in many applications. For example, in the recruitment decisions, only some selected viable candidates can get into interview processes \citep{shehu2016adaptive}. In the drug discovery trials, researchers select promising drug-target pairs based on predicting candidates’ activity for further clinical trials \citep{dara2021machine}. In such problems, the most common way is to select a subset of individuals with some rules through some statistical/machine learning algorithms at first, and then perform statistical inference only on the selected samples.

Consider a typical setting in which we observe a labeled data set $\gD_{l} = \{(X_i,Y_i) \in \sR^d \times \sR\}_{i=1}^{2n}$ and a test set $\gD_u=\{X_i\}_{i=2n+1}^{2n+m}$ whose outcomes $\{Y_i\}_{i=2n+1}^{2n+m}$ are unobserved. Suppose a selection rule is performed on the test set, and outputs a selected test subset $\hat{\gS}_{u}\subseteq\{2n+1,\ldots,2n+m\}$. We aim to construct the prediction interval of the unknown label $Y_j$ for each candidate $j\in\hat{\gS}_{u}$. In this work, we focus on the split conformal inference framework \citep{papadopoulos2002inductive,lei2018distribution}, which is a distribution-free inferential tool to provide valid uncertainty quantification for complex machine learning models. The labeled data $\gD_l$ is split into two disjoint parts of size $n$, the training set $\gD_t$ for fitting a prediction model $\hat{\mu}(X)$, and the calibration set $\gD_c$ for constructing a prediction interval for the label $Y_j$ around $\hat{\mu}(X_j)$. Let $\gC\subset\{1,\cdots,2n\}$ be the index set of calibration data and denote the residuals as $R_i=|Y_i-\hat{\mu}(X_i)|$ for ${i\in \gC}$. For each test point $j$, the corresponding $(1 - \alpha)$ marginal conformal prediction interval is
\begin{align}\label{eq:marginal_PI}
    \PI^{\text{M}}_j = \hat{\mu}(X_{j}) \pm Q_{\alpha}\LRs{\{R_i\}_{i\in \gC}},
\end{align}
where $Q_{\alpha}(\{R_i\}_{i\in \gC})$ denotes  $\lceil (1 - \alpha)(|\gC|+1)\rceil \text{-th smallest value in } \{R_i\}_{i\in \gC}.$ If the data points in calibration set and test set are independent and identically distributed, then $\PI^{\text{M}}_j$ enjoys the marginal coverage guarantee: $\sP(Y_{j} \in  \PI^{\text{M}}_j) \geq 1-\alpha$.

However, as pointed out by \citet{benjamini2005false}, ignoring the multiplicity in the construction of selected parameters' confidence intervals will result in a distorted average coverage. Similar issues also appear in post-selection predictive inference. Let $\{\PI_j\}_{j\in \hgS_u}$ be the prediction intervals reported by some predictive inference method. The proportion of selected labels not covered by respective $\PI_j$'s can be much larger than the marginal nominal confidence level. {In our real data application of house price prediction, at the nominal level $90\%$, the average coverage ratio of selected prediction intervals is only $61.51\%$,
while our proposed method achieves $91.43\%$ average coverage ratio.}
Without proper adjustment, we cannot obtain reliable intervals for selected test candidates. 
\citet{benjamini2005false} pioneered the criterion, false coverage-statement rate, $\FCR$ to take into account multiplicity in the confidence intervals of multiple selected parameters. The concept of $\FCR$ can readily be adapted to the present conformal prediction setting. We define it as the expected ratio of the number of reported prediction intervals failing to cover their respective true outcomes to the total number of reported prediction intervals, say
\begin{equation}\label{eq:FCR_def} 
    \FCR = \E\LRm{\frac{|\{j\in \hat{\gS}_{u}: \ Y_j \not\in \PI_j\}|}{\max\{|\hat{\gS}_u|, 1\}}}.
\end{equation}
\citet{benjamini2005false} also provided a selection-agnostic method that multiplies the confidence level $\alpha$ by a quantity related to the proportion of selected candidates over all candidates. We extend this adjusted approach to the conformal prediction setting and prove it can indeed control $\FCR$ at the target level. However, the $\FCR$-adjusted method is generally known to yield uniformly inflated confidence intervals \citep{weinstein2013selection}, which is also verified in our numerical experiments. This is because the adjusted confidence intervals discard the selection event when calculating the miscoverage probabilities of selected confidence intervals.

This paper develops a novel conformal framework to construct post-selection prediction intervals while controlling the $\FCR$ around the target level. The key ingredient of our proposal entails performing a pre-specified selective procedure on both the calibration set and test set. Let $\hgS_c \subseteq \gC$ be the indices of the selected calibration set. With the help of conditional empirical distribution obtained by the labeled samples in $\hgS_c$, we can construct the conditional conformal prediction intervals for $j\in \hgS_u$ as
\begin{align}\label{eq:SCOP_PI}
    \PI^{\SCOP}_j = \hmu(X_j) \pm Q_{\alpha}\LRs{\{R_i\}_{i\in \hgS_c}},
\end{align}
where $Q_{\alpha}(\{R_i\}_{i\in \hgS_c})$ denotes  $\lceil (1 - \alpha)(|\hgS_c|+1)\rceil \text{-th smallest value in } \{R_i\}_{i\in \hgS_c}$. We refer this procedure as {\it selective conditional conformal predictions} (abbreviated as SCOP in superscripts). The proposed procedure is model-agnostic, in the sense that it could wrap around any prediction algorithms with commonly used selection procedures to provide prediction intervals. We investigate the $\FCR$ control of the proposal under two general classes of selective rules: exchangeable selection and ranking-based selection. 

The main contributions of the paper are summarized as follows. Firstly, we investigate the $\FCR$-adjusted method in the setting of conformal inference and show that it can achieve $\FCR$ control if the selection rule does not depend on the calibration set. Secondly, under a unified framework and exchangeable assumption on the selection rule, we show that our method exactly controls the $\FCR$ at the target level. Thirdly, we adapt our method with an inflated threshold in choosing calibration points under the ranking-based selection beyond exchangeability. If the ranking threshold depends only on the test set, our method enjoys a distribution-free $\FCR$ control and anti-conservation guarantee. If the ranking threshold depends on both the test set and calibration set, we provide a non-asymptotic bound for the $\FCR$ control gap under mild distributional assumptions. Finally, we illustrate the easy coupling of our method with commonly used prediction algorithms, and numerical experiments show that our method exhibits more accurate control over $\FCR$ compared to existing methods while offering a narrowed prediction interval.

\section{Connections to existing works}\label{sec:related_work}

Post-selection inference on a large number of variables has attracted considerable research attention. Along the path of $\FCR$-adjusted procedure, \citet{weinstein2013selection}, \citet{zhao2020constructing} and \citet{zhao2022general} further proposed some methods to narrow the adjusted confidence intervals by incorporating more selection information. Recently, \citet{xu2022post} combined the $\FCR$-adjusted procedure with ``$e$-value'' based confidence intervals. Among some others, \citet{fithian2014optimal}, \citet{lee2016exact} and \citet{taylor2018post}  proposed constructing conditional confidence intervals for each selected variable and showed that the selective error rate can be controlled given that the selected set is equal to some deterministic subset. Specially, \citet{reid2017post} considered the inference problem after Top-K selection or Benjamini--Hochberg procedure \citep{benjamini1995controlling}. However, those methods either require some tractable conditional distribution assumptions or are only applicable for some given prediction algorithms, such as normality assumptions or LASSO model. Besides, a relevant direction is a splitting-based strategy for high-dimensional inference. The number of variables is firstly reduced to a manageable size using one part of the data, while confidence intervals or significance tests can be constructed by computing estimates in a low-dimensional region with the other part of the data and selected variables. See \citet{wasserman2009high}, \citet{rinaldo2019bootstrapping}, \citet{du2021false} and the references therein. One potential related work is \citet{chen2020valid}, in which the authors considered constructing confidence intervals for regression coefficients after removing the potential outliers from the data. Our paradigm differs substantially from those works as we focus on post-selection inference for sample selection rather than variable selection, and it is difficult to extend existing works on variable selection to the present problem due to the requirements of model or distribution assumptions. At last,  \citet{weinstein2020online} provided a nice solution to the problem of $\FCR$ control in the online setting. The authors also discussed the selective inference problem under the framework of conformal prediction, but that is different from our offline setting.

The building block of our proposed method is the conformal inference framework, which has been well studied in many settings, including non-parametric regression \citep{lei2013}, quantile regression \citep{romano2019conformalized}, high-dimensional regression \citep{lei2018distribution} and classification \citep{sadinle2019least,romano2020classification}, etc. More comprehensive reviews can be found in \citet{shafer2008tutorial} and \citet{angelopoulos2021gentle}. Conventionally, each conformal prediction interval enjoys a distribution-free marginal coverage guarantee if the data points are exchangeable. However, the exchangeability may be violated in practice, and the violation would be more severe in the post-selection conformal inference because the selection procedure might be determined by either the labeled data or the test data, or both. 
%In such situations, one particularly difficult issue is that the selected set $\hat{\gS}_u$ is random and has a complex dependence structure to the labelled data and test data. 
Conformal inference beyond exchangeability has attracted attention \citep{tibshirani2019conformal,lei2021conformal,candes2021conformalized}. In particular, \citet{barber2022conformal} used weighted quantiles to implement conformal inference when the predictive algorithms cannot treat data symmetrically. However, how to quantify or characterize the non-exchangeability between the selected test set and the selected calibration set when the distribution is unknown remains a challenge. To address this challenge, we introduce a new technique by carefully constructing a {virtual} post-selection calibration set in theory, which may be of independent interest for conformal prediction for non-exchangeable data.

\section{Selective conditional conformal prediction}\label{sec:conformal-pred}

\subsection{Problem statement}

Now we state the selective predictive inference problem in the conformal framework. Let {$g:\sR^d \to \sR$} be one plausible selection-score function, which is user-specified or estimated by the training data $\gD_t$. Denote $T_i=g(X_i)$ for $i \in \gC \cup \gU$. A particular selection procedure $\mathbf{S}$ is applied to $\{T_i\}_{i\in\gU}$ to find the test points of interest, and those $X_i$'s with smaller values of $T_i$ tend to be chosen by $\mathbf{S}$. Let the selected test set be $\hat{\gS}_u=\{i\in\gU: T_i\leq\hat{\tau}\}$, where $\hat{\tau}$ is the threshold determined by the selection procedure. Various procedures $\mathbf{S}$ can be chosen from different perspectives, and we summarize the selection threshold $\hat{\tau}$ into three types: (a) The threshold $\hat{\tau}$ is user-specified or independent of the calibration set $\gD_c$ and test set $\gD_u$. For example, $\hat{\tau}= \tau_0$, where $\tau_0$ is a prior value or could be obtained from a process depending only on the training set $\gD_t$; (b) The threshold $\htau$ depends only on the test scores $\{T_i\}_{i\in\gU}$. This type includes the Top-K selection which chooses the first $K$ individuals, and the quantile-based selection in which a given proportion of individuals with the smallest $T_i$ values in the test set are selected \citep{taylor2018post}; (c) The threshold $\htau$ fully or partially relies on the calibration set $\gD_c$. For example, $\htau$ is some quantile of the observed responses in $\gD_c$, or the quantile of the predicted responses in both $\gD_c$ and $\gD_u$. In particular, one may employ some multiple testing procedures with conformal $p$-values \citep{bates2021testing,jin2022selection} to achieve error rate control, such as the false discovery rate ($\FDR$) control by applying Benjamini--Hochberg procedure.

Hereafter, similar to \citet{bates2021testing} and \citet{angelopoulos2023prediction}, we treat the training set $\gD_t$ as fixed and only consider the randomness from the calibration set $\gD_c$ and test set $\gD_u$. Our goal is to construct valid conformal prediction intervals for the selected candidates in $\hgS_u$ while controlling $\FCR$ defined in \eqref{eq:FCR_def} below or around the target level $\alpha \in (0,1)$. {In Section \ref{subsec:ACP}, we first investigate adjusted conformal prediction and show that it is able to achieve fair control. Then, we introduce our proposed method and validate its effectiveness specifically for exchangeable selective procedures in Section \ref{subsec:SCOP}.}

\subsection{Adjusted conformal prediction}\label{subsec:ACP}

To control the $\FCR$ after the selection of parameters, \citet{benjamini2005false} proposed an adjusted method by constructing a more conservative confidence interval for each selected candidate. Now we generalize it to the present conformal prediction setting. Define
\[
 M^j_{\min} = \min_{t \in \sR}\LRl{|\hgS_u^{j\gets t}|:j\in \hgS_u^{j\gets t}},
 \]
where $\hgS_u^{j\gets t}$ denotes the selected subset when replacing $T_j$ with some deterministic value $t$. The $\FCR$-adjusted conformal prediction intervals amount to marginally constructing a wider prediction interval with level $(1 - \alpha_j)$ instead of the constant level $(1 - \alpha)$ in \eqref{eq:marginal_PI}, where $\alpha_j = \alpha\times M_{\min}^j/m$. That is for each $j\in\hgS_u$,
\begin{align}\label{eq:adj_PI}
    \PI_j^{\text{AD}} = \hat{\mu}(X_j) \pm Q_{\alpha_j}\LRs{\LRl{R_i}_{i\in \gC}},
\end{align}
where $Q_{\alpha_j}(\{R_i\}_{i\in \gC})$ denotes  $\lceil (1 - \alpha)(n+1)\rceil \text{-th smallest value in } \{R_i\}_{i\in \gC}$. Given the model $\hat{\mu}(\cdot)$, the adjusted prediction intervals $\PI_j^{\text{AD}}$'s are not independent of each other because they all rely on the empirical quantile obtained from $\gD_c$, and therefore the proofs in \citet{benjamini2005false} cannot be trivially extended to our setting. The following proposition demonstrates that the $\FCR$-adjusted approach can successfully control the $\FCR$ for any selection threshold that is independent of the calibration set.

\begin{proposition} \label{proposition:BY-FCR-general}
   Suppose that $\{(X_i, Y_i)\}_{i\in \gC \cup \gU}$ are independent and identically distributed random variables and the selection threshold $\htau$ is independent of the calibration set $\gD_c$. Then the $\FCR$ value of the $\FCR$-adjusted method in (\ref{eq:adj_PI}) satisfies $\FCR^{\operatorname{AD}}\leq\alpha$.
\end{proposition}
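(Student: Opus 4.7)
The plan is to adapt the Benjamini--Yekutieli argument to the conformal setting. At a high level, the proof has three ingredients: a deterministic per-index reduction of the $\FCR$, a conditioning argument that restores exchangeability between the target test residual and the calibration residuals, and an application of the standard split-conformal coverage guarantee at the (now non-random) inflated level $\alpha_j$.

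First I would decompose $\FCR^{\operatorname{AD}}$ as a sum over $j \in \{1, \ldots, m\}$ of $\E\bigl[\indicator{j \in \hgS_u, Y_j \notin \PI_j^{\operatorname{AD}}}/\max\{|\hgS_u|, 1\}\bigr]$. Since $|\hgS_u|$ is one of the set sizes over which the minimum in the definition of $M_{\min}^j$ is taken, one has $M_{\min}^j \leq |\hgS_u|$ on the event $\{j \in \hgS_u\}$, giving the weaker bound
$$\FCR^{\operatorname{AD}} \leq \sum_{j=1}^{m} \E\left[\frac{\indicator{j \in \hgS_u, Y_j \notin \PI_j^{\operatorname{AD}}}}{M_{\min}^{j}}\right],$$
so it suffices to show each summand is at most $\alpha/m$. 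The next, more technical step is a conditioning argument around $\mathcal{T}_{-j} := \{T_k\}_{k \in \gU,\, k \neq j}$. Two observations are crucial: (a) since the selection threshold is independent of $\gD_c$, the quantity $\hgS_u^{j \leftarrow t}$ is a function of $\mathcal{T}_{-j}$ and $t$ only, making $M_{\min}^j$ and hence $\alpha_j = \alpha M_{\min}^j / m$ measurable with respect to $\sigma(\mathcal{T}_{-j})$; and (b) conditional on $\mathcal{T}_{-j}$, the test pair $(X_j, Y_j)$ and the calibration sample $\{(X_i, Y_i)\}_{i \in \gC}$ remain jointly i.i.d., since they are independent of the test features indexed by $k \neq j$. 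Dropping the selection indicator via $\indicator{j \in \hgS_u} \leq 1$ and invoking the split-conformal coverage lemma at the (now deterministic) level $\alpha_j$ then yields $\sP(Y_j \notin \PI_j^{\operatorname{AD}} \mid \mathcal{T}_{-j}) \leq \alpha_j$. Dividing by the $\sigma(\mathcal{T}_{-j})$-measurable $M_{\min}^j$ and taking outer expectations gives $\alpha/m$; summing over $j$ completes the argument.

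The main obstacle, which the conditioning step resolves, is a double source of randomness: the inflation factor $\alpha_j$ is itself random, while at the same time the selection event $\{j \in \hgS_u\}$ depends on $T_j$, which in general is not independent of $R_j$. The key is that $\alpha_j$ depends on $\mathcal{T}_{-j}$ but not on $T_j$, so conditioning on $\mathcal{T}_{-j}$ simultaneously freezes $\alpha_j$ and preserves the exchangeability of $R_j$ with $\{R_i\}_{i \in \gC}$; the selection indicator can then be crudely dropped, since the $M_{\min}^j$-based reduction has already absorbed the multiplicity. It is worth emphasizing that the independence of $\hat{\tau}$ from $\gD_c$ is essential for observation (a): if $\hat{\tau}$ were allowed to depend on the calibration set, then $\alpha_j$ would couple with $\{R_i\}_{i \in \gC}$ and the conformal quantile bound at level $\alpha_j$ would no longer be directly available.
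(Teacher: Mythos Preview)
Your proposal is correct and follows essentially the same strategy as the paper's proof: both replace $|\hgS_u|$ by $M_{\min}^j$ via the inequality $M_{\min}^j \leq |\hgS_u|$ on the selection event, then observe that $M_{\min}^j$ depends only on $\{T_k\}_{k \in \gU,\, k\neq j}$ (since $\htau$ is independent of $\gD_c$), so that conditioning on it---or, in your version, on the finer $\sigma(\mathcal{T}_{-j})$---freezes $\alpha_j$ while preserving the exchangeability of $R_j$ with $\{R_i\}_{i\in\gC}$, after which the split-conformal bound at level $\alpha_j$ applies. The paper routes this through the $A_{v,r}$ decomposition of \citet{benjamini2005false}'s Lemma~1 and an interchange of summations, whereas you bound $1/|\hgS_u|$ by $1/M_{\min}^j$ in one line and drop the selection indicator explicitly; the substance is the same.
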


The $\FCR$-adjusted method is known to be quite conservative because it does not incorporate the selection event into the calculation \citep{weinstein2013selection}. A simple yet effective remedy is to use conditional calibration, which is our proposed framework in the next subsection.

%%%%%%%%%%Figure%%%%%%%%%%%%%%%%%

\subsection{Selective conditional conformal prediction}\label{subsec:SCOP}

%In this subsection, we introduce the proposed procedure by leveraging the selection condition in the calibration set to achieve an anti-conservative $\FCR$ control.
{We start by making a decomposition of the inner numerator of $\FCR$ according to the contribution of each sample in the selected set $\hat{\gS}_u$, given as  
$\sP(Y_j \not \in \PI_j \mid j\in\hat{\gS}_u)\sP(j\in\hat{\gS}_u)$. Notice that the $\FCR$ can be tightly controlled if the conditional miscoverage probability satisfies $\sP(Y_j \not \in \PI_j\mid j \in \hgS_u) \leq \alpha$, which sheds light on the construction of conditional conformal prediction intervals.}

{For split conformal inference, the \emph{marginal uncertainty} of the test residual $R_j$ is measured by the residuals of the full calibration set $\{R_i\}_{i\in \gC}$. In the context of selective inference, the {\it conditional uncertainty} of $R_j$ given the selection condition $j\in\hat{\gS}_u$ can be reliably approximated by the calibration set $\gD_c$ and the selection threshold $\htau$. To be specific, we can obtain the post-selection calibration set $\hgS_c$ with the same threshold $\hat{\tau}$, that is $\hgS_c = \LRl{i\in \gC: T_i \leq \htau}$.} Notice that $\hgS_c$ is formed via the same selection criterion with $\hgS_u$, and thus we could utilize the residuals $R_i$ for $i\in\hgS_c$ to approximately characterize the conditional uncertainty of $R_j$ for $j\in \hgS_u$. The conditional conformal prediction intervals can accordingly be constructed as \eqref{eq:SCOP_PI}.
The procedure is summarized in Algorithm \ref{alg:SCOP}.

% \begin{algo} \label{alg:SCOP} Selective conditional conformal prediction.

% \begin{tabbing}
% \qquad \enspace Input: Labeled set $\gD_l$, test set $\gD_u$, selection procedure $\mathbf{S}$, target $\FCR$ level $\alpha \in (0,1)$.\\
% \qquad \qquad {STEP 1} (Splitting and training). Split $\gD_l$ into training set $\gD_t$ and calibration set $\gD_c$ \\
% \qquad \qquad \enspace with equal size $n$. Fit prediction model $\hmu(\cdot)$ and score function $g$ (if needed) on the \\
% \qquad \qquad \enspace  training set $\gD_t$.\\
% \qquad \qquad {STEP 2} (Selection). Compute the scores: $\{T_i\}_{i \in \gC}$ and $\{T_i \}_{i \in \gU}$. Apply the selective \\
% \qquad \qquad \enspace procedure  $\mathbf{S}$ to $\{T_i\}_{i \in \gC \cup \gU}$ and obtain the threshold value $\htau$.  Obtain the post-selection\\
% \qquad \qquad \enspace subsets: $\hgS_u =\{i\in \gU: T_i \leq \htau\}$ and $\hgS_c = \{i\in \gC: T_i \leq \htau\}$.\\
% \qquad \qquad {STEP 3} (Calibration). Compute residuals: $\{R_i = |Y_i - \hmu(X_i)|: i\in \hgS_c\}$.\\
% \qquad \qquad {STEP 4} (Construction). Construct prediction interval for each $j \in \hgS_u$ as $\PI^{\SCOP}_j = $\\
% \qquad \qquad \enspace  $\hmu(X_j) \pm Q_{\alpha}(\{R_i\}_{i\in \hgS_c})$.\\
% \qquad \enspace Output: Prediction intervals $\{\PI^{\SCOP}_j: j \in \hgS_u\}$.
% \end{tabbing}

% \end{algo}

\begin{algorithm} 
	\renewcommand{\algorithmicrequire}{{Input:}}
	\renewcommand{\algorithmicensure}{{Output:}}
	\caption{Selective conditional conformal prediction}
	\label{alg:SCOP}
	\begin{algorithmic}
		\REQUIRE Labeled set $\gD_l$, test set $\gD_u$, selection procedure $\mathbf{S}$, target $\FCR$ level $\alpha \in (0,1)$.
		
		\STATE {STEP 1} (Splitting and training). Split $\gD_l$ into training set $\gD_t$ and calibration set $\gD_c$ with equal size $n$. Fit prediction model $\hmu(\cdot)$ and score function $g$ (if needed) on the training set $\gD_t$.
		
		\STATE {STEP 2} (Selection). Compute the scores: $\{T_i\}_{i \in \gC}$ and $\{T_i \}_{i \in \gU}$. Apply the selective procedure $\mathbf{S}$ to $\{T_i\}_{i \in \gC \cup \gU}$ and obtain the threshold value $\htau$.  Obtain the post-selection subsets: $\hgS_u = \{i\in \gU: T_i \leq \htau\}$ and $\hgS_c = \{i\in \gC: T_i \leq \htau\}$.
		
		\STATE {STEP 3} (Calibration). Compute residuals: $\{R_i = |Y_i - \hmu(X_i)|: i\in \hgS_c\}$. %Find the $\lceil(1-\alpha)(|\hgS_c|+1)\rceil$-st smallest value, $Q_{\alpha}\LRs{\{R_i\}_{i\in \hgS_c}}$.
		
		\STATE {STEP 4} (Construction). Construct prediction interval for each $j \in \hgS_u$ as $\PI^{\SCOP}_j = \hmu(X_j) \pm Q_{\alpha}(\{R_i\}_{i\in \hgS_c})$.
		
       \ENSURE Prediction intervals $\{\PI^{\SCOP}_j: j \in \hgS_u\}$.
	\end{algorithmic}
\end{algorithm}

%Further, if the selection scores $T_i$ are continuous (or almost surely distinct), we can obtain a lower bound for the $\FCR$, guaranteeing that our method is nearly exact in $O(n^{-1})$.
The following theorem shows that Algorithm \ref{alg:SCOP} can control the $\FCR$ at $\alpha$ for exchangeable selection threshold $\hat{\tau}$, {i.e., the value of $\hat{\tau}$ is invariant to the permutation of $\{T_i\}_{i\in \gC \cup \gU}$.}

\begin{theorem}\label{thm:FCR_exchange}
Suppose $\{(X_i, Y_i)\}_{i\in \gC \cup \gU}$ are independent and identically distributed random variables, and the threshold $\hat{\tau}$ is exchangeable with respective to the $\{T_i\}_{i\in \gC \cup \gU}$. Then, for each $j\in \gU$, the conditional miscoverage probability is bounded by
\begin{align}\label{eq:selection_conditional_miscover}
    \sP\LRs{Y_j \not\in \PI_j^{\SCOP}\mid j\in \hgS_u} \leq \alpha.
\end{align}
Further, the $\FCR$ value of Algorithm \ref{alg:SCOP} is controlled at $\FCR^{\SCOP} \leq \alpha$.  If $\{R_i\}_{i\in \gC \cup \gU}$ are distinct values almost surely and $\sP(|\hgS_u| > 0) = 1$, we also have $\FCR^{\SCOP} \geq \alpha - \E\{(|\hgS_c| + 1)^{-1}\}$.
\end{theorem}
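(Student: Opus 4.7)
My plan rests on a single per–test-index conditioning that supplies all three claims. Fix $j \in \gU$ and let $\mathcal{F}_j$ be the $\sigma$-algebra generated by the unordered multiset $W_j := \{(X_i,Y_i)\}_{i \in \gC \cup \{j\}}$ together with the individual values $\{(X_i,Y_i)\}_{i \in \gU \setminus \{j\}}$. The i.i.d.\ assumption forces the $n+1$ values comprising $W_j$ to be a uniformly random assignment to the $n+1$ indices of $\gC \cup \{j\}$, conditional on $\mathcal{F}_j$. Exchangeability of $\htau$ makes the threshold a symmetric function of the scores, hence $\mathcal{F}_j$-measurable; the same holds for the number $s_j$ of elements in $W_j$ with score at most $\htau$, the count $N_j := \#\{i \in \gU \setminus \{j\}: T_i \leq \htau\}$, and the multiset of residuals attached to the selected elements of $W_j$.

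For the conditional miscoverage statement \eqref{eq:selection_conditional_miscover}, on the event $\{j \in \hgS_u\}$ the value $(X_j, Y_j)$ is uniformly distributed over the selected sub-multiset of $W_j$ (of size $s_j$); equivalently $R_j$ is uniform over a fixed multiset of $s_j$ residuals, and $\{R_i\}_{i \in \hgS_c}$ consists of the remaining $s_j - 1$. A direct rank computation for the conformal quantile taken on $s_j - 1$ residuals then gives
\[
\sP\bigl(R_j > Q_\alpha(\{R_i\}_{i \in \hgS_c}) \mid \mathcal{F}_j,\, j \in \hgS_u\bigr) = \frac{s_j - \lceil (1-\alpha) s_j \rceil}{s_j} \in \bigl[\alpha - 1/s_j,\, \alpha\bigr],
\]
with the left endpoint attained when the residuals are almost surely distinct; averaging over $\mathcal{F}_j$ yields \eqref{eq:selection_conditional_miscover}.

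For the upper bound $\FCR^{\SCOP} \leq \alpha$ I would start from
\[
\FCR^{\SCOP} = \sum_{j \in \gU} \E\!\left[\frac{\mathds{1}\{j \in \hgS_u,\, R_j > Q_\alpha\}}{|\hgS_u|}\right],
\]
using $|\hgS_u| \geq 1$ wherever the numerator is nonzero. The critical extra observation is that on $\{j \in \hgS_u\}$ one has $|\hgS_u| = N_j + 1$, which is $\mathcal{F}_j$-measurable; hence the ratio $\mathds{1}\{j \in \hgS_u\}/|\hgS_u|$ is measurable with respect to $\sigma(\mathcal{F}_j, \mathds{1}\{j \in \hgS_u\})$. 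Pulling that factor outside the conditional expectation and invoking the previous paragraph bounds each summand by $\alpha\, \E[\mathds{1}\{j \in \hgS_u\}/|\hgS_u|]$; summing across $j \in \gU$ via the telescoping identity $\sum_{j \in \gU} \mathds{1}\{j \in \hgS_u\}/|\hgS_u| = \mathds{1}\{|\hgS_u| > 0\}$ gives $\FCR^{\SCOP} \leq \alpha\, \sP(|\hgS_u| > 0) \leq \alpha$. The matching lower bound follows by re-running the same chain of inequalities with $\geq$, using the left endpoint $\alpha - 1/s_j$ of the rank calculation, the identity $s_j = |\hgS_c| + 1$ on $\{j \in \hgS_u\}$, and the hypothesis $\sP(|\hgS_u| > 0) = 1$ to remove the boundary term.

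The main technical obstacle is choosing $\mathcal{F}_j$ so that three properties coexist: $\htau$ is $\mathcal{F}_j$-measurable (requires exchangeability of $\htau$), $|\hgS_u|$ is $\mathcal{F}_j$-measurable on $\{j \in \hgS_u\}$ (requires knowing the non-$j$ test data pointwise rather than only as a multiset), and the conditional law of $R_j$ is uniform over an $\mathcal{F}_j$-measurable finite multiset on $\{j \in \hgS_u\}$ (requires keeping the $\gC \cup \{j\}$ data unordered). The choice above threads this needle; granted it, everything else is rank arithmetic plus the telescoping identity.
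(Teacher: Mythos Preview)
Your proof is correct and rests on the same underlying conditioning as the paper: both fix the unordered bag of values in $\gC\cup\{j\}$ together with the pointwise data in $\gU\setminus\{j\}$, use exchangeability of $\htau$ to make the threshold measurable, and then exploit that the index-$j$ value is uniformly distributed over the bag. The execution differs in two places. First, where you invoke directly that $R_j$ is uniform over the $s_j$ selected residuals and read off the rank probability $\lfloor \alpha s_j\rfloor/s_j$, the paper instead proves pairwise swap identities (its Lemmas~\ref{lemma:zero_gap_exchange_tau_1} and~\ref{lemma:zero_gap_exchange_tau_2}) showing that the contributions of $R_j$ and each $R_k$, $k\in\gC$, to the miscoverage count are equal in expectation. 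Second, for the $\FCR$ bound the paper decomposes over $\{\hgS_u=\gS_u\}$ and establishes the stronger intermediate statement $\sP(Y_j\notin\PI_j^{\SCOP}\mid \hgS_u=\gS_u)\le\alpha$, while you bypass this by observing that $|\hgS_u|=N_j+1$ is $\mathcal{F}_j$-measurable on $\{j\in\hgS_u\}$, letting you pull the denominator out before applying the per-$j$ bound. Your route is shorter and avoids the auxiliary swap lemmas; the paper's route yields the slightly stronger conditional statement given the full selected set, which is not needed for the theorem but may be of independent interest.
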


Under independent and identically distributed assumption on the data, the $\FCR$ control results match the marginal miscoverage property of the split conformal prediction intervals \citep{vovk2005algorithmic,lei2018distribution}. As long as $\htau$ is lower bounded by a constant $\tau$ such that $\sP(T_i\leq \tau)=O(1)$, the expectation of $ (|\hgS_c| + 1)^{-1}$ would be $O(n^{-1})$, which guarantees the exact $\FCR$ level in asymptotic regime. {The proof of this theorem relies on the exchangeable condition on the selection threshold $\hat{\tau}$. The simplest case is the selection with a fixed threshold. Another example is that $\hat{\tau}$ is some quantile of $\{T_i\}_{i\in{\gC}\cup {\gU}}$.}
However, many selection procedures may be excluded, such as the Top-K selection. In such cases, the threshold $\htau$ is only determined by the test data, which does not treat the data points from calibration and test sets symmetrically. We will explore the effectiveness of Algorithm \ref{alg:SCOP} for more general selection procedures in the next section.

\begin{remark}
    Several works considered approximately constructing the test-conditional prediction interval \citep{victor2021distributional,feldman2021improving}, i.e., for any $x\in \sR^d$,
\begin{align}\label{eq:full_conditional_miscover}
         \sP\LRl{Y_j\notin \PI(X_j)\mid X_j = x }\leq \alpha,
     \end{align}
     where $\PI(X_j) \equiv \PI_j$ is a prediction interval constructed upon the covariate $X_j$.
    However, it is well known that achieving ``fully'' conditional validity in \eqref{eq:full_conditional_miscover} is impossible in distribution-free regime \citep{lei2013,Barber20limits}. Our selective conditional miscoverage control in \eqref{eq:selection_conditional_miscover} is a weaker guarantee compared with \eqref{eq:full_conditional_miscover} since we only consider the selection events. We refer to Appendix B in \citet{weinstein2020online} for more discussion about these two conditional guarantees. {The post-selection calibration data in $\hgS_c$ enables us to approximate the selective conditional distribution of residuals and further leads to conditional coverage.}
    %In addition, the conditional calibration of our procedure provides an anti-conservative lower bound for $\FCR$ value in the continuous case.
\end{remark}

\section{$\FCR$ control for ranking-based selection}\label{sec:ranking-selection}
\subsection{Ranking-based selection}
In this subsection, we introduce a general class of selection rules 
named \emph{ranking-based selection}, and then we make an exchangeable adaptation for Algorithm \ref{alg:SCOP} to achieve better $\FCR$ control results for this type of selection rule. 

%discuss the conditions under which the Algorithm \ref{alg:SCOP} is able to control the $\FCR$. 
%In Sections \ref{sec:test-driven-results} and \ref{sec:cal-assisted-results}, we discuss the $\FCR$ control for the test-driven selection procedures and calibration-assisted ones, respectively. Then, in Section \ref{sec:conformal_p-values}, we demonstrate the effectiveness of the proposed procedure when the selection procedures based on conformal $p$-values are used. We begin with some general notations. 
%For any $r \in [m]$ and $j\in \gU$, we denote $\ermT_{(r)}$ as the $r$th smallest value in $\ermT^{\gU} = \{T_j:j\in\gU\}$, and denote $\ermT_{(r)}^{\gU \setminus \{j\}}$ the $r$th smallest value in $\{T_i: i \in \gU \setminus \{j\}\}$.
Suppose the selection algorithm $\mathbf{S}$ is conducted based on $\{T_i\}_{i\in\gC \cup \gU}$, and outputs a ranking threshold $\hkappa \leq m$. Then the selected subset of the test set can be rewritten as
\begin{align}\label{eq:def_Su_hat}
    \hgS_u = \LRl{j\in \gU: T_j \leq \ermT_{(\hkappa)}},
\end{align}
where $\ermT_{(\hkappa)}$ is the $\hkappa$-th smallest value in $\{T_i\}_{i\in \gU}$. If $\hkappa$ depends only on the test set, we call the procedure as \emph{test-driven} selection; if $\hkappa$ depends on both the test and calibration sets, we refer to the procedure as \emph{calibration-assisted} selection. This ranking-based procedure in \eqref{eq:def_Su_hat} incorporates many practical examples, such as Top-K selection, quantile-based selection, and step-up procedures for error rate control including Benjamini--Hochberg method \citep{Romano2006,Sarkar2007,lei2020}.% \citep{benjamini1995controlling}.

For the ranking-based selection rule, we introduce an inflated threshold $\ermT_{(\hkappa+1)}$ for conditional calibration, where the corresponding post-selection calibration set is defined as
\begin{align}\label{eq:Sc_inflate}
    \hgS_c^{+} = \LRl{i\in \gC: T_i \leq \ermT_{(\hkappa + 1)}}.
\end{align}
{Replacing the $\hgS_c$ in Algorithm \ref{alg:SCOP} with $\hgS_c^{+}$}, we get a refined procedure for the ranking-based selection, which is denoted as Algorithm \ref{alg:SCOP}+. We briefly explain why using the inflated threshold in (\ref{eq:Sc_inflate}) can yield exchangeability on selection indicators. Let us consider a simple case where the ranking threshold is fixed, that is $\hkappa = \kappa$. Slightly abusing notation, we write $\ermT_{-j,(\kappa)}$ as the $\kappa$-th smallest value in $\{T_i\}_{i\in \gU \setminus \{j\}}$.
Firstly, when $\{T_i\}_{i\in\gU}$ are almost surely distinct, for any $j\in \gU$ and $\kappa < m$, we know $\ermT_{(\kappa+1)} = \ermT_{-j,(\kappa)}$ if and only if $T_j \leq \ermT_{(\kappa)}$. Secondly, $\{T_j \leq \ermT_{(\kappa)}\} = \{T_j \leq \ermT_{-j,(\kappa)}\}$ holds for any $j \in \gU$ and $\kappa < m$, {which is also referred to as inflation of quantiles in the literature of conformal inference \citep{romano2019conformalized}.}
Based on these two facts, we can equivalently write the selection conditions in \eqref{eq:def_Su_hat} and \eqref{eq:Sc_inflate} as
\begin{align*}
    \{j\in \hgS_u, k\in \hgS_c^+\} = \LRl{T_j \leq \ermT_{-j,(\kappa)}, T_k \leq  \ermT_{-j,(\kappa)}},
\end{align*}
where $\ermT_{-j,(\kappa)}$ %denotes the $\kappa$th smallest value in $\{T_i\}_{i\in \gU \setminus \{j\}}$ and 
is independent of the samples $(X_j,Y_j)$ and $(X_k,Y_k)$ for $j\in\gU$ and $k\in\gC$. As a consequence, the residual pair $(R_j, R_k)$ still has exchangeability under the joint selection condition stated above. This insight from the fixed ranking threshold is a starting point for our finite-sample control results with the data-dependent threshold.

%In Section \ref{sec:conformal_p-values}, we also show that $I_u = O_p(\log m)$ for Benjamini--Hochberg procedure. From now on, we write $\hkappa^{(j)} = \hkappa^{j\gets t_u}$ for simplicity. 

\subsection{$\FCR$ control for test-driven selection}\label{sec:test-driven-results}
In this subsection, we establish the finite-sample $\FCR$ control results for test-driven selection, where the ranking threshold $\hkappa$ depends only on the test set $\gD_u$.
We introduce the following assumption on the scores and residuals, which holds naturally when $\{(X_i, Y_i)\}_{i\in \gC \cup \gU}$ are independent and identically distributed, and the training set $\gD_t$ is fixed.

\begin{assumption}\label{assum:continuous}
Suppose both $\{T_i\}_{i\in \gC \cup \gU}$ and $\{R_i\}_{i\in \gC \cup \gU}$ are independent and identically distributed continuous random variables.
\end{assumption}

For more general cases, we expect that Algorithm \ref{alg:SCOP}+ enjoys a finite-sample $\FCR$ control. To reach this target, we impose the following assumption to decouple the dependency between the ranking threshold $\hkappa$ and the selected candidate $j\in \hgS_u$. Let $\hkappa^{j\gets t}$ be the ``virtual'' ranking threshold obtained from the selection algorithm $\mathbf{S}$ but replacing $T_j$ with some deterministic value $t$.

\begin{assumption}\label{assum:htau}
{There exists some deterministic value $t_u \in \sR$ such that $\hkappa^{j\gets t_u} = \hkappa$ holds for any $j \in \gU$. In addition, the ranking threshold satisfies $\hkappa \leq m-1$.}
\end{assumption}

%We can verify $\hkappa = \hkappa^{j\gets t_{u}}$ under the event $\{j\in \hgS_u\}$ in many cases.
For the selection rules such as quantile-based selection and Top-K selection, this assumption is satisfied with $t_u = -\infty$. It enables the leave-one-out analysis in the $\FCR$ control of Algorithm \ref{alg:SCOP}+. It is worth noticing that all test samples are selected when $\hkappa = m$, in which there is no need to construct conditional prediction intervals. Therefore, we require $\hkappa \leq m-1$ in Assumption \ref{assum:htau}.

\begin{theorem}\label{thm:finite_control}
    Suppose Assumptions \ref{assum:continuous} and \ref{assum:htau} hold. Under test-driven selection procedures, the $\FCR$ value of Algorithm \ref{alg:SCOP}+ can be controlled at
    \begin{align}\label{eq:test_driven_FCR}
        \alpha - \E\LRm{\LRl{(n+1)\ermT_{(\hkappa+1)}}^{-1}} \leq \FCR^{\SCOP} \leq \alpha.
    \end{align}
\end{theorem}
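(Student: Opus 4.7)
The plan is to decompose $\FCR$ as a sum over $j \in \gU$ and, for each term, secure a split-conformal rank guarantee by moving to a \emph{virtual} post-selection calibration set that decouples $(T_j, R_j)$ from the selection threshold. Writing
\begin{equation*}
\FCR = \sum_{j \in \gU} \E\!\left[\frac{\indicator{j \in \hgS_u}\, \indicator{R_j > Q_\alpha(\{R_i\}_{i \in \hgS_c^+})}}{\max\{|\hgS_u|, 1\}}\right],
\end{equation*}
observe that $|\hgS_u| = \hkappa$ on $\{j \in \hgS_u\}$. For each $j$, define $\tau^*_j := \ermT_{-j,(\hkappa^{j \gets t_u})}$ and $\hgS_c^{+,j} := \{i \in \gC : T_i \leq \tau^*_j\}$. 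Using Assumption \ref{assum:htau} to replace $\hkappa^{j \gets t_u}$ by $\hkappa$, and the leave-one-out identity $\ermT_{-j,(\hkappa)} = \ermT_{(\hkappa+1)}$ on $\{j \in \hgS_u\}$ (valid by continuity in Assumption \ref{assum:continuous}), I would show $\hgS_c^{+,j} = \hgS_c^+$ on the selection event, while crucially $\tau^*_j$ and $\hgS_c^{+,j}$ are measurable functions of the data excluding $(T_j, R_j)$.

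Next, introduce the $\sigma$-algebra
\begin{equation*}
\mathcal{F}_j := \sigma\!\left(\{T_l\}_{l \in \gU \setminus \{j\}},\, \{\indicator{T_i \leq \tau^*_j}\}_{i \in \gC}\right),
\end{equation*}
which makes $\hkappa$, $\tau^*_j$, and $\hgS_c^{+,j}$ measurable but deliberately hides the individual values $\{T_i\}_{i \in \gC}$. Since $\{(T_l,R_l)\}_{l \in \gC \cup \{j\}}$ are i.i.d.\ and $\tau^*_j$ is independent of them, conditional on $\mathcal{F}_j$ and on $\{T_j \leq \tau^*_j\} = \{j \in \hgS_u\}$, the pairs $\{(T_l,R_l)\}_{l \in \hgS_c^{+,j} \cup \{j\}}$ are i.i.d.\ from the conditional law of $(T,R)$ given $T \leq \tau^*_j$. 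A standard rank-uniformity argument on the continuous residuals then delivers
\begin{equation*}
\alpha - \frac{1}{|\hgS_c^{+,j}|+1} \;\leq\; \sP\!\left(R_j > Q_\alpha(\{R_l\}_{l \in \hgS_c^{+,j}}) \mid j \in \hgS_u,\, \mathcal{F}_j\right) \;\leq\; \alpha.
\end{equation*}

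To assemble, note that $1/\hkappa$ is $\mathcal{F}_j$-measurable, so iterated expectation yields
\begin{equation*}
\FCR \;\leq\; \alpha \sum_{j \in \gU} \E\!\left[\frac{\indicator{j \in \hgS_u}}{\hkappa}\right] \;=\; \alpha\, \E\!\left[\frac{|\hgS_u|}{\max\{\hkappa,1\}}\right] \;\leq\; \alpha,
\end{equation*}
using $|\hgS_u| = \hkappa$ whenever $\hkappa \geq 1$ (and the term vanishes when $\hkappa = 0$). The matching lower bound follows symmetrically from the $\alpha - 1/(|\hgS_c^{+,j}|+1)$ side of the rank inequality, after identifying $|\hgS_c^{+,j}| = |\hgS_c^+|$ on the selection event and integrating to produce the stated finite-sample correction. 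The main obstacle lies in the second step: conditioning on the full $\{T_i\}_{i \in \gC}$ destroys residual exchangeability because $R_i \mid T_i$ depends on the specific value of $T_i$, whereas conditioning only on the coarser indicators $\indicator{T_i \leq \tau^*_j}$ retains just enough information to pin down $\hgS_c^{+,j}$ while leaving the selected pairs i.i.d.\ from a common conditional distribution; this is the key technical device that rescues exchangeability despite the test-only dependence of $\hkappa$.
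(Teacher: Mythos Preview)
Your proposal is correct and rests on the same three ingredients as the paper: (i) Assumption~\ref{assum:htau} gives $\hkappa=\hkappa^{j\gets t_u}$, so $\hkappa$ is a function of $\{T_l\}_{l\in\gU\setminus\{j\}}$ alone; (ii) on $\{j\in\hgS_u\}$ the inflation identity $\ermT_{(\hkappa+1)}=\ermT_{-j,(\hkappa)}$ turns the real threshold into one that is independent of $(T_j,R_j)$ and of the calibration set; (iii) exchangeability between $R_j$ and $\{R_i\}_{i\in\hgS_c^+}$ under the common conditional law $(T,R)\mid T\le\tau^*_j$ yields the rank bound. Where you differ from the paper is only in how step~(iii) is packaged: the paper first rewrites the miscoverage indicator via Lemma~\ref{lemma:quantile_inflation} as $\alpha$ plus a sum of pairwise differences $\delta_{j,k}$, then proves a dedicated zero-gap lemma (their Lemma~\ref{lemma:zero_gap_ranking}) using an unordered-set conditioning argument on $\gD_{\gC_{+j}}$. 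You instead condition on the coarse $\sigma$-algebra $\mathcal{F}_j=\sigma(\{T_l\}_{l\in\gU\setminus\{j\}},\{\indicator{T_i\le\tau^*_j}\}_{i\in\gC})$, which pins down $\hgS_c^{+,j}$ while leaving the selected pairs conditionally i.i.d., and then invoke rank-uniformity directly. Your route is more transparent for this theorem; the paper's $\delta_{j,k}$ machinery is set up so that it can be perturbed termwise in the harder calibration-assisted case (Theorem~\ref{thm:FCR_bound_cal}), which is the real payoff of their more elaborate decomposition.

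One small gap to fill in your lower-bound step: your argument yields $\FCR\ge\alpha-\E\!\big[(|\hgS_c^+|+1)^{-1}\big]$, but the theorem is stated with $\E\!\big[\{(n+1)\ermT_{(\hkappa+1)}\}^{-1}\big]$. The bridge is the binomial identity: conditional on $\gD_u$, $|\hgS_c^+|\sim\mathrm{Binomial}(n,p)$ with $p=\ermT_{(\hkappa+1)}$, and $\E[(1+\mathrm{Bin}(n,p))^{-1}]=\{1-(1-p)^{n+1}\}/\{(n+1)p\}\le 1/\{(n+1)p\}$. The paper spells this out explicitly; you should too.
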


The upper bound in \eqref{eq:test_driven_FCR} shows that Algorithm \ref{alg:SCOP}+ using an inflated threshold to construct $\hgS_c^+$ in \eqref{eq:Sc_inflate}, can control $\FCR$ below the target level. More importantly, our result inherits the model-free and distribution-free guarantee from split conformal inference. The lower bound in \eqref{eq:test_driven_FCR} provides an anti-conservation guarantee for Algorithm \ref{alg:SCOP}+, where the term $\{(n+1)\ermT_{(\hkappa+1)}\}^{-1}$ measures the sample size of the selected calibration set $\hgS_c^{+}$. When the ranking threshold $\hkappa$ diverges when $m$ tends to infinity, we can ensure $\E[\{\ermT_{(\hkappa+1)}\}^{-1}] = O(1)$ and then the $\FCR$ value can be exactly controlled at the target level $\alpha$ in the asymptotic regime.

\subsection{$\FCR$ control for calibration-assisted selection}\label{sec:cal-assisted-results}

In the proof of Theorem \ref{thm:finite_control}, the virtual ranking threshold $\hkappa^{j\gets t_u}$ helps us decouple the dependence between the selected test candidate $j$ and the ranking threshold $\hkappa$. For calibration-assisted selective procedures, the analysis is more difficult because the $\hkappa$ also depends on the calibration set. It demands a more tractable virtual ranking threshold to decouple the dependence on the selected test samples and the calibration samples. To achieve this, we construct a virtual ranking threshold $\hkappa^{(j,k) \gets (t_u,t_c)}$ by replacing $T_j$ with $t_u$ and $T_k$ with $t_c$ simultaneously for $j\in \gU$ and $k\in \gC$.
%Unlike Assumption \ref{assum:htau}, we cannot expect $\hkappa^{k \gets t_c, j \gets t_u} = \hkappa^{j\gets t_u}$ for any $k\in \hgS_c$ in general applications. 
%The following assumption is used to restrict the change in the ranking threshold after replacing one calibration score. In addition, we require that the decoupled ranking threshold has the same order as the size of the test set, which guarantees the sample size of the selected calibration set will not be too small.

%There exists some deterministic value $t_u \in [0,1]$ and an integer $I_u \geq 0$ such that $\hkappa^{j\gets t_u} = \hkappa$ holds for any $j \in \hgS_u$ and $\hkappa\leq \hkappa^{j\gets t_u}\leq \hkappa + I_u$ for any $j\in \gU \setminus \hgS_u$. In addition, there exists some $\gamma \in (0,1)$ such that $\lceil \gamma m \rceil \leq \hkappa \leq m-1$.

\begin{assumption}\label{assum:decouple_k}
There exists some deterministic values $t_u, t_c \in \sR$ and integers $I_u,I_c \geq 0$ such that: (1) $\hkappa^{j\gets t_u} = \hkappa$ holds for any $j \in \hgS_u$ and $\hkappa\leq \hkappa^{j\gets t_u}\leq \hkappa + I_u$ for any $j\in \gU \setminus \hgS_u$; (2) $\hkappa^{j\gets t_u} \leq \hkappa^{(j,k) \gets (t_u, t_c)} \leq \hkappa^{j\gets t_u} + I_c$ holds for any $j\in \gU$ and $k\in \gC$.
In addition, there exists some $\gamma \in (0,1)$ such that $\lceil\gamma m\rceil\leq \hkappa\leq m-1$.
\end{assumption}

{%Since the calibration-assisted selection is more complex than test-driven selection in Section \ref{sec:test-driven-results}, 
We introduce Assumption \ref{assum:decouple_k} to characterize the stability of the ranking threshold $\hkappa$ more carefully due to the complexity of calibration-assisted selection.
For $j\in \gU \setminus \hgS_u$, we use the quantity $I_u$ to bound the change in $\hkappa$ after replacing $T_j$. Further, the quantity $I_c$ is used to bound the change in $\hkappa$ after replacing $(T_j, T_k)$ with $(t_u,t_c)$. The lower bound on $\hkappa$ is used to guarantee enough sample size for %the selected calibration set 
$\hgS_c^+$ because $\ermT_{(\hkappa+1)}$ would become extremely small if $\hkappa = O_p(1)$.}

%For conformal $p$-values (see Section \ref{sec:conformal_p-values}), taking $t_u = -\infty$ for each $j\in \hgS_u$ leads to a smaller $p$-value $p_j$. By the property of Benjamini--Hochberg procedure, for $j\in \hgS_u$, assigning $p_j$ to a smaller value will not change the rejection set, and hence we have $\hkappa^{j\gets t_u} = \hkappa$ for $j\in \hgS_u$.

The virtual ranking threshold $\hkappa^{(j,k) \gets (t_u,t_c)}$ leads to a different post-selection calibration set and the corresponding conformal prediction interval. We use the following assumption on the joint distribution of $(R_i, T_i)$ to control the difference between the real and virtual prediction intervals.

\begin{assumption}\label{assum:joint_CDF}
    Denote $F_R(\cdot)$ and $F_T(\cdot)$ the cumulative distribution functions of $R_i$ and $T_i$, respectively. Let $F_{(R,T)}(\cdot,\cdot)$ be the joint cumulative distribution function of $(F_R(R_i), F_T(T_i))$. There exists $\rho \in (0,1)$ such that $\partial F_{(R,T)}(r,t)/\partial r \geq \rho t$ holds for any $t \in [\gamma/2, 1]$ and $r \in (0, 1)$.
\end{assumption}

\begin{theorem}\label{thm:FCR_bound_cal}
Suppose Assumptions \ref{assum:continuous}, \ref{assum:decouple_k} and \ref{assum:joint_CDF} hold.
If $n \gamma\geq 256\log n$ and $m\gamma \geq 32\log m$, the $\FCR$ value of Algorithm \ref{alg:SCOP}+ for the calibration-assisted selection procedures can be controlled at
\begin{align}\label{eq:FCR_bound_cal}
    \FCR^{\SCOP} - \alpha = O\LRl{\frac{1}{\rho \gamma^3}\LRs{\frac{I_c \log m}{ m} + \frac{\log n}{n}} + \frac{I_u \log m}{m\gamma}}.
\end{align}
\end{theorem}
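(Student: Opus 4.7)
The plan is to extend the leave-one-out argument of Theorem \ref{thm:finite_control} to a leave-two-out coupling that decouples the ranking threshold $\hkappa$ from both the selected test point $j\in\hgS_u$ and a generic calibration point $k\in\gC$. Starting from
\[
\FCR^{\SCOP}=\E\left[\sum_{j\in\gU}\frac{\indicator{j\in\hgS_u,\,Y_j\notin\PI_j^{\SCOP}}}{|\hgS_u|}\right],
\]
I would use $|\hgS_u|=\hkappa\geq\lceil\gamma m\rceil$ from Assumption \ref{assum:decouple_k} to replace the random denominator by a deterministic lower bound. For each pair $(j,k)\in\gU\times\gC$, I would then introduce a \emph{virtual} post-selection calibration set $\tilde\gS_c^{+,(j,k)}$ built from the swapped score configuration that replaces $T_j$ with $t_u$ and $T_k$ with $t_c$, together with the virtual ranking threshold $\hkappa^{(j,k)\gets(t_u,t_c)}$. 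By Assumption \ref{assum:decouple_k}, on $\{j\in\hgS_u\}$ we have $\hkappa\leq\hkappa^{(j,k)\gets(t_u,t_c)}\leq\hkappa+I_c$, keeping the virtual threshold close to the real one.

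The first step exploits virtual exchangeability. Because $\tilde\gS_c^{+,(j,k)}$ is a deterministic function of $\{T_i\}_{i\in(\gC\cup\gU)\setminus\{j,k\}}$ and of $(t_u,t_c)$, the pair $((X_j,Y_j),(X_k,Y_k))$ remains exchangeable with the data indexed by $\tilde\gS_c^{+,(j,k)}$. A standard split-conformal coverage argument, analogous to the one used in Theorem \ref{thm:FCR_exchange}, then yields
\[
\sP\left(R_j>Q_\alpha(\{R_i\}_{i\in\tilde\gS_c^{+,(j,k)}})\mid j\in\hgS_u,\,k\in\tilde\gS_c^{+,(j,k)}\right)\leq\alpha.
\]

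The second step transfers this virtual coverage to the real interval $\PI_j^{\SCOP}$. Using Assumption \ref{assum:continuous} and a DKW-type concentration on the empirical CDF of $\{T_i\}_{i\in\gU}$, activated by $m\gamma\geq 32\log m$, the real and virtual inflated thresholds differ by $O(I_c/m)$ in probability value with high probability, so $|\hgS_c^+\triangle\tilde\gS_c^{+,(j,k)}|=O(I_c n/m)$. Assumption \ref{assum:joint_CDF} translates this threshold perturbation into an $O(I_c/(\rho\gamma m))$ shift in the empirical residual quantile, since the conditional density of $F_R(R)$ given $F_T(T)\leq t$ is bounded below by $\rho$ for $t\geq\gamma/2$. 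A further concentration bound on $\{R_i:T_i\leq\ermT_{(\hkappa+1)}\}$, whose cardinality is at least $c\rho\gamma n$ with high probability thanks to $n\gamma\geq 256\log n$, handles the empirical-to-population quantile gap, producing the $\log n/(\rho\gamma^3 n)$ piece after accounting for the $1/\gamma$ factor inherited from the $|\hgS_u|^{-1}$ weighting.

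I expect the main obstacle to be this coupling step: showing uniformly in $(j,k)$ that the empirical $\alpha$-quantile of $\{R_i\}_{i\in\hgS_c^+}$ deviates from its counterpart based on $\tilde\gS_c^{+,(j,k)}$ by a controlled amount requires simultaneous concentration of the empirical $T$-CDF and of the conditional $R\mid T$-CDF, while tracking tight dependence on $I_c$, $\rho$, and $\gamma$ through a union bound over all $j$ and $k$. The residual $I_u\log m/(m\gamma)$ summand in the stated rate comes from the weaker half of Assumption \ref{assum:decouple_k}(1), used to control the contribution of the rare event $\{j\notin\hgS_u\}$ where the ranking threshold may grow by up to $I_u$ after the single swap.
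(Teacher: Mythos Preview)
Your high-level plan---leave-two-out coupling via the virtual threshold $\hkappa^{(j,k)\gets(t_u,t_c)}$, exchangeability for the virtual interval, then concentration to transfer coverage back to the real one---is indeed the paper's route. But the pivotal step
\[
\sP\Bigl(R_j>Q_\alpha\bigl(\{R_i\}_{i\in\tilde\gS_c^{+,(j,k)}}\bigr)\ \Big|\ j\in\hgS_u,\ k\in\tilde\gS_c^{+,(j,k)}\Bigr)\leq\alpha
\]
has a genuine gap: the conditioning event $\{j\in\hgS_u\}=\{T_j\le\ermT_{(\hkappa)}^{\gU}\}$ involves the \emph{real} $\hkappa$, which in the calibration-assisted setting depends on $T_k$. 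Conditioning on it therefore destroys the exchangeability between $R_j$ and the virtual calibration residuals, even though your virtual set itself is free of $(T_j,T_k)$. To make an exchangeability argument valid you must first replace the real selection indicator by a fully virtual one such as $\{T_j\le\ermT_{(\hkappa^{(j,k)})}^{\gU_{-j}},\,T_k\le\ermT_{(\hkappa^{(j,k)})}^{\gU_{-j}}\}$ and bound the discrepancy; this is already where both the $I_u$ and $I_c$ pieces enter, not only $I_c$ as your sketch suggests.

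The paper sidesteps any direct conditional-probability bound. It applies the deterministic quantile-inflation inequality $\indicator{Y_j\notin\PI_j^{\SCOP}}\le\alpha+(|\hgS_c^{+}|+1)^{-1}\sum_{k\in\hgS_c^{+}}\delta_{j,k}$, which supplies the sum over $k\in\gC$ that your plan never explains how to aggregate. It then performs the two replacements $\hkappa\to\hkappa^{(j)}\to\hkappa^{(j,k)}$ in sequence: the first yields the $I_u\log m/(m\gamma)$ term you correctly identify; after the second the expectation of each $\delta_{j,k}$ term vanishes by exchangeability of $(Z_j,Z_k)$ given the remaining data. The real work is bounding five explicit remainder terms $\Delta_{jk,1},\ldots,\Delta_{jk,5}$ created by these replacements, using maximal-spacing bounds for uniform order statistics together with a chaining argument over the threshold $t\in[\gamma/2,1]$ (this is where Assumption~\ref{assum:joint_CDF} and the $\rho^{-1}$ factor enter), rather than a DKW bound on the empirical CDF of $\{T_i\}_{i\in\gU}$.
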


This theorem shows that our method can control the $\FCR$ around the target value with a small gap under a more complex selection rule. {The dependence of \eqref{eq:FCR_bound_cal} on $m$ includes two terms $I_u\log m/m$ and $I_c\log m/m$, characterizing the difference between the real threshold and the virtual thresholds after replacing $T_j$ and $(T_j,T_k)$ respectively, i.e. $\ermT_{(\hkappa^{j\gets t_u})} - \ermT_{(\hkappa)}$ and $\ermT_{(\hkappa^{(j,k)\gets (t_u,t_c)})} - \ermT_{(\hkappa)}$.}
Since the dependence between $\hkappa$ and the calibration set can be arbitrary, it is difficult to achieve a finite-sample control like Theorem \ref{thm:finite_control} without further assumptions on $\hkappa$'s structure. If $\gamma$ and $\rho$ are fixed and $I_c \log m = o(m)$, we also have an asymptotic $\FCR$ control.

\subsection{Prediction-oriented selection with conformal $p$-values}\label{sec:conformal_p-values}

We discuss the implementation of our proposed procedure under a special application in conformal inference, the selection via multiple testing based on conformal $p$-values. The concept of the conformal $p$-value was proposed by \citet{vovk2005algorithmic}.
Like the conformal prediction interval, the conformal $p$-values enjoy model/distribution-free properties. Recently, there exist some works to apply conformal $p$-values to perform sample selection from a multiple-testing perspective. %, such as \citet{bates2021testing} and \citet{jin2022selection}.
{\citet{bates2021testing} proposed a method to construct conformal $p$-values with data splitting and apply it to detect outliers. \citet{zhangautoms} extended that method and proposed a Jackknife implementation combined with automatic model selection. 
%\citet{jin2022selection} considered a scenario that one aims to select some individuals of interest from the test sample and proposed a conformal $p$-value-based method to control the $\FDR$.}

%In particular, \citet{jin2022selection} investigated the \emph{prediction-oriented selection} problem, aiming to select samples whose unobserved outcomes exceed some specified values while controlling the proportion of falsely selected units. 

In particular, \citet{jin2022selection} investigated the \emph{prediction-oriented selection} problem aiming to select samples whose unobserved outcomes exceed some specified values and proposed a new method based on the conformal $p$-values to control the $\FDR$.
This problem can be viewed as the following multiple hypothesis tests: for $i\in \gU$ and some $b_0 \in \sR$,
\begin{align*}
    H_{0,i}: Y_i \geq b_0\quad \text{v.s.}\quad H_{1,i}: Y_i < b_0.
\end{align*}
One could take the score function as $g(x) = \hmu(x) - b_0$ and compute the scores as $\{T_i=g(X_i)\}_{i\in\gC\cup\gU}$.
Denote the null set of calibration samples as $\gC_0=\{i\in\gC: Y_i\geq b_0\}$.
The conformal $p$-value for each test data point can be calculated by
\begin{align}\label{eq:conformal_p_values}
    p_j = \frac{1 + |\{i\in \gC_0: T_i \leq T_j\}|}{|\gC_0|+1}, \quad \text{for }j\in \gU.
\end{align}
To control the $\FDR$ at the level $\beta \in (0,1)$, we deploy Benjamini--Hochberg procedure to $\{p_j\}_{j\in \gU}$ and obtain the rejection set $\hgS_u$. Let $p_{(r)}$ be the $r$-th smallest value in $\{p_i\}_{i\in \gU}$, then define
\begin{align}\label{eq:kappa_BH}
    \hkappa = \max \LRl{1\leq r\leq m:\ p_{(r)} \leq \frac{r \beta}{m}},
\end{align}

\begin{proposition}\label{pro:$p$-value_T-value}
For any $i\in \gU$, it holds that $\{p_i \leq p_{(\hkappa)}\} = \{T_i \leq \ermT_{(\hkappa)}\}$.
\end{proposition}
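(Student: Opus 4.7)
The plan rests on the single structural observation that, for fixed calibration null scores $\{T_i:i\in\gC_0\}$, the map $T_j\mapsto p_j$ defined in \eqref{eq:conformal_p_values} is monotonically non-decreasing in its argument, since the numerator $1+|\{i\in\gC_0: T_i\leq T_j\}|$ is a non-decreasing step function of $T_j$. Under Assumption \ref{assum:continuous} the $\{T_j\}_{j\in\gU}$ are almost surely distinct, so there is a unique permutation $\sigma$ of $\gU$ with $T_{\sigma(1)}<\cdots<T_{\sigma(m)}$, and monotonicity then yields
\[
p_{\sigma(1)}\leq p_{\sigma(2)}\leq\cdots\leq p_{\sigma(m)},
\]
so that $\ermT_{(r)}=T_{\sigma(r)}$ and $p_{(r)}=p_{\sigma(r)}$ for every $r\in\{1,\ldots,m\}$. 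After establishing this alignment, the proposition reduces to verifying that the BH-style threshold index $\hkappa$ is compatible with the $T$-order, one direction being immediate and the other requiring us to exploit the defining maximality of $\hkappa$.

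First I would prove the easy direction $\{T_i\leq \ermT_{(\hkappa)}\}\subseteq\{p_i\leq p_{(\hkappa)}\}$: writing $i=\sigma(r)$ with $r\leq \hkappa$, the monotone chain above directly yields $p_i=p_{\sigma(r)}\leq p_{\sigma(\hkappa)}=p_{(\hkappa)}$. This uses only the monotone alignment, no property of the BH cutoff.

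Second, for the reverse inclusion I would argue by contradiction. Suppose $p_i\leq p_{(\hkappa)}$ while $T_i>\ermT_{(\hkappa)}$; then $i=\sigma(r^\ast)$ for some $r^\ast>\hkappa$, and sandwiching in the monotone sorted chain forces
\[
p_{\sigma(\hkappa)}=p_{\sigma(\hkappa+1)}=\cdots=p_{\sigma(r^\ast)}=p_{(\hkappa)}.
\]
Invoking the defining inequality $p_{(\hkappa)}\leq \hkappa\beta/m$ from \eqref{eq:kappa_BH} would then give $p_{(r^\ast)}=p_{(\hkappa)}\leq \hkappa\beta/m\leq r^\ast\beta/m$, contradicting the \emph{maximality} clause in the definition of $\hkappa$. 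The main, and really only, subtle point in the argument is exactly this ties-handling step: the map $T_j\mapsto p_j$ need not be strictly increasing even when the $T_j$'s are distinct (two consecutive test scores can fall in the same gap between calibration scores and produce identical $p$-values), so without exploiting the maximality built into the BH definition one cannot rule out a test point whose $T$-rank exceeds $\hkappa$ but whose $p$-value still equals $p_{(\hkappa)}$. Everything else is bookkeeping with the monotone map.
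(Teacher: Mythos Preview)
Your proof is correct and follows essentially the same route as the paper's: both rely on the monotonicity of $T_j\mapsto p_j$ to align the $T$- and $p$-orderings, and then invoke the maximality clause in the definition of $\hkappa$ to rule out the possibility that a tie in the $p$-values extends past rank $\hkappa$. If anything, your organization is slightly cleaner than the paper's---you isolate the maximality argument to the single direction where it is genuinely needed, whereas the paper's rank-based presentation of the inclusion $\{p_j\le p_{(\hkappa)}\}\subseteq\{T_j\le\ermT_{(\hkappa)}\}$ tacitly uses the same ``no tie past $\hkappa$'' fact in asserting $\{p_j\le p_{(\hkappa)}\}=\{\rank(p_j)\le\hkappa\}$ without spelling it out.
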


Proposition \ref{pro:$p$-value_T-value} indicates that using the conformal $p$-values in \eqref{eq:conformal_p_values} to obtain $\hgS_u$ is equivalent to using the scores $\{T_i\}_{i\in \gU}$ with the same ranking threshold $\hkappa$ in \eqref{eq:kappa_BH}, that is
\begin{align*}
    \hgS_u = \LRl{i\in \gU: p_i \leq p_{(\hkappa)}} \equiv \LRl{i\in \gU: T_i \leq \ermT_{(\hkappa)}}.
\end{align*}
Therefore, we can frame the Benjamini--Hochberg procedure with conformal $p$-values as a calibration-assisted selection in Section \ref{sec:cal-assisted-results}. By Algorithm \ref{alg:SCOP}+, we obtain the selected calibration set $\hgS_c^+$ in \eqref{eq:Sc_inflate} with the inflated threshold $\ermT_{(\hkappa+1)}$. If Assumptions \ref{assum:continuous}, \ref{assum:decouple_k} and \ref{assum:joint_CDF} are satisfied, we have the same bound in Theorem \ref{thm:FCR_bound_cal} for the $\FCR$ control gap.
\begin{remark}
    To adapt Assumptions \ref{assum:htau} and \ref{assum:decouple_k}, we can simply take $t_u = -\infty$ and $t_c = +\infty$ by replacing $T_j$ with $-\infty$ for $j \in \gU$ and $T_k$ with $+\infty$ for $k \in \gC$, respectively. By the step-up property of Benjamini--Hochberg procedure, we have $\hkappa^{j\gets t_u} = \hkappa$ for any $j\in \hgS_u$. In addition, replacing $T_k$ with $+\infty$ leads to a smaller conformal $p$-value due to the construction in \eqref{eq:conformal_p_values}. Therefore, we can guarantee $\hkappa^{(j,k)\gets (t_u,t_c)} \geq \hkappa^{j\gets t_u}$. For the Benjamini--Hochberg procedure with standard $p$-values, it has been proved that $\hkappa/m$ could converge (in probability) to a constant in $(0,1)$ when $m$ tends to infinity, even under dependence \citep{Genovese2002, storey2004strong,Ferreira2006}. Hence $\hkappa \geq \lceil \gamma m \rceil$ in Assumption \ref{assum:decouple_k} is reasonable for a fixed $\gamma$.
\end{remark}

\section{Numerical results}\label{sec:numr}

%\subsection{Simulation studies} \label{sec:simu}
We illustrate the breadth of applicability of the proposed methods via comprehensive simulation studies. In each setting, we generate independent and identically distributed 10-dimensional covariates $X_i$ from uniform distribution $\unif([-1,1]^{10})$ and the corresponding responses as $Y_i=\mu(X_i)+\epsilon_i$. Two scenarios are considered with different models of $\mu(\cdot)$ and distributions of $\epsilon_i$'s: {Scenario A} considers a linear model $\mu(X)=X^\top\beta$ where $\beta$ is randomly sampled from $ \unif([-1,1])^{10}$. The noise is heterogeneous and follows $\epsilon\mid X\sim N(0,\{1+|\mu(X)|\}^2)$; {Scenario B} considers nonlinear model  $\mu(X)=X^{(1)}X^{(2)}+X^{(3)}-2\exp\{X^{(4)}+1\}$, where $X^{(k)}$ denotes the $k$-th element of vector $X$. The noise is from $\epsilon\sim N(0,1)$ and is independent of covariates $X$. %{Scenario C} (Aggregation model): 
%$\mu(X)=4(X^{(1)}+1)|X^{(3)}|\indicator{{X^{(2)}>-0.4}}+4(X^{(1)}-1)\indicator{X^{(2)}\leq-0.4}. $ The noise is $\epsilon\sim N(0,1)$ and independent of $X$ too.

We fix the labeled data size $2n=400$ and equally split it into $\gD_t$ and  $\gD_c$. The regression model $\hmu(\cdot)$ is fitted on $\gD_t$ by ordinary least-squares for Scenario A, and support vector machine for Scenario B, respectively. The latter algorithm is implemented by \texttt{R} package \texttt{ksvm} with default parameters. In both cases, the selection score is exactly the prediction value, i.e., $T_i=\hat{\mu}(X_i)$. 

The selected subset is $\hgS_u=\{i\in\gU: T_i\leq \hat{\tau}\}$, where $\hat{\tau}$ is the threshold. To illustrate the wide applicability of our proposed method, several selection thresholds $\hat{\tau}$ are considered. 
(i) {T-cal($q$)}: $q\%$-quantile of true response $Y$ in calibration set, that is $\hat{\tau}$ is $q\%$-quantile of $\{Y_i\}_{i\in \gC}$.
(ii) {T-test($q$)}: $q\%$-quantile of predicted response $\hmu(X)$ in test set, i.e. $\hat{\tau}=\ermT_{([qm/100])}$.
(iii) {T-exch($q$)}: $q\%$-quantile of predicted response $\hmu(X)$ in both calibration set and test set, that is $\hat{\tau}$ is the $q\%$-quantile of $\{T_i\}_{\gC\cup\gU}$.
(iv) {T-cons($b_0$)}: a pre-determined constant value $b_0$, i.e., $\hat{\tau}=b_0$. 
(v) {T-pos($b_0, \beta$)}: the threshold for prediction-oriented selection proposed by \citet{jin2022selection}, where one would like to select those test samples with response $Y$ smaller than $b_0$ while controlling the $\FDR$ level at $\beta=0.2$. Here $\hat{\tau}$ is computed by Benjamini--Hochberg procedure with conformal $p$-values in Section \ref{sec:conformal_p-values}.
(vi) {T-top($K$)}: the $K$-th smallest value of $\{T_i\}_{i\in\gU}$, i.e. $\htau=\ermT_{(K)}$.

\begin{figure}
\includegraphics[width = 1\linewidth]{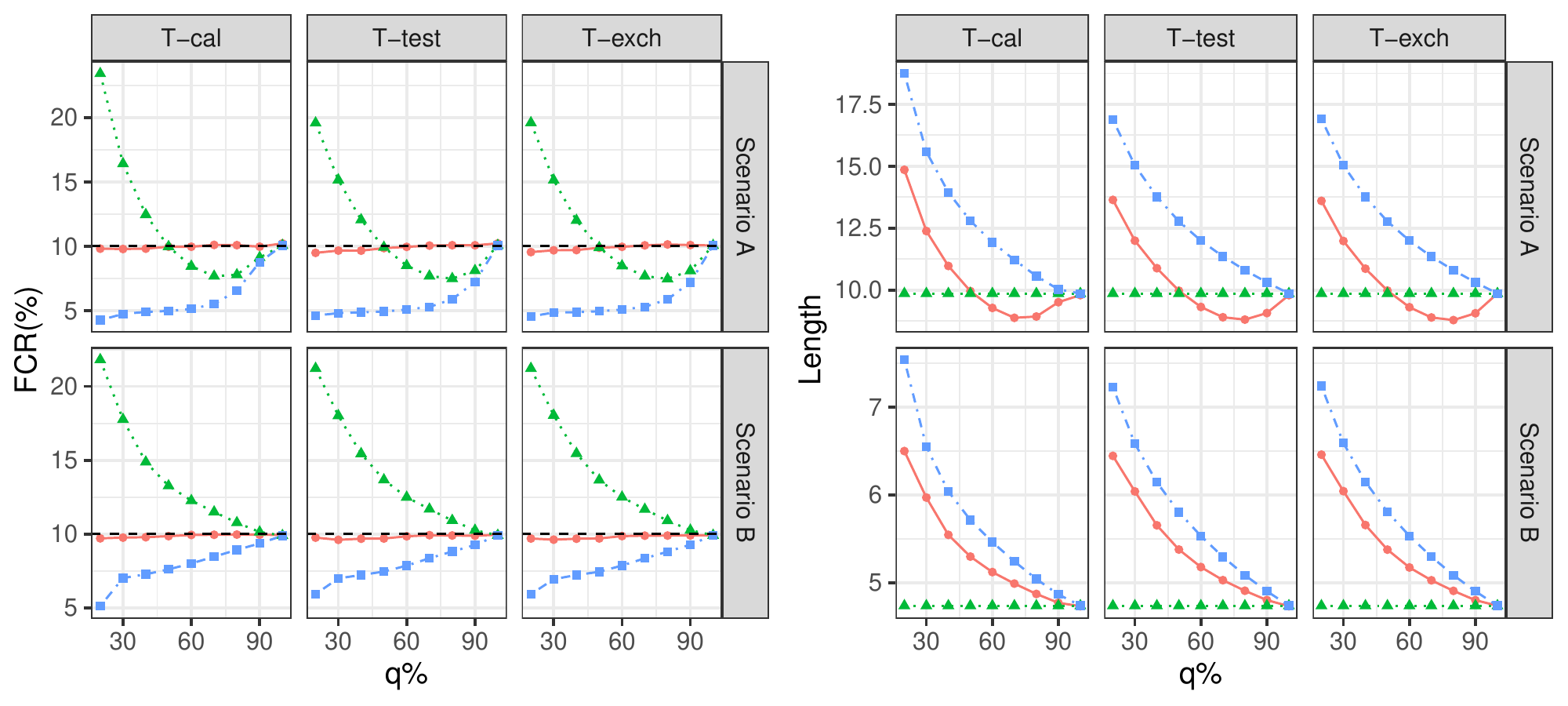}
% note that files may not be rotated
\caption{Empirical $\FCR$ (\%) and the average length of prediction intervals for quantile based thresholds with varying quantile level $q\%$ of three methods as follows: selective conditional conformal prediction (circle dot solid red line); ordinary conformal prediction \eqref{eq:marginal_PI} (triangle dashed green line); $\FCR$-adjusted conformal prediction \eqref{eq:adj_PI}  (square dot-dash blue line). The black dashed line represents the target $\FCR$ level $10\%$.}
\label{fig:quantile}
\end{figure}

Among all the considered threshold selections, only the {T-exch($q$)} and {T-cons($b_0$)} satisfy the exchangeability with respect to $\{T_i\}_{i\in\gC\cup\gU}$. The threshold {T-top($K$)} is actually a special case of {T-test($q$)} with $K=[qm/100]$. With target $\FCR$ level $\alpha=10\%$, we apply the original Algorithm \ref{alg:SCOP} for those two exchangeable thresholds to construct prediction intervals, while the Algorithm \ref{alg:SCOP}+ is used for other thresholds. Two benchmarks are included for comparison. One is to directly construct a $(1-\alpha)$ marginal $\PI_j^{\text{M}}$ as \eqref{eq:marginal_PI} for each selected sample based on the whole calibration set. We refer to this method as ordinary conformal prediction and notice that it takes no account of the selection effects. Another one is the $\FCR$-adjusted conformal prediction, which builds $(1-\alpha|\hgS_u|/m)$ level $\PI_j^{\text{AD}}$ as \eqref{eq:adj_PI} for each selected individual. The performances are compared in terms of the $\FCR$ and the average length of prediction intervals among 1,000 repetitions.  

We firstly fix the size of test data $\gD_u$ as $m=200$ and consider three quantile-based thresholds: {T-cal($q$)}, {T-test($q$)} and {T-exch($q$)}. Figure \ref{fig:quantile} displays the estimated $\FCR$ and average length of $\PI_j$ through varying the quantile level $q\%$ from $20\%$ to $100\%$. Across all the settings, it is evident that the selective conditional conformal prediction is able to deliver quite accurate $\FCR$ control and results in more narrowed $\PI_j$. As expected, the ordinary conformal prediction yields the same lengths of $\PI_j$ under both scenarios and can only control the $\FCR$ under $q\%=100\%$, that is the situation including all the test data without any selection. This can be understood since the ordinary conformal prediction builds the marginal prediction intervals $\PI_j$ using the whole calibration set without consideration of the selection procedure and thus possesses the length of $\PI_j$ as $2Q_{\alpha}(\{R_i\}_{i\in \gC})$ in \eqref{eq:marginal_PI}. The $\FCR$-adjusted conformal prediction results in considerably conservative $\FCR$ levels and accordingly it performs not well in terms of the length of $\PI_j$ under both cases. This is not surprising as the $\FCR$-adjusted conformal prediction marginally constructs much larger $(1-\alpha|\hat{\gS}_u|/m)$-level intervals than the target level $\alpha$ to ensure the $\FCR$ control.

\begin{table}[htbp]
\centering
\caption{Comparisons of empirical $\FCR$ (\%) and the average length of $\PI_j$'s under Scenarios A and B with target $\FCR$ $\alpha=10\%$. The sample sizes of the calibration and test sets are fixed as $n=m=200$. SCOP: the proposed method; OCP: ordinary conformal interval \eqref{eq:marginal_PI} ; ACP: $\FCR$-adjusted conformal interval \eqref{eq:adj_PI}.}\label{table:other_thresholds}
\resizebox{\textwidth}{!}{
\begin{tabular}{lcccccccccccc}
    && \multicolumn{3}{c}{{T-con($b_0$)}} && \multicolumn{3}{c}{{T-pos($b_0,20\%$)}} && \multicolumn{3}{c}{{T-top($60$)}}   \\ 
    && SCOP    & OCP    & ACP    && SCOP    & OCP    & ACP    && SCOP    & OCP    & ACP    \\ 
\multirow{2}{*}{{Scenario A}}& $\FCR$ & 9.76   & 14.67  & 4.91   && 5.45    & 13.19& 2.17    && 9.73 & 15.26 & 4.90   \\
&Length  & 11.83   & 9.91   & 14.87  && 16.06   & 9.91   & 22.57 && 12.09 & 9.91 & 15.10   \\                                         
\multirow{2}{*}{{Scenario B}}& $\FCR$ & 9.84    & 16.99  & 7.04   && 9.93    & 15.83  & 7.09 && 9.80 & 17.71& 6.98    \\
&Length  & 5.87    & 4.72   & 6.41   && 5.68    & 4.72   & 6.23  && 5.95 & 4.72 & 6.54    \\ 
\end{tabular}}
\end{table}

In Table \ref{table:other_thresholds}, we present the results of the remaining three thresholds, including {T-cons($b_0$)}, {T-pos($b_0,\beta$)} and {T-top($K$)}. %we present the results of the remaining four thresholds, including {T-cons($b_0$)}, {T-pos($b_0$,$\beta$)}, {T-top($K$)} and {T-clu}. 
Here, we fix the constant $b_0$ for both {T-cons($b_0$)} and {T-pos($b_0, \beta$)} at $-1$ in Scenario A and $-8$ in Scenario B, %the 30\%-quantile of the true response $Y_i$'s,
and choose the target $\FDR$ level $\beta=20\%$ for {T-pos($b_0, \beta$)} and $K=60$ for {T-top($K$)}. It can be seen that our method achieves $\FCR$ levels 
close to the nominal level, and it also provides a satisfactory narrowed prediction interval under both scenarios. The ordinary conformal prediction \eqref{eq:marginal_PI} leads to different $\FCR$ values but the same average length of $\PI_j$ by different selection thresholds. This corroborates the insight that the ordinary conformal prediction is unable to give a valid coverage guarantee for the selected samples. In contrast, the $\FCR$-adjusted conformal prediction \eqref{eq:adj_PI} yields overly conservative $\FCR$ values, and in turn, its $\PI_j$ lengths would be considerably inflated. 

Similar conclusions can be drawn from additional results in the Supplementary Material.

\begin{figure}[tb]
\includegraphics[width = 1\linewidth]{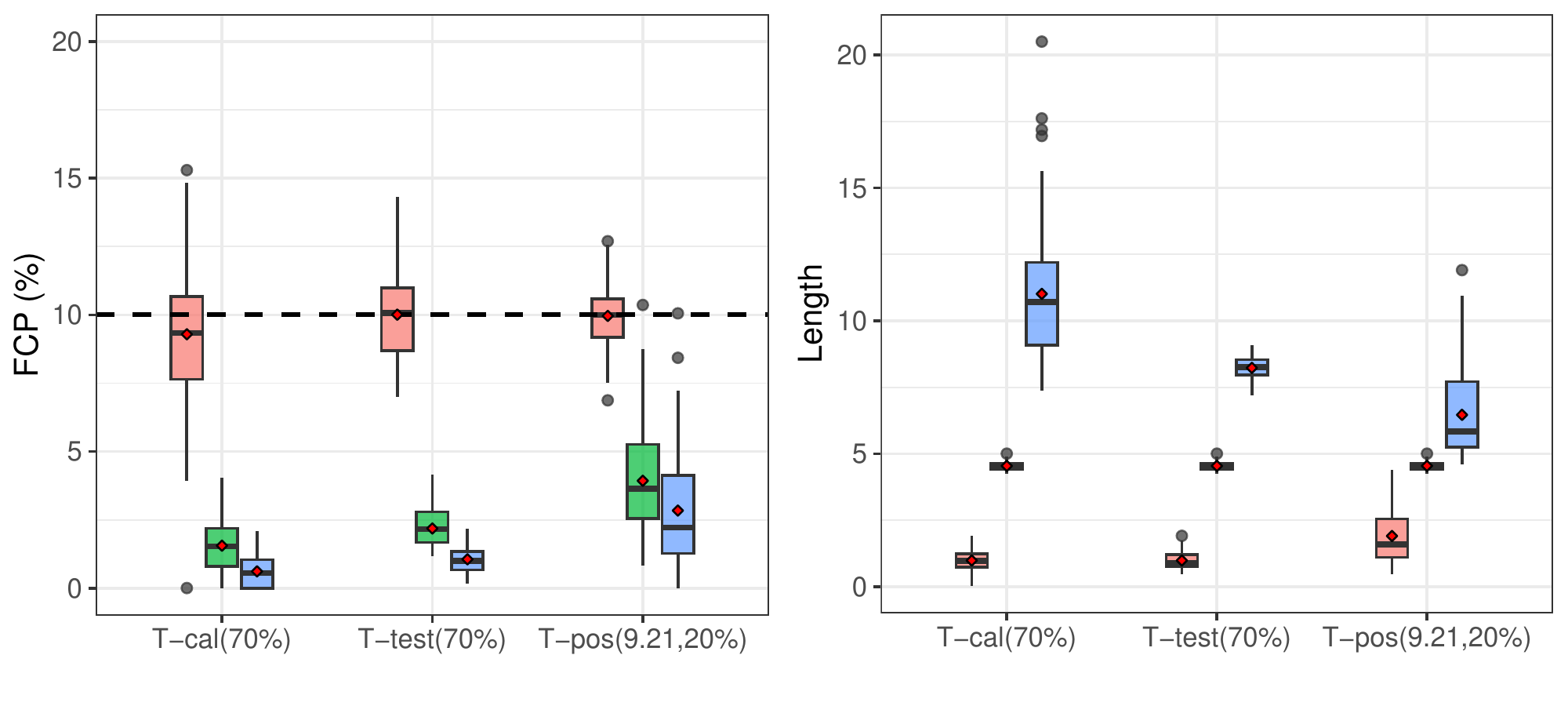}
% note that files may not be rotated
\caption{Boxplots of the values of $\FCP$ (\%) and length of $\PI_j$ for the drug discovery example by our method (in red), ordinary conformal prediction \eqref{eq:marginal_PI} (in green) and $\FCR$-adjusted conformal prediction \eqref{eq:adj_PI} (in blue). The black dashed line represents the target $\FCR$ level $10\%$ and the red rhomboid dot denotes the average value. }
\label{fig:drug}
\end{figure}

\section{Real data application}\label{sec:real_data}

Early stages of drug discovery aim at identifying those drug-target pairs with high binding affinity of a specific target from a large pool \citep{santos2017comprehensive}. It is important to provide reliable prediction intervals for selecting those promising drug-target pairs for further clinical trials \citep{huang2022artificial}.   %After screening, an effective subset one may be interested in can be selected for further clinical trials \citep{huang2022artificial}. 
In this example, we apply the proposed method to construct prediction intervals of binding affinities for those drug-target pairs with high binding affinity predictions, while ensuring $\FCR$ control. We consider the DAVIS dataset \citep{davis2011comprehensive}, which contains $25,772$ drug-target pairs. Each pair includes the binding affinity, the structural information of the drug compound, and the amino acid sequence of the target protein; the drugs and targets are firstly encoded into numerical features through \texttt{Python} library \textit{DeepPurpose} \citep{huang2020deeppurpose}, and the responses are taken as the log-scale affinities. We randomly sample $2,000$ observations as the calibration set and another $2,000$ observations as the test set and use the remaining ones as the training set to fit a small neural network model with $3$ hidden layers and $5$ epochs. 

Our goal is to build prediction intervals of drug-target pairs on the test set by selecting their predicted affinities that exceed some specific threshold. Here we consider three different thresholds: {T-cal($70\%$)}, the 70\%-quantile of true responses in the calibration set; {T-test($70\%$)}, the 70\%-quantile of predicted affinities in the test set; {T-pos($9.21, 20\%$)}, which selects those pairs with affinities larger than $9.21$ while controlling $\FDR$ at $0.2$ level. The target $\FCR$ level is $\alpha=10\%$.

To evaluate the performance of our method, we also consider building prediction intervals for those selected candidates by ordinary conformal prediction {\eqref{eq:marginal_PI}} and $\FCR$-adjusted conformal prediction {\eqref{eq:adj_PI}}. Figure \ref{fig:drug} shows the boxplots of the false coverage-statement proportion ($\FCP$), which is defined as
\begin{equation*}
    \FCP = \frac{|\{j\in \hat{\gS}_{u}: \ Y_j \not\in \PI_j\}|}{\max\{|\hat{\gS}_u|, 1\}},
\end{equation*}
and the average length of $\PI_j$ based on three methods among 100 runs. As illustrated, our method has stable $\FCP$ close to the nominal level across all the threshold selections. In comparison, both ordinary conformal prediction and $\FCR$-adjusted conformal prediction result in conservative $\FCP$ levels. The lengths of prediction intervals constructed by ordinary conformal prediction and $\FCR$-adjusted conformal prediction are much broader compared to those of our method. In fact, the responses of log-scale affinities truly range at $(-5,10)$, and thus those broader prediction intervals would barely provide useful information for further clinical trials.

We also consider another application in the house price prediction in the Supplementary Material, and our proposed method also demonstrates superior performance in terms of both $\FCR$ control and narrower prediction intervals, making it promising in practice. 

%We also consider another application to the house price prediction in the Supplementary Material, which verifies our method similarly.

\section{Discussion}\label{sec:conclu}
% We have investigated the FCR control problem in the scenario of selective conformal inference. The validity of the FCR-adjusted method stemmed from \citet{benjamini2005false} is verified in the present setting, and our proposed conditional procedure is shown to be widely applicable against both selection with exchangeable thresholds and non-exchangeable ranking-based selection with a rigorous theoretical guarantee. Specially, our method can achieve a finite-sample and distribution-free $\FCR$ control for exchangeable selection and test-driven selection.
We conclude the paper with three remarks. Firstly, we focus mainly on using the residuals as nonconformity scores for the construction of prediction intervals. Our framework can be readily extended to more general scores such as the one based on quantile regression \citep{romano2019conformalized} or distributional regression \citep{victor2021distributional}. To make it clear, we define a general nonconformity score function as $R(\cdot, \cdot): \sR^d \times \sR \to \sR$. The conditional prediction interval \eqref{eq:SCOP_PI} can be generalized to
\begin{align}
    \PI^{\SCOP}_j = \LRl{y\in \sR: R(X_j,y)\leq Q_{\alpha}\LRs{\{R(X_i,Y_i)\}_{i\in \hgS_c}}}.\nonumber
\end{align}
All the theoretical results remain intact under the same conditions. Secondly, as split conformal introduces extra randomness from data splitting and reduces the effectiveness of training models, we may consider the implementation of Algorithm \ref{alg:SCOP} via Jackknife+ \citep{barber2021predictive} or de-randomized inference \citep{ren2022derandomised,bashari2023derandomized} to refine the prediction intervals, and the theoretical guarantee requires further investigation. %Third, it is of interest to consider adapting conditional conformal prediction to the online setting \citep{weinstein2020online}, where one encounters an infinite sequence of samples ordered by time. 
Thirdly, there are some methods proposed to narrow the post-selection intervals \citep{zhao2020constructing,zhao2022general}. How to incorporate their approaches to our selective conformal prediction warrants future research.

\section*{Acknowledgement}
The authors thank the Editor, Associate Editor, and two anonymous referees for their many helpful comments that have resulted in significant improvements in the article. All authors contributed equally to the paper. This research was supported by the China National Key R\&D Program (Grant Nos. 2022YFA1003703, 2022YFA1003800), NNSF of China Grants (Nos. 11925106, 12101398, 12231011, 11931001), and Special Fund Project for the Philosophy and Social Science Laboratory of Chinese Ministry of Education.

\bibliography{references}

%%%%%%%%%%%%%%%%%%%%%%%%%%%%%%%%%%
%%%%% Appendix %%%%%%%%%%%%%%%%%%%
%%%%%%%%%%%%%%%%%%%%%%%%%%%%%%%%%%
\newpage
\appendix

\begin{center}
    \Large{\bf Supplementary Material for ``Selective conformal inference with FCR control''}
\end{center}
\allowdisplaybreaks

\section{Preliminaries}
\subsection{Notations}
The index sets of the calibration set and test set are $\gC$ and $\gU$, respectively. Denote the samples $Z_i = (X_i,Y_i)$ for $i\in \gC \cup \gU$. The selected test set and calibration set are $\hgS_u$ and $\hgS_c$, respectively. In this appendix, we introduce the following new notations:
\begin{itemize}
    \item \emph{Order statistics in subsets.} For any $\gS_c \subseteq \gC$, we write $\ermR_{(\ell)}^{\gS_c}$ as the $\ell$th smallest values in $\ermR^{\gS_c} = \{R_i: i\in \gS_c\}$; For any $\gS_u \subseteq \gU$, we denote $\ermT_{(\ell)}^{\gS_u}$ the $\ell$th smallest values in $\ermT^{\gS_u} = \{T_i: i\in \gS_u\}$.

    \item \emph{Virtual ranking thresholds.} For the ranking threshold in Section \ref{sec:ranking-selection}, we write the $\hkappa^{(j)} = \hkappa^{j \gets t_u}$ and $\hkappa^{(j,k)} = \hkappa^{(j,k) \gets (t_u,t_c)}$.

    \item \emph{Selected calibration set.} For any $t\in \sR$, we define the selected calibration set as $\hgS_c(t) = \{i\in \gC: T_i \leq t\}$.

    \item \emph{Add-one set and drop-one set.} For any $j\in \gU$, we write $\gC_{+j} = \gC \cup \{j\}$ and $\gU_{-j} = \gU \setminus \{j\}$. Corresponding, denote the samples within the two sets as $\gD_{\gC_{+j}} = \{Z_i\}_{i\in \gC_{+j}}$ and $\gD_{\gU_{-j}} = \{Z_i\}_{i\in \gU_{-j}}$.

    \item \emph{Conditional probability and expectation.} For any subset $\gS \subseteq \gC \cup \gU$, we write
    \begin{align*}
        \sP_{\gS}(\cdot) = \sP\LRs{\cdot \mid \{(X_i,Y_i)\}_{i\in \gS}},\quad \E_{\gS}\LRm{\cdot} = \E\LRm{\cdot\mid \{(X_i,Y_i)\}_{i\in \gS}}.
    \end{align*}
\end{itemize}

Recall that the score $T_i = g(X_i)$ and the residue $R_i = |Y_i - \hmu(X_i)|$, hence the values of $T_i$ and $R_i$ are determined by the value of $(X_i,Y_i)$. Then we know that
\begin{align*}
    Q_{\alpha}\LRs{\LRl{R_i}_{i\in \gC}} = \ermR_{(\lceil (1-\alpha) (|\gC|+1) \rceil )}^{\gC},\quad Q_{\alpha}\LRs{\LRl{R_i}_{i\in \hgS_c}} = \ermR_{(\lceil (1-\alpha) (|\hgS_c|+1) \rceil)}^{\hgS_c}.
\end{align*}
Given a positive integer $n$, we use $[n]$ to denote the index set $\{1,\ldots,n\}$.
For the ranking-based selection in Section \ref{sec:ranking-selection}, $\hkappa \in [m]$ is a data-driven threshold that depends on the test set and calibration set. The selected test set of Algorithm \ref{alg:SCOP} is given by
\begin{align*}
    \hgS_u = \LRl{i\in \gU: T_i \leq \ermT_{(\hkappa)}^{\gU}},
\end{align*}
and the selected test set of Algorithm \ref{alg:SCOP}+ is
\begin{align*}
    \hgS_c^{+} = \LRl{i\in \gC: T_i \leq \ermT_{(\hkappa + 1)}^{\gU}}.
\end{align*}

%Under Assumption \ref{assum:continuous}, we denote the cumulative distribution functions of $R_i$ and $T_i$ as $F_T(\cdot)$ and $F_R(\cdot)$, respectively. 
Note that the same selected set $\hgS_u$ and $\hgS_c$ can be obtained if one applies some ranking-based selection procedure to the transformed scores $\{F_T(T_i)\}_{i\in \gU}$ instead of the original scores $\{T_i\}_{i\in\gU}$. Also, the transformed residuals $\{F_R(R_i)\}_{i\in\gC\cup\gU}$ keep the original orders as the residuals $\{R_i\}_{i\in\gC\cup\gU}$. Therefore, without loss of generality, we can assume that $T_i \sim \unif([0,1])$ and $R_i \sim \unif([0,1])$ for $i \in \gC \cup \gU$ in the following assumption and theoretical result. We emphasize that the unknown cumulative distribution functions $F_R(\cdot)$ and $F_T(\cdot)$ are not needed in Algorithm \ref{alg:SCOP}+.

\subsection{Auxiliary lemmas}\label{appen:aux_lemmas}

The following lemma comes from the basic property of sample quantile, which often appears in the conformal inference literature \citep{vovk2005algorithmic,lei2018distribution,romano2019conformalized,barber2021predictive,barber2022conformal}. Here we restate it in the deterministic form.
\begin{lemma}\label{lemma:quantile_inflation}
Let $\rvx_{(\lceil n(1-\alpha)\rceil)}$ is the $\lceil n(1-\alpha)\rceil$-smallest value in $\{\rvx_i \in \sR: i \in [n]\}$. Then for any $\alpha \in (0,1)$, it holds that
\begin{align*}
    \frac{1}{n}\sum_{i=1}^n \Indicator{\rvx_i > \rvx_{(\lceil n(1-\alpha)\rceil)}}\leq \alpha.
\end{align*}
If all values in $\{\rvx_i: i \in [n]\}$ are distinct, it also holds that
\begin{align*}
    \frac{1}{n}\sum_{i=1}^n \Indicator{\rvx_i > \rvx_{(\lceil n(1-\alpha)\rceil)}}\geq \alpha - \frac{1}{n},
\end{align*}
\end{lemma}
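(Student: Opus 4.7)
The plan is to reduce both inequalities to a direct counting argument on order statistics, using no probabilistic machinery since the statement is purely deterministic. Write $k = \lceil n(1-\alpha)\rceil$ and recall that, by definition, $\rvx_{(k)}$ is the $k$-th smallest element of $\{\rvx_i\}_{i=1}^{n}$, so at least $k$ of the $\rvx_i$'s are $\le \rvx_{(k)}$ and consequently at most $n-k$ of them are strictly greater than $\rvx_{(k)}$.

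From this bound on the number of exceedances I would directly get
\begin{equation*}
\frac{1}{n}\sum_{i=1}^n \Indicator{\rvx_i > \rvx_{(k)}} \;\le\; \frac{n-k}{n} \;=\; 1 - \frac{\lceil n(1-\alpha)\rceil}{n} \;\le\; 1 - (1-\alpha) \;=\; \alpha,
\end{equation*}
which establishes the first inequality. The only inequality used is $\lceil n(1-\alpha)\rceil \ge n(1-\alpha)$.

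For the second inequality, I would use distinctness to upgrade the ``at most'' to an equality: when all values are distinct, exactly $k-1$ of them are strictly less than $\rvx_{(k)}$ and exactly $n-k$ are strictly greater, so the sum equals $(n-k)/n$ exactly. Then the standard ceiling bound $\lceil n(1-\alpha)\rceil < n(1-\alpha) + 1$ gives
\begin{equation*}
\frac{1}{n}\sum_{i=1}^n \Indicator{\rvx_i > \rvx_{(k)}} \;=\; 1 - \frac{\lceil n(1-\alpha)\rceil}{n} \;>\; 1 - \frac{n(1-\alpha)+1}{n} \;=\; \alpha - \frac{1}{n}.
\end{equation*}

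There is no genuine obstacle here; the only thing worth being careful about is that distinctness is essential for the lower bound (otherwise ties could push more than $n-k$ points past $\rvx_{(k)}$ if we counted strict inequality loosely, or, conversely, collapse exceedances into the quantile). I would state the two ceiling-function inequalities explicitly so that the bound is unambiguous, and note that the result is a deterministic combinatorial fact that will be applied later to empirical distributions of residuals inside the conditional arguments for Theorems~\ref{thm:FCR_exchange}, \ref{thm:finite_control}, and \ref{thm:FCR_bound_cal}.
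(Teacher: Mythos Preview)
Your argument is correct; the paper itself does not supply a proof of this lemma but simply cites it as a standard deterministic fact from the conformal-inference literature, and your direct counting via $k=\lceil n(1-\alpha)\rceil$ together with the ceiling bounds $\lceil x\rceil\ge x$ and $\lceil x\rceil < x+1$ is exactly the expected derivation. One cosmetic note: your parenthetical about ties ``pushing more than $n-k$ points past $\rvx_{(k)}$'' has the direction backwards---ties can only \emph{decrease} the number of points strictly exceeding $\rvx_{(k)}$, which is precisely why the lower bound fails without distinctness---but this does not affect the proof itself.
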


Next lemma is a variant of the inflation of quantiles (see Lemma 2 in \citet{romano2019conformalized}).
\begin{lemma}\label{lemma:order_drop_one}
For almost surely distinct random variables $\rvx_1,...,\rvx_n$, let $\rvx_{(1)}\leq \ldots \leq \rvx_{(n)}$ be order statistics of $\{\rvx_i: i\in [n]\}$, and $\rvx_{(1)}^{[n]\setminus\{j\}} \leq \ldots \rvx_{(n-1)}^{[n]\setminus\{j\}}$ be the order statistics of $\{\rvx_i: i\in [n]\setminus\{j\}\}$, then for any $r \in [n-1]$ we have: 
\begin{itemize}
    \item[(1)] $\rvx_{(r)}^{[n]\setminus\{j\}} = \rvx_{(r)}$ if $\rvx_j > \rvx_{(r)}$ and $\rvx_{(r)}^{[n]\setminus\{j\}} = \rvx_{(r+1)}$ if $\rvx_j \leq \rvx_{(r)}$.
    \item[(2)] $\LRl{\rvx_{j} \leq \rvx_{(r)}} = \{\rvx_{j} \leq \rvx_{(r)}^{[n]\setminus\{j\}}\}$.
\end{itemize}
\end{lemma}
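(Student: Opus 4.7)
The plan is to prove both parts by a direct case analysis on where $\rvx_j$ falls in the sorted order of the full sample. Almost sure distinctness of the $\rvx_i$ means there is a unique rank $r^\star \in [n]$ with $\rvx_j = \rvx_{(r^\star)}$, which makes comparing the full and reduced order statistics routine.

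First I would handle Part (1). In the case $\rvx_j > \rvx_{(r)}$, the element $\rvx_j$ has rank $r^\star > r$, so the $r$ smallest entries $\rvx_{(1)}, \ldots, \rvx_{(r)}$ of the full sample all survive the deletion of $\rvx_j$ and remain the $r$ smallest entries of $\{\rvx_i : i \in [n]\setminus\{j\}\}$; in particular $\rvx_{(r)}^{[n]\setminus\{j\}} = \rvx_{(r)}$. In the case $\rvx_j \leq \rvx_{(r)}$, one instead has $r^\star \leq r$, so deleting $\rvx_j$ shifts every subsequent order statistic down exactly one position: the new $r$-th smallest is promoted from $\rvx_{(r+1)}$ of the full sample, yielding $\rvx_{(r)}^{[n]\setminus\{j\}} = \rvx_{(r+1)}$. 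Distinctness is essential here, since it rules out ties that would obscure which element is promoted.

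Part (2) then drops out of Part (1) by checking both inclusions on complementary events. If $\rvx_j \leq \rvx_{(r)}$, Part (1) gives $\rvx_{(r)}^{[n]\setminus\{j\}} = \rvx_{(r+1)} \geq \rvx_{(r)} \geq \rvx_j$, so $\rvx_j \leq \rvx_{(r)}^{[n]\setminus\{j\}}$. Conversely, if $\rvx_j > \rvx_{(r)}$, Part (1) yields $\rvx_{(r)}^{[n]\setminus\{j\}} = \rvx_{(r)} < \rvx_j$, so $\rvx_j > \rvx_{(r)}^{[n]\setminus\{j\}}$. Taking contrapositives on the complementary events delivers the set equality. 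The statement is entirely combinatorial, so I do not anticipate a genuine obstacle; the only point requiring care is the exact identification $\rvx_{(r)}^{[n]\setminus\{j\}} = \rvx_{(r+1)}$ in the second case of Part (1), which distinctness makes painless.
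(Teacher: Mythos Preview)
Your argument is correct. The paper does not actually supply a proof of this lemma; it simply states the result and points to Lemma~2 of \citet{romano2019conformalized} as the source of the ``inflation of quantiles'' idea. Your direct case analysis on the rank $r^\star$ of $\rvx_j$ is exactly the elementary combinatorial verification one would expect, and both parts go through as you describe. There is nothing to compare against in the paper itself, so your write-up would serve as a self-contained proof where the paper leaves none.
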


% \begin{lemma}\label{lemma:quantile_change_drop_j}
% Suppose all values in $\ermT^{\gC} \cup \ermT^{\gU}$ are almost surely distinct, $\hkappa^{(j)} = \hkappa$ holds for any $j \in \hgS_u$ and $\hkappa^{(j)} \leq \hkappa + I_u$ holds for any $j \in \gU\setminus\hgS_u$, then we have
% \begin{itemize}
%     \item[(1)] For any $j\in \gU$, $\ermT_{(\hkappa^{(j)}-I_u)}^{\gU_{-j}} \leq \ermT_{(\hkappa)}^{\gU}$;
    
%     \item[(2)] For any $j\in \hgS_u$, $\ermT_{(\hkappa^{(j)} - 1)}^{\gU_{-j}} \leq \ermT_{(\hkappa)}^{\gU} \leq \ermT_{(\hkappa^{(j)})}^{\gU_{-j}}$.
% \end{itemize}
% \end{lemma}
% The proof of Lemma \ref{lemma:quantile_change_drop_j} is deferred to Section \ref{proof:lemma:quantile_change_drop_j}. 
The next two lemmas are corollaries of the well-known spacing representation of consecutive random variables (c.f. Lemma \ref{lemma:spacing}), and the proofs can be found in \ref{proof:lemma:uniform_spacing_and_ratio} and \ref{proof:lemma:Sc_max_spacing}.

\begin{lemma}\label{lemma:uniform_spacing_and_ratio}
    Let $U_1,\cdots ,U_n \stackrel{i.i.d.}{\sim}\unif([0,1])$, and $U_{(1)}\leq U_{(2)}\leq \cdots \leq U_{(n)}$ be their order statistics. For any absolute constant $C \geq 1$, it holds that
    \begin{align}\label{eq:max_spacing}
        \sP\LRs{\max_{0\leq \ell \leq n-1}\LRl{U_{(\ell+1)} - U_{(\ell)}} \geq \frac{1}{1 - 2\sqrt{\frac{C\log n}{n+1}}}\frac{2C\log n}{n+1}} \leq 2 n^{-C},
    \end{align}
    and
    \begin{align}\label{eq:max_scaled_order_stat}
        \sP\LRs{\max_{1\leq k \leq n} \frac{U_{(k)}}{k} \geq \frac{1}{1 - 2\sqrt{\frac{C\log n}{n+1}}} \frac{2C \log n}{n+1}} \leq 2n^{-C}.
    \end{align}
\end{lemma}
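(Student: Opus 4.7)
I would treat both inequalities simultaneously via the Pyke representation of uniform spacings, which is already invoked in the excerpt as Lemma~\ref{lemma:spacing}. Let $E_1,\ldots,E_{n+1}$ be i.i.d.\ $\mathrm{Exp}(1)$ random variables and set $T:=\sum_{i=1}^{n+1}E_i$. Then, adopting the convention $U_{(0)}=0$ and $U_{(n+1)}=1$, the joint distribution of the $n+1$ spacings $\bigl(U_{(\ell+1)}-U_{(\ell)}\bigr)_{\ell=0}^{n}$ agrees with that of $(E_i/T)_{i=1}^{n+1}$; in particular $U_{(k)} \stackrel{d}{=}(E_1+\cdots+E_k)/T$ for every $k\le n$.

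The crucial observation is that both \eqref{eq:max_spacing} and \eqref{eq:max_scaled_order_stat} collapse to controlling the single statistic $M:=\max_{1\le i\le n+1}E_i/T$. The first inequality is immediate from the Pyke identity, since every spacing equals $E_i/T$ for some $i$. For the second, I would combine $U_{(k)}/k=\frac{1}{kT}\sum_{i=1}^k E_i$ with the elementary estimate $\frac{1}{k}\sum_{i=1}^k E_i\le \max_{1\le i\le k}E_i\le \max_{1\le i\le n+1}E_i$, which yields $U_{(k)}/k\le M$ for every $k\le n$. It therefore suffices to show that $M \le \frac{2C\log n}{(n+1)(1-2\sqrt{C\log n/(n+1)})}$ holds with probability at least $1-2n^{-C}$.

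The final step bounds numerator and denominator of $M$ by standard exponential concentration. A union bound on the tails $\sP(E_i>x)=e^{-x}$ gives $\sP(\max_i E_i > 2C\log n) \le (n+1)n^{-2C}$. A one-sided Chernoff bound for the $\mathrm{Gamma}(n+1,1)$ variable $T$ yields $\sP\bigl(T<(n+1)(1-\delta)\bigr)\le e^{-(n+1)\delta^2/2}$ for $\delta\in(0,1)$, and the choice $\delta=2\sqrt{C\log n/(n+1)}$ makes this at most $n^{-2C}$. Combining the two good events by a union bound, and noting that for $C\ge 1$ and $n\ge 2$ the total tail satisfies $(n+2)n^{-2C}\le 2n^{-C}$, produces the required inequality for both parts.

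No part of the argument presents a serious obstacle; the work is essentially bookkeeping on constants, in particular checking that the tail budgets for $\max_i E_i$ and for $T$ cooperate so that the denominator factor matches exactly $2\sqrt{C\log n/(n+1)}$ and the overall probability respects the $2n^{-C}$ allowance. The main conceptual point worth emphasizing is the reduction of both estimates to the single quantity $M$, which avoids a separate Beta-tail analysis for the scaled order statistics $U_{(k)}/k$.
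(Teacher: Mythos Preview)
Your proposal is correct and follows essentially the same route as the paper's proof: both use the Pyke/exponential spacing representation, bound $\max_i E_i$ by a union bound on exponential tails, bound the denominator $T$ by sub-exponential concentration (you via a Gamma Chernoff bound, the paper via the equivalent $\chi^2_{2(n+1)}$ tail), and reduce \eqref{eq:max_scaled_order_stat} to the same statistic $M$ through the inequality $\tfrac{1}{k}\sum_{i\le k}E_i\le \max_i E_i$. Your constant accounting $(n+2)n^{-2C}\le 2n^{-C}$ is in fact slightly cleaner than the paper's, which asserts $(n+1)n^{-2C}\le n^{-C}$ without comment.
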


\begin{lemma}\label{lemma:Sc_max_spacing}
Let $\hgS_c(t) = \{i\in \gC: T_i \leq t\}$. If $\frac{d}{d r}F_{(R,T)}(r, t) \geq \rho t$ holds, then for any absolute constant $C \geq 1$, we have
\begin{align*}
    \sP\LRs{\max_{0\leq \ell \leq |\hgS_c(t)|-1}\LRl{\ermR_{(\ell+1)}^{\hgS_c(t)} - \ermR_{(\ell)}^{\hgS_c(t)}} \geq  \frac{1}{\rho}\frac{1}{1 - 2\sqrt{\frac{C\log n}{|\hgS_c(t)|+1}}}\frac{2C\log n}{|\hgS_c(t)|+1} }\leq 2n^{-C}.
\end{align*}
\end{lemma}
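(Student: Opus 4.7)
The plan is to reduce to Lemma~\ref{lemma:uniform_spacing_and_ratio} by pushing the residuals in $\hgS_c(t)$ through the conditional CDF to turn them into uniforms. The key observation is that, because the $(R_i,T_i)$'s are i.i.d., conditioning on the set $\hgS_c(t)$ (equivalently, on which indices satisfy $T_i\le t$) leaves $\{R_i\}_{i\in\hgS_c(t)}$ as i.i.d.\ draws from the conditional law of $R$ given $T\le t$, whose CDF is $G_t(r)=F_{(R,T)}(r,t)/t$. The hypothesis $\partial_r F_{(R,T)}(r,t)\ge \rho t$ then yields a density lower bound $g_t(r)\ge\rho$ on $(0,1)$.

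First I would set $U_i:=G_t(R_i)$ for $i\in\hgS_c(t)$. Conditional on $N:=|\hgS_c(t)|$, the $U_i$'s are i.i.d.\ $\unif([0,1])$, and strict monotonicity of $G_t$ preserves order statistics, giving $U_{(\ell)}^{\hgS_c(t)}=G_t(R_{(\ell)}^{\hgS_c(t)})$ with the boundary convention $R_{(0)}^{\hgS_c(t)}=U_{(0)}^{\hgS_c(t)}=0$. Integrating $g_t\ge\rho$ over any consecutive pair yields
\begin{equation*}
R_{(\ell+1)}^{\hgS_c(t)}-R_{(\ell)}^{\hgS_c(t)}\ \le\ \frac{1}{\rho}\bigl(U_{(\ell+1)}^{\hgS_c(t)}-U_{(\ell)}^{\hgS_c(t)}\bigr),
\end{equation*}
so it suffices to control the maximum spacing of the $N$ uniforms.

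Next, conditionally on $N$, I would invoke Lemma~\ref{lemma:uniform_spacing_and_ratio} with $C$ replaced by the inflated constant $C':=C\log n/\log N$. Since $N\le n$, $C'\ge C\ge 1$, so the hypothesis is met; and the identity $C'\log N=C\log n$ makes the resulting threshold $\tfrac{2C'\log N}{(N+1)(1-2\sqrt{C'\log N/(N+1)})}$ coincide exactly with $\tfrac{2C\log n}{(N+1)(1-2\sqrt{C\log n/(N+1)})}$, while the conditional failure probability $2N^{-C'}$ equals $2n^{-C}$. Multiplying the threshold by $1/\rho$ via the spacing inequality above, and then marginalising over $N$, delivers the claimed unconditional bound.

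The main obstacle is that a direct conditional application of Lemma~\ref{lemma:uniform_spacing_and_ratio} would only give $2N^{-C}\gg 2n^{-C}$; the single nontrivial step is the reparametrisation $C'=C\log n/\log N$, which trades an inflated threshold (with $\log n$ in place of $\log N$) for the sharper tail bound, uniformly in $N$. Beyond this trick, the proof is a standard reduction from spacings under a bounded-below density to uniform spacings, and small-$N$ regimes in which the factor $1-2\sqrt{C\log n/(N+1)}$ fails to be positive lie outside the scope of the bound.
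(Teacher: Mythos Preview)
Your proof is correct and follows essentially the same route as the paper: condition on $\hgS_c(t)=\gS_c$, observe that the residuals are then i.i.d.\ with CDF $G(r)=F_{(R,T)}(r,t)/t$, use the density lower bound $G'(r)\ge\rho$ to transfer spacings to uniforms, and invoke Lemma~\ref{lemma:uniform_spacing_and_ratio}. The paper's proof simply says ``Invoking Lemma~\ref{lemma:uniform_spacing_and_ratio}, we can finish the proof,'' whereas you make explicit the reparametrisation $C'=C\log n/\log N$ that is needed to convert the conditional bound (with $\log N$ and tail $2N^{-C}$) into the stated bound (with $\log n$ and tail $2n^{-C}$); this is a genuine step the paper leaves implicit, so your write-up is in fact more complete on this point.
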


Lemma \ref{lemma:Sc_interval_concentration} is used to bound the change in the size of the selected calibration set after changing the threshold. We defer the proof to Section \ref{proof:lemma:Sc_interval_concentration}.

\begin{lemma}\label{lemma:Sc_interval_concentration}
Let $Z_i = \Indicator{t_1 < T_i \leq t_2} - (t_2 - t_1)$ for some fixed $0\leq t_1 < t_2 \leq 1$. Then for any absolute constant $C \geq 1$, we have
\begin{align*}
    \sP\LRs{\frac{1}{n}\LRabs{\sum_{i\in \gC} Z_i} \geq  2\sqrt{\frac{e C \log n}{n} \cdot (t_2-t_1)} + \frac{2e C\log n}{n} }\leq 2n^{-C}.
\end{align*}
\end{lemma}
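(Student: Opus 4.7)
The plan is to recognize the sum as a centered sum of i.i.d.\ bounded Bernoulli-type random variables and then apply Bernstein's inequality directly. By the probability-integral transformation noted at the end of the Notations subsection, we may assume without loss of generality that $\{T_i\}_{i\in\gC}$ are i.i.d.\ $\unif([0,1])$, so that writing $p := t_2 - t_1$, the variables $Z_i = \Indicator{t_1 < T_i \leq t_2} - p$ are i.i.d., mean zero, uniformly bounded by $1$ in absolute value, and have variance $p(1-p) \leq p$. Bernstein's inequality then gives, for every $s > 0$,
\begin{equation*}
\sP\LRs{\frac{1}{n}\LRabs{\sum_{i\in\gC} Z_i} \geq s} \leq 2\exp\LRs{-\frac{ns^2}{2(p + s/3)}}.
\end{equation*}

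It remains to verify that the target threshold $s^{*} = 2\sqrt{eCp\log n /n} + 2eC\log n/n$ forces the Bernstein exponent to be at least $C\log n$, i.e. $n(s^{*})^2 \geq 2C\log n \cdot (p + s^{*}/3)$. I would split $s^{*} = s_1 + s_0$ with $s_1 = 2\sqrt{eCp\log n/n}$ and $s_0 = 2eC\log n/n$, and use $(s_1+s_0)^2 \geq s_1^2 + s_0^2 + 2 s_0 s_1$. The variance term is handled by $ns_1^2 = 4eCp\log n \geq 2Cp\log n$ (since $4e > 2$). For the linear term $2C\log n \cdot s^{*}/3 = (2C\log n/3)(s_0 + s_1)$, the inequality $ns_0 = 2eC\log n \geq 2C\log n/3$ shows that $ns_0^2 \geq (2C\log n/3) s_0$, and $2ns_0 \geq 2C\log n/3$ likewise gives $2ns_0 s_1 \geq (2C\log n/3)s_1$. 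Adding these two pieces completes the required inequality, so the bound holds.

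There is no conceptual obstacle: the $Z_i$'s are genuinely i.i.d., and Bernstein's inequality applies directly. The entire argument is elementary algebraic bookkeeping, and the coefficients $2\sqrt{e}$ and $2e$ are deliberately looser than the tight Bernstein constants $\sqrt{2}$ and $2/3$ in order to absorb the cross term $2 s_0 s_1$ cleanly in the quadratic expansion. The only place some care is needed is tracking that this cleaner form is still what gets invoked downstream — in particular, it plugs directly into the proof of Theorem \ref{thm:FCR_bound_cal}, where this lemma controls the fluctuation of $|\hgS_c(t)|$ across thresholds $t$ via the identification $Z_i = \Indicator{t_1 < T_i \leq t_2} - (t_2 - t_1)$ applied to the relevant pair of thresholds.
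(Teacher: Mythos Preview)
Your proof is correct. Both your argument and the paper's are Bernstein-type concentration bounds for a sum of i.i.d.\ centered Bernoulli variables with variance at most $p=t_2-t_1$; the difference is purely in presentation. The paper derives the exponential bound from scratch via the elementary inequality $e^{y}-1-y\leq y^{2}e^{|y|}$ on the moment generating function and then splits into two cases according to whether $K^{2}(1)=e\,n(t_2-t_1)$ exceeds $eC\log n$, handling the ``large $p$'' and ``small $p$'' regimes separately. You instead invoke Bernstein's inequality as a black box and absorb both regimes in a single step by expanding $(s_1+s_0)^2$ and checking that the variance piece $ns_1^2$ dominates $2Cp\log n$ while the cross and square terms $2ns_0s_1+ns_0^2$ dominate the linear piece $(2C\log n/3)s^{*}$. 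Your route is shorter and avoids the case distinction; the paper's route is more self-contained. The algebraic verification you give is correct, and the constants $2\sqrt{e}$ and $2e$ indeed leave enough slack.
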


Lemma \ref{lemma:hgSc_size_lower_bound} is used to lower bound the size of the selected calibration set with arbitrary threshold $t \in (0,1)$, and Lemma \ref{lemma:denominator_lower_bound} is used to guarantee the threshold in selective conditional conformal predictions will be far away from 0 under Assumption \ref{assum:decouple_k}. The proofs can be found in Section \ref{proof:lemma:hgSc_size_lower_bound} and \ref{proof:lemma:denominator_lower_bound}.

\begin{lemma}\label{lemma:hgSc_size_lower_bound}
Let $\hgS_c(t) = \{i\in \gC: T_i \leq t\}$. For any $C \geq 1$, if $8C \log n / (nt) \leq 1$, we have
\begin{align*}
    \sP\LRs{|\hgS_c(t)| \geq \frac{n\cdot t}{2} } \leq 2 n^{-C}.
\end{align*}
\end{lemma}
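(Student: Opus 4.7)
The plan is to recognize this as a direct Chernoff-type lower tail bound on a Binomial sum, exploiting the WLOG reduction to $T_i \sim \unif([0,1])$ that was set up in the preliminaries. (I read the displayed inequality as the lower-bound statement $\sP(|\hgS_c(t)| \leq nt/2) \leq 2n^{-C}$, consistent with the lemma's name, since $\E|\hgS_c(t)| = nt$ makes the displayed direction vacuous.) Under the uniformization, $\Indicator{T_i \leq t}$ are i.i.d. Bernoulli$(t)$ for $i \in \gC$, so $|\hgS_c(t)| = \sum_{i \in \gC}\Indicator{T_i \leq t}$ is distributed as $\mathrm{Binomial}(n, t)$.

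First, I would invoke the standard multiplicative Chernoff bound for sums of independent Bernoulli variables, which states that for $\delta \in (0,1)$
\begin{align*}
    \sP\LRs{|\hgS_c(t)| \leq (1-\delta)nt} \leq \exp\LRs{-\frac{\delta^2 nt}{2}}.
\end{align*}
Taking $\delta = 1/2$ gives $\sP(|\hgS_c(t)| \leq nt/2) \leq \exp(-nt/8)$. Then, under the hypothesis $8C\log n/(nt) \leq 1$, i.e., $nt \geq 8C\log n$, the right-hand side is bounded by $\exp(-C\log n) = n^{-C} \leq 2n^{-C}$, yielding the claim.

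There is essentially no hard step here: the proof is a one-line application of multiplicative Chernoff after the uniformization observation, and the constant $8$ in the hypothesis is tuned precisely so that the exponent $nt/8$ dominates $C\log n$. One could alternatively try to derive the bound from Lemma \ref{lemma:Sc_interval_concentration} applied with $t_1 = 0$, $t_2 = t$, but that inequality involves both a $\sqrt{C(\log n) t /n}$ term and a $C(\log n)/n$ term, so matching the constant $8$ in the hypothesis would require a slightly more delicate argument; going through Bernoulli Chernoff directly is the cleanest route, and it is the standard technique used in concentration arguments of this flavor elsewhere in the conformal literature.
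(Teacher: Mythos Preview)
Your proof is correct, and your reading of the statement (as a lower-tail bound $\sP(|\hgS_c(t)| \leq nt/2) \leq 2n^{-C}$) is the right one. Your route via the multiplicative Chernoff bound is different from the paper's, which instead applies Hoeffding's inequality to obtain
\[
\sP\LRs{\bigl||\hgS_c(t)| - nt\bigr| \geq 2C\sqrt{n\log n}} \leq n^{-C},
\]
and then appeals to the hypothesis $8C\log n/(nt)\leq 1$ to conclude. Your approach is the cleaner of the two: the multiplicative Chernoff exponent $nt/8$ lines up exactly with the constant $8$ in the stated hypothesis, whereas the additive Hoeffding deviation $2C\sqrt{n\log n}$ is only dominated by $nt/2$ when $nt^2 \gtrsim C^2\log n$, a strictly stronger requirement than $nt \geq 8C\log n$ for small $t$. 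In the paper's actual applications the threshold is $t=\gamma/2$ with $\gamma$ bounded away from zero, so the discrepancy is immaterial there, but as a standalone lemma your Chernoff argument is the one that delivers the result under the hypothesis exactly as written.
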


\begin{lemma}\label{lemma:denominator_lower_bound}
For any fixed $\gamma \in (0,1)$, any $j \in \gU$ and any $C \geq 1$, if $64 C \log m \leq \lceil \gamma m \rceil$, we have
\begin{align*}
    \sP\LRs{\ermT_{(\lceil \gamma m \rceil)}^{\gU_{-j}} \leq \frac{\gamma}{2}} \leq 2m^{-C}.
\end{align*}
\end{lemma}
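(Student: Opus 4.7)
The goal is to upper bound the probability that, among $m-1$ i.i.d.\ $\unif([0,1])$ scores $\{T_i\}_{i\in\gU_{-j}}$, the $\lceil \gamma m\rceil$-th smallest lies below $\gamma/2$. Since this event is equivalent to the count $N := \sum_{i\in\gU_{-j}}\Indicator{T_i\leq \gamma/2}$ being at least $\lceil \gamma m\rceil$, the natural route is a one-sided binomial (large-deviation) bound. The mean of $N$ is $(m-1)\gamma/2$, while we need $N\geq \lceil\gamma m\rceil\geq \gamma m$, so the gap to cover is roughly $\gamma m/2$, which is large under the hypothesis $64C\log m \leq \lceil\gamma m\rceil$.

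The plan is to apply Lemma~\ref{lemma:Sc_interval_concentration} with $t_1=0$ and $t_2=\gamma/2$. Although that lemma is stated for the calibration set $\gC$ of size $n$, the proof only uses that the $T_i$'s are i.i.d.\ $\unif([0,1])$, and so it applies verbatim to the collection $\{T_i\}_{i\in\gU_{-j}}$ of size $m-1$ (with $n$ replaced by $m-1$ throughout). This yields, with probability at least $1-2(m-1)^{-C}$,
\begin{equation*}
N \;\leq\; (m-1)\tfrac{\gamma}{2} + 2\sqrt{eC\log(m-1)\cdot (m-1)\tfrac{\gamma}{2}} + 2eC\log(m-1).
\end{equation*}
A small cosmetic adjustment of constants (or the trivial bound $(m-1)^{-C}\leq 2m^{-C}$ for $m\geq 2$) will convert this to the clean $2m^{-C}$ of the statement.

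It then remains to verify that the right-hand side above is strictly less than $\lceil\gamma m\rceil$ under the hypothesis. Using $\lceil\gamma m\rceil\geq \gamma m$ and $(m-1)\gamma/2\leq \gamma m/2$, it suffices to show
\begin{equation*}
2\sqrt{eC\log m\cdot m\gamma/2} + 2eC\log m \;\leq\; \gamma m/2.
\end{equation*}
The hypothesis gives $C\log m \leq \gamma m/64$. Substituting, the square-root term is bounded by $\sqrt{2e\cdot(\gamma m/64)\cdot m\gamma}=m\gamma\sqrt{e/32} < m\gamma/3$, and the linear term by $2e\cdot\gamma m/64 = e\gamma m/32 < m\gamma/10$. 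Their sum is strictly less than $m\gamma/2$, closing the argument.

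There is no real obstacle here: the argument is a routine Bernstein/Bennett-type binomial tail bound already packaged in Lemma~\ref{lemma:Sc_interval_concentration}, combined with arithmetic verification of the constants. The only minor care needed is to note the symmetry that allows Lemma~\ref{lemma:Sc_interval_concentration} to be reapplied to the test indices $\gU_{-j}$ in place of $\gC$, and to track that the bound holds uniformly in the fixed index $j\in\gU$ (which is immediate since $\gU_{-j}$ still consists of i.i.d.\ uniforms, regardless of $j$).
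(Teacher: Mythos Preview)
Your argument is correct and takes a genuinely different route from the paper. You translate the order-statistic event $\{\ermT_{(\lceil\gamma m\rceil)}^{\gU_{-j}}\leq\gamma/2\}$ into a binomial tail event $\{N\geq\lceil\gamma m\rceil\}$ with $N\sim\mathrm{Bin}(m-1,\gamma/2)$ and then apply the Bernstein-type bound already packaged in Lemma~\ref{lemma:Sc_interval_concentration}. The paper instead invokes the exponential spacing representation $\ermT_{(\lceil\gamma m\rceil)}^{\gU_{-j}}\stackrel{d}{=}\sum_{i=1}^{\lceil\gamma m\rceil}V_i/\sum_{k=1}^{m}V_k$ with $V_i\stackrel{i.i.d.}{\sim}\operatorname{Exp}(1)$ and controls numerator and denominator separately via chi-square tail bounds. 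Your route is more elementary and self-contained (it reuses a lemma already proved rather than introducing the spacing machinery), while the paper's route is consistent with the toolbox it has built for the other order-statistic lemmas. Two small points of care in your write-up: the hypothesis gives $64C\log m\leq\lceil\gamma m\rceil\leq\gamma m+1$, not $\leq\gamma m$, so your substitution $C\log m\leq\gamma m/64$ needs a one-line justification (e.g.\ $\gamma m\geq\lceil\gamma m\rceil-1\geq 43$ so the extra $+1$ is absorbed by the slack in your constants); and the conversion $2(m-1)^{-C}\leq 2m^{-C}$ is false as stated, but under the hypothesis $m\geq 64C\log m$ one has $(m/(m-1))^{C}\leq e^{C/(m-1)}\leq 2$, which is what you need. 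Neither affects the substance.
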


\setcounter{equation}{0}
\def\theequation{B.\arabic{equation}}

\section{Proof of the results in Section \ref{sec:conformal-pred}}

\subsection{Proof of Proposition \ref{proposition:BY-FCR-general}}\label{sec:adjusted_con}

\textbf{Proposition \ref{proposition:BY-FCR-general} restated.}
\textit{Suppose that $\{(X_i, Y_i)\}_{i\in \gC \cup \gU}$ are independent and identically distributed random variables and the selection threshold $\htau$ is independent of the calibration set $\gD_c$. Then the $\FCR$ value of the $\FCR$-adjusted method in (\ref{eq:adj_PI}) satisfies $\FCR^{\operatorname{AD}}\leq\alpha$.}

%\subsection{Proof of Proposition \ref{proposition:BY-FCR-general}}
\begin{proof}
For simplicity, we assume $\gD_t$ is fixed. Let $A_{v,r}$ be the event: $r$ prediction intervals are constructed, and $v$ of these do not cover the corresponding true responses. Let ${\text{NPI} }_j$ denote the event that $\{Y_j \not \in \PI^{\text{AD} }(X_j)\}$. By Lemma 1 in \citet{benjamini2005false}, it holds that
\begin{align*}
    \sP\LRs{A_{v,r}} = \frac{1}{v}\sum_{j=1}^m \sP(A_{v,r},\  {\text{NPI} }_j).
\end{align*}
Note that $\cup_{v=1}^r A_{v,r}$ is a disjoint union of events such that $|\hgS_u| = r$ and $|\mathcal{N}_u| = v$, where $\gN_{u}= \{j\in \widehat{\gS}_{u}: \ Y_j \not\in \PI^{\text{AD} }_j\}$. By the definition of $\FCR$, we have
\begin{align}\label{eq:BY-decompose}
    \FCR &= \sum_{r=1}^m\sum_{v=1}^r \frac{v}{r}\sP\LRs{A_{v,r}}\nonumber\\
    &= \sum_{r=1}^m \sum_{v=1}^r\frac{1}{r}\sum_{j=1}^m \sP\LRs{A_{v,r}, {\text{NPI} }_j}\nonumber\\
    &= \sum_{r=1}^m \sum_{j=1}^m\frac{1}{r}\sP\LRs{|\hgS_u| = r,\ {\text{NPI} }_j}.
\end{align}
For each $j\in \hgS_u$, we define the events $\gM_k^{(j)} = \LRl{ M_{\min}^{(j)} = k}$ for $k=1,...,m$.
Recalling the definition of $M_{\min}^{(j)}= \min_{t}\LRl{|\hgS_u^{j\gets t}|:j\in \hgS_u^{j\gets t}}$, we have $M_{\min}^{(j)} \leq |\hgS_u|$ if $j \in \hgS_u$. Following the proof of Theorem 1 in \citet{benjamini2005false} and using the decomposition \eqref{eq:BY-decompose}, we have
\begin{align*}
    \FCR &= \sum_{r=1}^m \sum_{j=1}^m\frac{1}{r}\sum_{l=1}^m\sP\LRs{|\hgS_u| = r,\ j\in \hgS_u, \gM_l^{(j)}, Y_j \not \in \PI^{\text{AD} }_j(X_j) }\\
    &= \sum_{r=1}^m \sum_{j=1}^m\sum_{l=1}^{r} \frac{1}{r}\sP\LRs{|\hgS_u| = r,\ j\in \hgS_u, \gM_l^{(j)}, Y_j \not \in  \PI^{\text{AD} }_j(X_j)}\\
    &\Eqmark{i}{\leq} \sum_{j=1}^m\sum_{r=1}^m\sum_{l=1}^{r} \frac{1}{l}\sP\LRs{|\hgS_u| = r,\ j\in \hgS_u, \gM_l^{(j)}, Y_j \not \in \PI^{\text{AD} }_j(X_j)}\\
    &\Eqmark{ii}{=} \sum_{j=1}^m\sum_{l=1}^m\frac{1}{l}\sum_{r=l}^{m} \sP\LRs{|\hgS_u| = r,\ j\in \hgS_u, \gM_l^{(j)}, Y_j \not \in  \PI^{\text{AD} }_j(X_j)}\\
    &=\sum_{j=1}^m\sum_{k=1}^m\frac{1}{k} \sP\LRs{ \gM_k^{(j)}, Y_j \not \in  \PI^{\text{AD} }_j(X_j)}\\
    &\Eqmark{iii}{=} \sum_{j=1}^m\sum_{k=1}^m\frac{1}{k} \E\LRm{\Indicator{\gM_k^{(j)}}\sP\LRl{Y_j \not \in  \PI^{\text{AD} }_j(X_j) \mid M_{\min}^{(j)}}},\\
    &\Eqmark{iv}{\leq} \sum_{j=1}^m\sum_{k=1}^m\frac{1}{k} \E\LRm{\Indicator{\gM_k^{(j)}} \frac{\alpha M_{\min}^{(j)}}{m}}\\
    &= \sum_{j=1}^m\sum_{k=1}^m\frac{1}{k} \E\LRm{\Indicator{\gM_k^{(j)}} \frac{\alpha k}{m}}\\
    &= \alpha,
\end{align*} 
where $(i)$ holds due to $M_{\min}^{(j)} \leq |\hgS_u|$; $(ii)$ follows from the interchange of summations over $l$ and $r$; $(iii)$ holds since $M_{\min}^{(j)}$ is independent of the calibration set $\gD_c$ and the sample $j$; and $(iv)$ is true because the marginal coverage guarantees that for any $j\in [m]$,
\begin{align}\label{eq:marginal_coverage_prob}
    \sP\LRl{Y_j \not \in  \PI^{\text{AD} }_j(X_j) \mid M_{\min}^{(j)}} \leq \alpha_j^* = \frac{\alpha M_{\min}^{(j)}}{m}.
\end{align}
Here we emphasize that the miscoverage probability $\sP(Y_j \not \in  \PI^{\text{AD}}_j(X_j))$ in $(iv)$ does not depend on the selection condition since the summation is over $[m]$. Consequently, we may utilize the exchangeability between the sample $j$ and the calibration set to verify \eqref{eq:marginal_coverage_prob}.
\end{proof}

\subsection{Proof of Theorem \ref{thm:FCR_exchange}}
\textbf{Theorem \ref{thm:FCR_exchange} restated.}
\textit{Suppose $\{(X_i, Y_i)\}_{i\in \gC \cup \gU}$ are independent and identically distributed random variables, and the threshold $\hat{\tau}$ is exchangeable with respective to the $\{T_i\}_{i\in \gC \cup \gU}$. Then, for each $j\in \gU$, the conditional miscoverage probability is bounded by
\begin{align}
    \sP\LRs{Y_j \not\in \PI_j^{\SCOP}\mid j\in \hgS_u} \leq \alpha.\nonumber
\end{align}
Further, the $\FCR$ value of Algorithm \ref{alg:SCOP} is controlled at $\FCR^{\SCOP} \leq \alpha$.  If $\{R_i\}_{i\in \gC \cup \gU}$ are distinct values almost surely and $\sP(|\hgS_u| > 0) = 1$, we also have $\FCR^{\SCOP} \geq \alpha - \E\{(|\hgS_c| + 1)^{-1}\}$.}

We first introduce two lemmas used in the proof of Theorem \ref{thm:FCR_exchange}, whose proofs are deployed in Section \ref{proof:lemma:zero_gap_exchange_tau_1} and \ref{proof:lemma:zero_gap_exchange_tau_2}, respectively.
\begin{lemma}\label{lemma:zero_gap_exchange_tau_1}
Under the conditions of Theorem \ref{thm:FCR_exchange}, it holds that
\begin{align}
    &\E\LRm{\frac{\Indicator{R_j > \ermR^{\hgS_c \cup \{j\}}_{(\lceil (1 -\alpha)(|\hgS_c|+1)\rceil)},T_k \leq \htau,T_j \leq \htau}}{\sum_{i\in \gC_{+j}} \Indicator{T_i \leq \htau}}}\nonumber\\
    =&\E\LRm{\frac{\Indicator{R_k > \ermR^{\hgS_c \cup \{j\}}_{(\lceil (1 -\alpha)(|\hgS_c|+1)\rceil)},T_k \leq \htau,T_j \leq \htau}}{\sum_{i\in \gC_{+j}} \Indicator{T_i \leq \htau}}}.\nonumber
\end{align}
\end{lemma}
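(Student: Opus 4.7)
The plan is to exploit the joint exchangeability of $(X_j,Y_j)$ and $(X_k,Y_k)$, combined with the assumed exchangeability of $\hat\tau$. Since the samples $\{(X_i,Y_i)\}_{i\in\gC\cup\gU}$ are i.i.d., swapping the pair $(X_j,Y_j)\leftrightarrow(X_k,Y_k)$ leaves the joint law of the entire data invariant; the key step is to verify that under this swap the integrand in the first expectation is transformed into the integrand in the second expectation on the relevant event.

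First I would introduce the swapped data $(\widetilde X_i,\widetilde Y_i)_{i\in\gC\cup\gU}$ defined by exchanging indices $j$ and $k$ and leaving all other samples fixed, and denote the resulting scores and residuals by $\widetilde T_i$, $\widetilde R_i$. Because $\hat\tau$ is a permutation-invariant function of $\{T_i\}_{i\in\gC\cup\gU}$, one has $\widetilde{\hat\tau}=\hat\tau$. Next I would argue that on the event $\{T_j\le\hat\tau\}\cap\{T_k\le\hat\tau\}$, both indices $j$ and $k$ lie in the post-selection calibration set both before and after the swap; hence the multiset
\[
\{\widetilde R_i : i\in\gC_{+j},\ \widetilde T_i\le\hat\tau\}\ =\ \{R_i : i\in\gC_{+j},\ T_i\le\hat\tau\},
\]
which implies that the order statistic $\ermR^{\hgS_c\cup\{j\}}_{(\lceil(1-\alpha)(|\hgS_c|+1)\rceil)}$ is identical before and after the swap, and that the denominator $\sum_{i\in\gC_{+j}}\Indicator{T_i\le\hat\tau}$ is preserved as well.

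Having established this invariance of the quantile and the denominator, I note that $\widetilde R_j=R_k$ on this event. Therefore the integrand evaluated on the swapped data is exactly
\[
\frac{\Indicator{R_k>\ermR^{\hgS_c\cup\{j\}}_{(\lceil(1-\alpha)(|\hgS_c|+1)\rceil)},\ T_k\le\hat\tau,\ T_j\le\hat\tau}}{\sum_{i\in\gC_{+j}}\Indicator{T_i\le\hat\tau}}.
\]
Taking expectations and using that $\{(\widetilde X_i,\widetilde Y_i)\}\stackrel{d}{=}\{(X_i,Y_i)\}$ yields the desired identity.

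The main obstacle I anticipate is the bookkeeping around the set $\hgS_c\cup\{j\}$: I must be precise that on $\{T_j\le\hat\tau\}$ this set coincides with $\{i\in\gC_{+j}:T_i\le\hat\tau\}$, and that when $T_k\le\hat\tau$ holds as well the swap maps this set to itself as a collection of residual values (the labels of the elements in $\gC_{+j}$ change, but the multiset of $R$'s indexed by them does not). Once this is carefully written out, the exchangeability argument closes the proof immediately. Everything off the event $\{T_j\le\hat\tau\}\cap\{T_k\le\hat\tau\}$ contributes zero to both sides by the indicator, so no additional case analysis is needed.
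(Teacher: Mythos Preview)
Your proposal is correct and takes essentially the same approach as the paper: both arguments exploit the exchangeability of $(X_j,Y_j)$ and $(X_k,Y_k)$ together with the permutation-invariance of $\hat\tau$, and both hinge on exactly the bookkeeping you identify, namely that on $\{T_j\le\hat\tau,\,T_k\le\hat\tau\}$ the set $\hgS_c\cup\{j\}$ coincides with $\{i\in\gC_{+j}:T_i\le\hat\tau\}$ and the multiset of residuals over this set is unchanged under the swap. The only cosmetic difference is that the paper conditions on the unordered set $[\{Z_i\}_{i\in\gC_{+j}}]$ and on $\gD_{\gU_{-j}}$ before invoking exchangeability, whereas you apply the swap directly and take expectations.
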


\begin{lemma}\label{lemma:zero_gap_exchange_tau_2}
    Under the conditions of Theorem \ref{thm:FCR_exchange}, for any $j\in \gS_u \subseteq \gU$, it holds that
    \begin{align*}
        &\E\LRm{\frac{\Indicator{R_j > \ermR^{\hgS_c \cup \{j\}}_{(\lceil (1 -\alpha)(|\hgS_c|+1)\rceil)},T_k \leq \htau,\hgS_u = \gS_u}}{\sum_{i\in \gC_{+j}} \Indicator{T_i \leq \htau}}}\\
        =&\E\LRm{\frac{\Indicator{R_k > \ermR^{\hgS_c \cup \{j\}}_{(\lceil (1 -\alpha)(|\hgS_c|+1)\rceil)},T_k \leq \htau,\hgS_u = \gS_u}}{\sum_{i\in \gC_{+j}} \Indicator{T_i \leq \htau}}}.
    \end{align*}
\end{lemma}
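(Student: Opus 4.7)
The plan is to prove this exchange identity by means of the elementary swap permutation $\pi_{jk}$ on the index set $\gC \cup \gU$ that transposes $j$ and $k$. Since $\{(X_i, Y_i)\}_{i \in \gC \cup \gU}$ are i.i.d., the joint law of the data is invariant under $\pi_{jk}$, so after pulling every quantity back through this permutation the expectation is unchanged. The argument then reduces to verifying that, on the event inside the indicator, applying $\pi_{jk}$ to the sample turns the LHS integrand into the RHS integrand: the denominator, the quantile, and the selection-event indicators are all invariant, while only the single indicator $\Indicator{R_j > \cdot}$ flips to $\Indicator{R_k > \cdot}$.

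First I will show that $\htau$, the selected sets, and the relevant quantile are all invariant on this event. By the exchangeability assumption in Theorem \ref{thm:FCR_exchange}, $\htau$ is a symmetric function of $\{T_i\}_{i \in \gC \cup \gU}$, so it is unchanged by the swap. On the event $\{T_k \leq \htau,\ \hgS_u = \gS_u\}$ with $j \in \gS_u$, we have both $T_j \leq \htau$ and $T_k \leq \htau$, hence interchanging these two scores leaves each still below $\htau$; combined with the fact that the scores at all other indices are unaffected, this means the index set $\hgS_c = \{i \in \gC : T_i \leq \htau\}$ is unchanged (since $k \in \hgS_c$ both before and after), and $\{\hgS_u = \gS_u\}$ still holds afterwards (since $T_j'=T_k\leq\htau$ keeps $j$ selected and no other test index is touched). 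The denominator $\sum_{i \in \gC_{+j}} \Indicator{T_i \leq \htau}$ only depends on the multiset of $T$-values at indices $\gC_{+j}=\gC\cup\{j\}$, which is preserved by $\pi_{jk}$, so it too is invariant.

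Next I will handle the quantile $\ermR^{\hgS_c \cup \{j\}}_{(\lceil (1-\alpha)(|\hgS_c|+1)\rceil)}$. Because $k \in \hgS_c$ on our event, both $j$ and $k$ lie in the index set $\hgS_c \cup \{j\}$; therefore the multiset of residuals $\{R_i : i \in \hgS_c \cup \{j\}\}$ is invariant under the swap of $R_j$ and $R_k$, and consequently the empirical quantile is invariant as well. With every ingredient but the residual appearing in the numerator's indicator fixed, $\pi_{jk}$ converts $\Indicator{R_j > \ermR^{\hgS_c \cup \{j\}}_{(\lceil (1-\alpha)(|\hgS_c|+1)\rceil)}}$ into $\Indicator{R_k > \ermR^{\hgS_c \cup \{j\}}_{(\lceil (1-\alpha)(|\hgS_c|+1)\rceil)}}$. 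Combining the distributional invariance of the data with the functional invariance of all other factors gives the claimed equality.

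The main obstacle I anticipate is book-keeping rather than mathematics: one must verify carefully that on the conditioning event both $T_j$ and $T_k$ sit on the same side of $\htau$, so that the swap preserves not only the multiset of scores but also the exact membership of the selected calibration set $\hgS_c$ and the test-selection event $\{\hgS_u = \gS_u\}$. Once this is in place, the cancellation of the denominator and of the quantile is immediate, and the identity follows from a single application of the distributional invariance of $\{(X_i, Y_i)\}_{i \in \gC \cup \gU}$ under $\pi_{jk}$. Note also that Lemma \ref{lemma:zero_gap_exchange_tau_1} can be viewed as the special case of this argument in which the conditioning event records only membership $\{T_j \leq \htau\}$ rather than the full realization $\{\hgS_u = \gS_u\}$, so the same swap scheme applies essentially verbatim.
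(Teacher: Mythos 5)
Your proof is correct and rests on the same symmetry the paper exploits: exchangeability of $(Z_j,Z_k)$ together with the observation that on the event $\{T_k\le\htau,\ \hgS_u=\gS_u\}$ (which, since $j\in\gS_u$, forces $T_j\le\htau$ as well) the threshold, the selected calibration and test sets, the denominator, and the conformal quantile are all invariant under the transposition, so only $\Indicator{R_j>\cdot}$ turns into $\Indicator{R_k>\cdot}$. The paper formalizes the same idea by conditioning on the unordered configuration of $\gD_{\gC_{+j}}$ given $\gD_{\gU_{-j}}$ instead of applying a global measure-preserving swap, but this is only a difference in presentation, not in substance.
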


\begin{proof}
For any $j\in \hgS_u$, the prediction interval in Algorithm \ref{alg:SCOP} is given by
\begin{align*}
    \PI_j^{\SCOP} = \hmu(X_j) \pm \ermR_{(\lceil (1 -\alpha)(|\gS_c|+1)\rceil)}^{\hgS_c}.
\end{align*}
Applying Lemma \ref{lemma:order_drop_one}(2), we know that
\begin{align}%\label{eq:cover_equivalence}
    \LRl{Y_j \not\in \PI_j^{\SCOP}} = \LRl{R_j > \ermR^{\hgS_c}_{(\lceil (1 -\alpha)(|\hgS_c|+1)\rceil)}} = \LRl{R_j > \ermR^{\hgS_c \cup \{j\}}_{(\lceil (1 -\alpha)(|\hgS_c|+1)\rceil)}}.\nonumber
\end{align}
For simplicity, we write $\ermR^{\hgS_c \cup \{j\}} = \ermR^{\hgS_c \cup \{j\}}_{(\lceil (1 -\alpha)(|\hgS_c|+1)\rceil)}$.
Invoking the upper bound of Lemma \ref{lemma:quantile_inflation} to $\hgS_c \cup \{j\}$, we have
\begin{align}\label{eq:cond_individual_miscover}
    &\sP\LRs{Y_j \not\in \PI_j^{\SCOP} \mid j\in \hgS_u} = \sP\LRs{R_j > \ermR^{\hgS_c \cup \{j\}} \mid T_j \leq \htau}\nonumber\\
    &\qquad\leq \alpha + \E\LRm{\frac{1}{|\hgS_c|+1}\sum_{k\in \hgS_c}\Indicator{R_j > \ermR^{\hgS_c \cup \{j\}}}-\Indicator{R_k > \ermR^{\hgS_c \cup \{j\}}} \mid T_j \leq \htau}\nonumber\\
    &\qquad= \alpha + \sum_{k\in \gC} \E\LRm{\frac{\Indicator{T_k \leq \htau}}{|\hgS_c|+1} \LRs{\Indicator{R_j > \ermR^{\hgS_c \cup \{j\}}}-\Indicator{R_k > \ermR^{\hgS_c \cup \{j\}}}} \mid T_j \leq \htau}\nonumber\\
    &\qquad\Eqmark{i}{=} \alpha + \sum_{k\in \gC}\E\LRm{\frac{\Indicator{R_j > \ermR^{\hgS_c \cup \{j\}},T_k \leq \htau}}{\sum_{i\in \gC_{+j}} \Indicator{T_i \leq \htau}} - \frac{\Indicator{R_k > \ermR^{\hgS_c \cup \{j\}},T_k \leq \htau}}{\sum_{i\in \gC_{+j}} \Indicator{T_i \leq \htau}}\mid T_j \leq \htau}\nonumber\\
    &\qquad= \alpha + \sum_{k\in \gC}\frac{ \E\LRm{\frac{\Indicator{R_j > \ermR^{\hgS_c \cup \{j\}},T_k \leq \htau,T_j \leq \htau}}{\sum_{i\in \gC_{+j}} \Indicator{T_i \leq \htau}}} -\E\LRm{\frac{\Indicator{R_k > \ermR^{\hgS_c \cup \{j\}},T_k \leq \htau,T_j \leq \htau}}{\sum_{i\in \gC_{+j}} \Indicator{T_i \leq \htau}}}}{\sP\LRs{T_j \leq \htau}}\nonumber\\
    &\qquad\Eqmark{ii}{=} \alpha,
\end{align}
where $(i)$ holds due to $\sum_{i \in \gC}\Indicator{T_i \leq \htau} + 1 = \sum_{i\in \gC_{+j}} \Indicator{T_i \leq \htau}$ under the event $\Indicator{T_i \leq \htau}$; and $(ii)$ follows from Lemma \ref{lemma:zero_gap_exchange_tau_1}. 
For any given non-empty subsets $\gS_u \subseteq \gU$ and $\gS_c \subseteq \gC$, and for any $j\in \gS_u$, if it holds that
\begin{equation}\label{eq:coverage_gap}
    \begin{aligned}
    \alpha - \E\LRm{\frac{1}{|\hgS_c|+1} \mid \hgS_u = \gS_u} \leq \sP\LRs{Y_j \not \in \PI_j^{\SCOP}\mid \hgS_u = \gS_u} \leq \alpha,
    \end{aligned}
\end{equation}
Then we can bound the $\FCR$ value as
\begin{align*}
    \FCR &= \E\LRm{\frac{\sum_{j\in \hgS_u}\Indicator{Y_j \not\in \PI_j^{\SCOP}}}{|\hgS_u| \vee 1}}\nonumber\\
    &=\sum_{\gS_u \subseteq \gU}\E\LRm{\frac{\sum_{j\in \hgS_u}\Indicator{Y_j \not\in \PI_j^{\SCOP}}}{|\hgS_u| \vee 1}\mid \hgS_u = \gS_u} \sP\LRs{\hgS_u = \gS_u}\nonumber\\
    &=\sum_{\gS_u \subseteq \gU, \gS_u \neq \emptyset}\frac{1}{|\gS_u|}\sum_{j\in \gS_u}\sP\LRs{Y_j\not\in \PI_j^{\SCOP} \mid \hgS_u = \gS_u} \sP\LRs{\hgS_u = \gS_u}\nonumber\\
    &\leq \alpha \cdot\sum_{\gS_u \subseteq \gU, \gS_u \neq \emptyset} \sP\LRs{\hgS_u = \gS_u} \nonumber\\
    &\leq \alpha,
\end{align*}
where the second last inequality holds due to the right-hand side of \eqref{eq:coverage_gap}. 
Similarly, using the left hand side of \eqref{eq:coverage_gap} we can also obtain that $\FCR \geq \alpha - \E[(|\hgS_c| + 1)^{-1}]$ if $\sP(|\hgS_u| > 0) = 1$.
Therefore, it suffices to verify \eqref{eq:coverage_gap} for Algorithm \ref{alg:SCOP} under the exchangeable assumption.
Similar to \eqref{eq:cond_individual_miscover}, for any $j\in \gS_u$, we have
\begin{align*}
    &\sP\LRs{Y_j \not\in \PI_j^{\SCOP} \mid \hgS_u = \gS_u}\nonumber\\ 
    &\leq \alpha + \sum_{k\in \gC}\frac{\E\LRm{\frac{\Indicator{R_j > \ermR^{\hgS_c \cup \{j\}},T_k \leq \htau,\hgS_u = \gS_u}}{\sum_{i\in \gC_{+j}} \Indicator{T_i \leq \htau}}} -\E\LRm{\frac{\Indicator{R_k > \ermR^{\hgS_c \cup \{j\}},T_k \leq \htau,\hgS_u = \gS_u}}{\sum_{i\in \gC_{+j}} \Indicator{T_i \leq \htau}}}}{\sP\LRs{\hgS_u = \gS_u}}\nonumber\\
    &=\alpha,
\end{align*}
where the last equality holds due to Lemma \ref{lemma:zero_gap_exchange_tau_2}. 
Now using the lower bound of Lemma \ref{lemma:quantile_inflation}, we can also have
\begin{align*}
    \sP\LRs{Y_j \not\in \PI_j^{\SCOP} \mid j\in \hgS_u} \geq \alpha - \E\LRm{\frac{1}{|\hgS_c| + 1} \mid \hgS_u = \gS_u}.
\end{align*}
Therefore, we have verified the relation \eqref{eq:coverage_gap}.
\end{proof}

\subsection{Proof of Lemma \ref{lemma:zero_gap_exchange_tau_1}}\label{proof:lemma:zero_gap_exchange_tau_1}
\begin{proof}
First, we define a new set as
\begin{align*}
    \hgS_{c,+j} = \LRl{i \in \gC_{+j}: T_i \leq \htau}.
\end{align*}
Then $\hgS_c \cup \{j\} = \hgS_{c,+j}$ and $|\hgS_c|+1 = |\hgS_{c,+j}|$ hold under the event $\{T_j \leq \htau\}$. Notice that it suffices to show the following relation
\begin{align}\label{eq:zero_gap_transition_Cj}
    &\E\LRm{\frac{\Indicator{R_j > \ermR^{\hgS_{c,+j}}_{(\lceil (1 -\alpha)|\hgS_{c,+j}|\rceil)},T_j \leq \htau, T_k \leq \htau}}{\sum_{i\in \gC_{+j}} \Indicator{T_i \leq \htau}}}\nonumber\\
    = &\E\LRm{\frac{\Indicator{R_k > \ermR^{\hgS_{c,+j}}_{(\lceil (1 -\alpha)|\hgS_{c,+j}|\rceil)},T_j \leq \htau, T_k \leq \htau}}{\sum_{i\in \gC_{+j}} \Indicator{T_i \leq \htau}}}.
\end{align}
Let $z = [z_1,\ldots,z_{n+1}]$. For any $j \in \gU$, define the event
\begin{align*}
    \mathcal{A}(z; \gD_{\gC_{+j}}) = \LRl{[\{Z_i\}_{i\in \gC_{+j}}] = [z_1,\ldots,z_{n+1}]}.
\end{align*}
Denote the corresponding unordered scores and residuals by $[t_1,\ldots,t_{n+1}]$ and $[r_1,\ldots,r_{n+1}]$ under $\mathcal{A}(z, \gD_{\gC_{+j}})$. Since $\htau$ is exchangeable with respective to $\{T_i\}_{i\in \gC \cup \gU}$, then given $\gD_{\gU_{-j}}$ we can write $\htau \mid \mathcal{A}(z; \gD_{\gC_{+j}}) = \tau(z, \gD_{\gU_{-j}})$, which is fixed and is a function of $z$ and $\gD_{\gU_{-j}}$. It means that the following two unordered sets
\begin{align*}
    \LRm{\LRl{\Indicator{T_i \leq \htau}}_{i\in \gC_{+j}}} \mid \mathcal{A}(z; \gD_{\gC_{+j}}) = \LRm{\LRl{\Indicator{t_i \leq \tau(z, \gD_{\gU_{-j}})}}_{i \in [n+1]}},\\
    \LRm{\LRl{R_i\Indicator{T_i \leq \htau}}_{i\in \gC_{+j}}} \mid \mathcal{A}(z; \gD_{\gC_{+j}}) = \LRm{\LRl{r_i\Indicator{t_i \leq \tau(z, \gD_{\gU_{-j}})}}_{i \in [n+1]}},
\end{align*}
are fixed and depend only on $z$ and the data $\gD_{\gU_{-j}}$. In addition, the quantile $\ermR^{\hgS_{c,+j}}_{(\lceil (1 -\alpha)|\hgS_{c,+j}|\rceil)}$ depends only on $[\LRl{R_i\Indicator{T_i \leq \htau}}_{i\in \gC_{+j}}]$ and $\sum_{i\in \gC_{+j}} \Indicator{T_i \leq \htau}$, hence we write
\begin{align*}
    \ermR^{\hgS_{c,+j}}_{(\lceil (1 -\alpha)|\hgS_{c,+j}|\rceil)} \mid \mathcal{A}(z; \gD_{\gC_{+j}}) = \ermR\LRs{z, \gD_{\gU_{-j}}}.
\end{align*}
From now on, we suppress the dependency of notations on $\gD_{\gU_{-j}}$ on $\tau(z, \gD_{\gU_{-j}})$ and $\ermR\LRs{z, \gD_{\gU_{-j}}}$ when the data $\gD_{\gU_{-j}}$ is given. Given the unordered set $z$, we define the following index subset
\begin{align*}
    \Omega(z) = \LRl{(i_1,i_2) \subseteq [n+1]: r_{i_1} > \ermR(z), t_{i_1} > \tau(z), t_{i_2} \leq \tau(z)},
\end{align*}
which is $\sigma(\gD_{\gU_{-j}})$-measurable and independent of the event $\mathcal{A}(z; \gD_{\gC_{+j}})$. Using the exchangeability of the samples $\{Z_i\}_{i\in \gC_{+j}}$,  we can guarantee
\begin{align}\label{eq:jk_exchange_cond_Az}
    &\sP_{\gU_{-j}}\LRl{R_j > \ermR\LRs{z}, T_j > \tau\LRs{z}, T_k > \tau\LRs{z} \mid \mathcal{A}(z; \gD_{\gC_{+j}})}\nonumber\\
    &= \sum_{(i_1,i_2) \in \Omega(z)} \sP_{\gU_{-j}}\LRl{Z_j = z_{i_1},Z_k = z_{i_2} \mid \mathcal{A}(z; \gD_{\gC_{+j}})}\nonumber\\
    &= \sum_{(i_1,i_2) \in \Omega(z)} \sP_{\gU_{-j}}\LRl{Z_j = z_{i_2},Z_k = z_{i_1} \mid \mathcal{A}(z; \gD_{\gC_{+j}})}\nonumber\\
    &= \sP_{\gU_{-j}}\LRl{R_k > \ermR\LRs{z}, T_j > \tau\LRs{z}, T_k > \tau\LRs{z} \mid \mathcal{A}(z; \gD_{\gC_{+j}})}.
\end{align}
It follows that
\begin{align}
    &\E_{\gU_{-j}}\LRm{\frac{\Indicator{R_j > \ermR^{\hgS_{c,+j}}_{(\lceil (1 -\alpha)|\hgS_{c,+j}|\rceil)},T_j \leq \htau, T_k \leq \htau}}{\sum_{i\in \gC_{+j}} \Indicator{T_i \leq \htau}} \mid \mathcal{A}(z; \gD_{\gC_{+j}})}\nonumber\\
    &\qquad = \E_{\gU_{-j}}\LRm{\frac{\Indicator{R_j > \ermR\LRs{z}, T_j > \tau\LRs{z}, T_k > \tau\LRs{z}}}{\sum_{i\in \gC_{+j}} \Indicator{t_i \leq \tau(z)}} \mid \mathcal{A}(z; \gD_{\gC_{+j}})}\nonumber\\
    &\qquad =  \frac{\sP_{\gU_{-j}}\LRl{R_j > \ermR\LRs{z}, T_j > \tau\LRs{z}, T_k > \tau\LRs{z} \mid \mathcal{A}(z; \gD_{\gC_{+j}})}}{\sum_{i\in \gC_{+j}} \Indicator{t_i \leq \tau(z)}} \nonumber\\
    &\qquad = \frac{\sP_{\gU_{-j}}\LRl{R_k > \ermR\LRs{z}, T_j > \tau\LRs{z}, T_k > \tau\LRs{z} \mid \mathcal{A}(z; \gD_{\gC_{+j}})}}{\sum_{i\in \gC_{+j}} \Indicator{t_i \leq \tau(z)}}\nonumber\\
    &\qquad = \E_{\gU_{-j}}\LRm{\frac{\Indicator{R_k > \ermR^{\hgS_{c,+j}}_{(\lceil (1 -\alpha)|\hgS_{c,+j}|\rceil)},T_j \leq \htau, T_k \leq \htau}}{\sum_{i\in \gC_{+j}} \Indicator{T_i \leq \htau}} \mid \mathcal{A}(z; \gD_{\gC_{+j}})}.\nonumber
\end{align}
Through marginalizing over $\mathcal{A}(z; \gD_{\gC_{+j}})$, together with tower's rule, we can verify \eqref{eq:zero_gap_transition_Cj} immediately.
\end{proof}

\subsection{Proof of Lemma \ref{lemma:zero_gap_exchange_tau_2}}\label{proof:lemma:zero_gap_exchange_tau_2}
\begin{proof}
For any $j \in \gS_u$, we follow the nations defined in Section \ref{proof:lemma:zero_gap_exchange_tau_1}. It suffices to show that for any $j \in \gS_u$,
\begin{align}\label{eq:zero_gap_transition_Cj_Su}
    &\E\LRm{\frac{\Indicator{R_j > \ermR^{\hgS_{c,+j}}_{(\lceil (1 -\alpha)|\hgS_{c,+j}|\rceil)}, T_k \leq \htau, \hgS_u = \gS_u}}{\sum_{i\in \gC_{+j}} \Indicator{T_i \leq \htau}}}\nonumber\\
    = &\E\LRm{\frac{\Indicator{R_k > \ermR^{\hgS_{c,+j}}_{(\lceil (1 -\alpha)|\hgS_{c,+j}|\rceil)},T_k \leq \htau, \hgS_u = \gS_u}}{\sum_{i\in \gC_{+j}} \Indicator{T_i \leq \htau}}}.
\end{align}
Given the data $\gD_{\gU_{-j}}$, we can write the selection event as
\begin{align*}
     &\LRl{\hgS_u = \gS_u, \mathcal{A}(z; \gD_{\gC_{+j}})} = \LRl{\mathcal{A}(z; \gD_{\gC_{+j}}), T_j \leq \htau, \bigcap_{i\in \gS_u \setminus \{j\}} \LRl{T_i \leq \htau}, \bigcap_{i\in \gU\setminus \gS_u } \LRl{T_i > \htau}}\\
     &\qquad = \LRl{\mathcal{A}(z; \gD_{\gC_{+j}}), T_j \leq \tau(z; \gD_{\gU_{-j}}), \bigcap_{i\in \gS_u \setminus \{j\}} \LRl{T_i \leq \tau(z; \gD_{\gU_{-j}})}, \bigcap_{i\in \gU\setminus \gS_u } \LRl{T_i > \tau(z; \gD_{\gU_{-j}})}}\\
     &\qquad = \Big\{\mathcal{A}(z; \gD_{\gC_{+j}}), T_j \leq \tau(z; \gD_{\gU_{-j}}), \gE\LRs{z; \gD_{\gU_{-j}}}\Big\},
\end{align*}
where 
\begin{align*}
    \gE\LRs{z; \gD_{\gU_{-j}}} = \LRs{\bigcap_{i\in \gS_u \setminus \{j\}} \LRl{T_i \leq \tau(z; \gD_{\gU_{-j}})}} \bigcap \LRs{\bigcap_{i\in \gU\setminus \gS_u} \LRl{T_i > \tau(z; \gD_{\gU_{-j}})}}.
\end{align*}
For ease of presentation, we suppress the dependency of notations on $\gD_{\gU_{-j}}$ for now. Then we have
\begin{align}
    &\E_{\gU_{-j}}\LRm{\frac{\Indicator{R_j > \ermR^{\hgS_{c,+j}}_{(\lceil (1 -\alpha)|\hgS_{c,+j}|\rceil)}, T_k \leq \htau, \hgS_u = \gS_u}}{\sum_{i\in \gC_{+j}} \Indicator{T_i \leq \htau}} \mid \mathcal{A}(z; \gD_{\gC_{+j}})}\nonumber\\
    &= \E_{\gU_{-j}}\LRm{\frac{\Indicator{R_j > \ermR(z), T_j \leq \tau(z), T_k \leq \tau(z), \gE\LRs{z}}}{\sum_{i\in \gC_{+j}} \Indicator{t_i \leq \tau(z)}} \mid \mathcal{A}(z; \gD_{\gC_{+j}})}\nonumber\\
    &\Eqmark{i}{=} \frac{\Indicator{\gE\LRs{z}}}{\sum_{i\in \gC_{+j}} \Indicator{t_i \leq \tau(z)}}\cdot \sP_{\gU_{-j}}\LRl{R_j > \ermR(z), T_j \leq \tau(z), T_k \leq \tau(z) \mid \mathcal{A}(z; \gD_{\gC_{+j}})}\nonumber\\
    &\Eqmark{ii}{=} \frac{\Indicator{\gE\LRs{z}}}{\sum_{i\in \gC_{+j}} \Indicator{t_i \leq \tau(z)}}\cdot \sP_{\gU_{-j}}\LRl{R_k > \ermR(z), T_j \leq \tau(z), T_k \leq \tau(z) \mid \mathcal{A}(z; \gD_{\gC_{+j}})}\nonumber\\
    &= \E_{\gU_{-j}}\LRm{\frac{\Indicator{R_k > \ermR^{\hgS_{c,+j}}_{(\lceil (1 -\alpha)|\hgS_{c,+j}|\rceil)}, T_k \leq \htau, \hgS_u = \gS_u}}{\sum_{i\in \gC_{+j}} \Indicator{T_i \leq \htau}} \mid \mathcal{A}(z; \gD_{\gC_{+j}})},\nonumber
\end{align}
where $(i)$ holds $\Indicator{\gE\LRs{z}}$ is $\sigma(\gD_{\gU_{-j}})$-measurable and independent of $\mathcal{A}(z; \gD_{\gC_{+j}})$; and $(ii)$ holds due to \eqref{eq:jk_exchange_cond_Az}. Marginalizing over $\mathcal{A}(z; \gD_{\gC_{+j}})$, together with tower's rule, we can obtain \eqref{eq:zero_gap_transition_Cj_Su} immediately.
\end{proof}

\setcounter{equation}{0}
\def\theequation{C.\arabic{equation}}

\section{Proof of the results in Section \ref{sec:ranking-selection}}
\subsection{Proof of Theorem \ref{thm:finite_control}}
\textbf{Theorem \ref{thm:finite_control} restated.}
\textit{Suppose Assumptions \ref{assum:continuous} and \ref{assum:htau} hold. Under test-driven selection procedures, the $\FCR$ value of Algorithm \ref{alg:SCOP}+ can be controlled at
    \begin{align}
        \alpha - \E\LRm{\LRl{(n+1)\ermT_{(\hkappa+1)}}^{-1}} \leq \FCR^{\SCOP} \leq \alpha.\nonumber
    \end{align}}

In this proof, we define the selected calibration set respective to the threshold $t$ as
\begin{align*}
    \hgS_c(t) = \LRl{i\in \gC: T_i \leq t}.
\end{align*}
Then we can rewrite the $\hgS_c^+$ as
\begin{align}
    \hgS_c^+ \equiv \LRl{i\in \gC: T_i \leq \ermT_{(\hkappa+1)}^{\gU}} \equiv \hgS_c(\ermT_{(\hkappa+1)}^{\gU}).\nonumber
\end{align}
Similarly, the prediction interval can be rewritten as
\begin{align}\label{eq:SCOP_PI_j}
    \PI_j^{\SCOP} &= \hmu(X_j) \pm \ermR^{\hgS_c^+}_{(\lceil (1 -\alpha)(|\hgS_c|+1)\rceil)} \equiv \hmu(X_j) \pm \ermR^{\hgS_c(\ermT_{(\hkappa+1)}^{\gU})}_{(\lceil (1 -\alpha)(|\hgS_c(\ermT_{(\hkappa+1)}^{\gU})|+1)\rceil)}.
\end{align}
To simplify notations, in this subsection, we write $\ermR^{\hgS_c(t) \cup \{j\}} = \ermR^{\hgS_c(t) \cup \{j\}}_{(\lceil (1 -\alpha)(|\hgS_c(t)|+1)\rceil)}$ for any $t$.
The following lemma is needed and its proof can be found in Section \ref{proof:lemma:zero_gap_ranking}.
\begin{lemma}\label{lemma:zero_gap_ranking}
Under the conditions of Theorem \ref{thm:finite_control}, it holds that for each $k\in \gC$
\begin{align}\label{eq:zero_gap}
    &\E_{\gU_{-j}}\LRm{\frac{\delta_{j,k}(\ermT_{(\kappa)}^{\gU_{-j}})}{|\hgS_c(\ermT_{(\kappa)}^{\gU_{-j}})| + 1}\Indicator{T_j\leq \ermT_{(\kappa)}^{\gU_{-j}}, T_k\leq \ermT_{(\kappa)}^{\gU_{-j}}} } = 0,
\end{align}
where
\begin{align*}
    \delta_{j,k}(\ermT_{(\kappa)}^{\gU_{-j}}) = \Indicator{R_j > \ermR^{\hgS_c(\ermT_{(\kappa)}^{\gU_{-j}})\cup \{j\}}} - \Indicator{R_k > \ermR^{\hgS_c(\ermT_{(\kappa)}^{\gU_{-j}})\cup \{j\}}}. 
\end{align*}
\end{lemma}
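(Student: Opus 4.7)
The plan is to establish \eqref{eq:zero_gap} by a conditional swapping argument that exploits the i.i.d.\ structure of the pair $(X_j, Y_j)$ and each calibration sample $(X_k, Y_k)$. First, I would condition on $\gD_{\gU_{-j}}$, which makes $t := \ermT_{(\kappa)}^{\gU_{-j}}$ a deterministic constant. This step is legitimate because under test-driven selection and Assumption \ref{assum:htau}, the (virtual) ranking threshold $\kappa$ does not depend on $(X_j, Y_j)$, and under Assumption \ref{assum:continuous} the samples $(X_j, Y_j)$ and $\{(X_i, Y_i)\}_{i\in\gC}$ remain i.i.d.\ conditionally on $\gD_{\gU_{-j}}$.

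Next, I would further condition on the calibration data outside the index $k$, namely $\{(X_i, Y_i)\}_{i\in\gC\setminus\{k\}}$. This freezes the ``background'' set $\hgS_c(t)\setminus\{k\}$ together with its fixed collection of residuals, leaving $(X_j, Y_j)$ and $(X_k, Y_k)$ as the only random objects in the integrand. On the joint event $\LRl{T_j\le t,\,T_k\le t}$, both $j$ and $k$ lie in $\hgS_c(t)\cup\{j\}$, so $|\hgS_c(t)|+1$ equals the fixed background cardinality plus two, and the multiset $\LRl{R_i: i\in \hgS_c(t)\cup\{j\}}$ decomposes as a deterministic multiset augmented by $\LRl{R_j, R_k}$. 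The sample quantile $\ermR^{\hgS_c(t)\cup\{j\}}$ is a function of this multiset alone, hence a symmetric function of the unordered pair $(R_j, R_k)$; call its value $q$. Consequently $\delta_{j,k}(t) = \Indicator{R_j > q} - \Indicator{R_k > q}$ is antisymmetric under the swap $(X_j, Y_j) \leftrightarrow (X_k, Y_k)$, while the indicator $\Indicator{T_j\le t,\,T_k\le t}$ and the denominator $|\hgS_c(t)|+1$ are both invariant under the same swap.

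Since $(X_j, Y_j)$ and $(X_k, Y_k)$ are i.i.d., their joint law is invariant under the swap, so the conditional expectation of an antisymmetric integrand against a symmetric measure vanishes. Applying tower's rule to marginalize first over $\{(X_i, Y_i)\}_{i\in\gC\setminus\{k\}}$ and then over $\gD_{\gU_{-j}}$ preserves the identity and yields \eqref{eq:zero_gap}. The main subtlety is to verify that $\ermR^{\hgS_c(t)\cup\{j\}}$ truly depends only on the unordered multiset of residuals rather than on index labels; the two-stage conditioning is designed precisely to make this transparent, by pinning down which indices belong to the frozen background and leaving $\{j, k\}$ as the only source of randomness entering the quantile.
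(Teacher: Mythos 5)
Your proof is correct and follows essentially the same route as the paper's: both arguments reduce to the observation that, with everything except the pair $(Z_j,Z_k)=\bigl((X_j,Y_j),(X_k,Y_k)\bigr)$ held fixed, the conformal quantile is a symmetric function of that pair on the joint event $\{T_j\leq \ermT_{(\kappa)}^{\gU_{-j}},\,T_k\leq \ermT_{(\kappa)}^{\gU_{-j}}\}$, so the difference of indicators is antisymmetric under the swap and its conditional expectation vanishes by exchangeability, with the tower rule finishing the argument. The only cosmetic difference is the conditioning device: the paper conditions on the unordered multiset $[\{Z_i\}_{i\in\gC\cup\{j\}}]$ and sums over index assignments (as in its proof of Lemma \ref{lemma:zero_gap_exchange_tau_1}), whereas you freeze $\{Z_i\}_{i\in\gC\setminus\{k\}}$ and swap the i.i.d.\ pair directly, which does not change the substance.
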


\begin{proof}
Invoking Lemma \ref{lemma:quantile_inflation}, we have the following relation
\begin{align}\label{eq:quantile_inflation_j_zero}
    \alpha - \frac{1}{|\hgS_c(\ermT_{(\hkappa+1)}^{\gU})|+1}\leq \frac{1}{|\hgS_c(\ermT_{(\hkappa+1)}^{\gU})| + 1}\sum_{i \in \hgS_c(\ermT_{(\hkappa+1)}^{\gU}) \cup \{j\}} \Indicator{R_i > \ermR^{\hgS_c(\ermT_{(\hkappa+1)}^{\gU})\cup \{j\}}} \leq \alpha.
\end{align}
\textbf{Upper bound.}
By \eqref{eq:SCOP_PI_j} and Lemma \ref{lemma:order_drop_one}(2), we have $\LRl{Y_j\not \in \PI_j^{\SCOP}} = \LRl{R_j > \ermR^{\hgS_c(\ermT_{(\hkappa+1)}^{\gU})\cup \{j\}}}$.
Rearranging the inequality in the right-hand side of \eqref{eq:quantile_inflation_j_zero} gives
\begin{align}\label{eq:conformal_miscover_gap_upper}
    \Indicator{Y_j\not \in \PI_j^{\SCOP}}&\leq \alpha + \Indicator{R_j > \ermR^{\hgS_c(\ermT_{(\hkappa+1)}^{\gU})\cup \{j\}}} - \frac{\sum_{i \in \hgS_c(\ermT_{(\hkappa+1)}^{\gU})\cup \{j\}}\Indicator{R_i > \ermR^{\hgS_c(\ermT_{(\hkappa+1)}^{\gU})\cup \{j\}}}}{|\hgS_c(\ermT_{(\hkappa+1)}^{\gU})| + 1}\nonumber\\
    &= \alpha - \frac{\sum_{k \in \hgS_c(\ermT_{(\hkappa+1)}^{\gU})} \LRm{\Indicator{R_k > \ermR^{\hgS_c(\ermT_{(\hkappa+1)}^{\gU})\cup \{j\}}} - \Indicator{R_j > \ermR^{\hgS_c(\ermT_{(\hkappa+1)}^{\gU})\cup \{j\}}}}}{|\hgS_c(\ermT_{(\hkappa+1)}^{\gU})| + 1}\nonumber\\
    &= \alpha - \frac{1}{|\hgS_c(\ermT_{(\hkappa+1)}^{\gU})| + 1}\sum_{k \in \hgS_c(\ermT_{(\hkappa+1)}^{\gU})} \delta_{j,k}(\ermT_{(\hkappa+1)}^{\gU}).
\end{align}
Because the assumption $\hkappa \leq m-1$, we can upper bound $\FCR$ by
\begin{align}
    \FCR &= \sum_{\kappa = 1}^{m-1} \sum_{j\in \gU} \frac{1}{\kappa}\E\LRm{\Indicator{Y_j \not \in \PI_j^{\SCOP}}\Indicator{\hkappa = \kappa} \Indicator{j\in \hgS_u}}\nonumber\\
    &\Eqmark{i}{\leq} \sum_{\kappa = 1}^{m-1} \sum_{j\in \gU} \frac{1}{\kappa}\E\LRm{\Indicator{\hkappa = \kappa} \Indicator{T_j \leq \ermT_{(\kappa)}^{\gU}} \LRl{\alpha - \frac{\sum_{k \in \hgS_c(\ermT_{(\kappa+1)}^{\gU})} \delta_{j,k}(\ermT_{(\kappa+1)}^{\gU})}{|\hgS_c(\ermT_{(\kappa+1)}^{\gU})| + 1}}}\nonumber\\
    &\Eqmark{ii}{=} \alpha - \sum_{\kappa = 1}^{m-1} \sum_{j\in \gU} \frac{1}{\kappa}\E\LRm{\Indicator{\hkappa^{(j)} = \kappa} \Indicator{T_j \leq \ermT_{(\kappa)}^{\gU}} \frac{\sum_{k \in \hgS_c(\ermT_{(\kappa+1)}^{\gU})} \delta_{j,k}(\ermT_{(\kappa+1)}^{\gU})}{|\hgS_c(\ermT_{(\kappa+1)}^{\gU})| + 1}}\nonumber\\
    &\Eqmark{iii}{=} \alpha - \sum_{\kappa = 1}^{m-1} \sum_{j\in \gU} \frac{1}{\kappa}\E\LRm{\Indicator{\hkappa^{(j)} = \kappa} \Indicator{T_j \leq \ermT_{(\kappa)}^{\gU}} \frac{\sum_{k \in \hgS_c(\ermT_{(\kappa)}^{\gU_{-j}})} \delta_{j,k}(\ermT_{(\kappa)}^{\gU_{-j}})}{|\hgS_c(\ermT_{(\kappa)}^{\gU_{-j}})| + 1}}\nonumber\\
    &\Eqmark{iv}{=} \alpha - \sum_{\kappa = 1}^{m-1} \sum_{j\in \gU} \frac{1}{\kappa}\E\LRm{\Indicator{\hkappa^{(j)} = \kappa} \Indicator{T_j \leq \ermT_{(\kappa)}^{\gU_{-j}}} \frac{\sum_{k \in \hgS_c(\ermT_{(\kappa)}^{\gU_{-j}})} \delta_{j,k}(\ermT_{(\kappa)}^{\gU_{-j}})}{|\hgS_c(\ermT_{(\kappa)}^{\gU_{-j}})| + 1}}\nonumber\\
    &= \alpha - \sum_{\kappa = 1}^{m-1} \sum_{j\in \gU} \frac{1}{\kappa}\E\LRm{\Indicator{\hkappa^{(j)} = \kappa} \sum_{k\in \gC}\Indicator{T_j \leq \ermT_{(\kappa)}^{\gU_{-j}}, T_k \leq \ermT_{(\kappa)}^{\gU_{-j}}} \frac{ \delta_{j,k}(\ermT_{(\kappa)}^{\gU_{-j}})}{|\hgS_c(\ermT_{(\kappa)}^{\gU_{-j}})| + 1}}\nonumber\\
    &\Eqmark{v}{=} \alpha\nonumber,
\end{align}
where $(i)$ holds due to \eqref{eq:conformal_miscover_gap_upper}; $(ii)$ holds due to $\hkappa^{(j)} = \hkappa$ (c.f. Assumption \ref{assum:htau}) and $\sum_{j\in \gU}\mathds{1}\{j\in \hgS_u\} = \hkappa$; $(iii)$ follows from $\ermT_{(\kappa+1)}^{\gU} = \ermT_{(\kappa)}^{\gU_{-j}}$ when $T_j \leq \ermT_{(\kappa)}^{\gU}$ by Lemma \ref{lemma:order_drop_one}(1); $(iv)$ is true since $\{T_j \leq \ermT_{(\kappa)}^{\gU}\} = \{T_j \leq \ermT_{(\kappa)}^{\gU_{-j}}\}$ by Lemma \ref{lemma:order_drop_one}(2); and $(v)$ holds since Lemma \ref{lemma:zero_gap_ranking} and $\Indicator{\hkappa^{(j)} = \kappa}$ depends only on $\gD_{\gU_{-j}}$. 
\newline
%In conjunction with \eqref{eq:zero_gap}, we can obtain the desired upper bound $\FCR \leq \alpha$.
\textbf{Lower bound.}
Rearranging the inequality in the left-hand side of \eqref{eq:quantile_inflation_j_zero} gives
\begin{align}\label{eq:conformal_miscover_gap_lower}
    \Indicator{j\not \in \PI_j^{\SCOP}} &\geq \alpha - \frac{1}{|\hgS_c(\ermT_{(\hkappa+1)}^{\gU})| + 1} - \frac{1}{|\hgS_c(\ermT_{(\hkappa+1)}^{\gU})| + 1}\sum_{k \in \hgS_c(\ermT_{(\hkappa+1)}^{\gU}) } \delta_{j,k}(\ermT_{(\hkappa+1)}^{\gU}).
\end{align}
Now, using the lower bound \eqref{eq:conformal_miscover_gap_lower} and Lemma \ref{lemma:zero_gap_ranking}, we can have
\begin{align}\label{eq:FCR_lower_expand}
    \FCR &\geq \sum_{\kappa = 1}^{m-1} \sum_{j\in \gU} \frac{1}{\kappa}\E\LRm{\Indicator{\hkappa = \kappa} \Indicator{j\in \hgS_u} \LRl{\alpha - \frac{1 + \sum_{k \in \hgS_c(\ermT_{(\kappa+1)}^{\gU})} \delta_{j,k}(\ermT_{(\kappa+1)}^{\gU})}{|\hgS_c(\ermT_{(\kappa+1)}^{\gU})| + 1} }}\nonumber\\
    &= \alpha - \E\LRm{\frac{1}{|\hgS_c(\ermT_{(\hkappa+1)}^{\gU})|+1}}.
\end{align}
Notice that given the test data $\gD_{u}$, $|\hgS_c(\ermT_{(\hkappa+1)}^{\gU})| = \sum_{i\in \gC} \mathds{1}\{T_i \leq \ermT_{(\hkappa+1)}^{\gU}\} \sim \operatorname{Binomial}(n, p)$, where $p = \ermT_{(\hkappa+1)}^{\gU}$. Hence we have the following result
\begin{align*}
    \E\LRm{\frac{1}{|\hgS_c(\ermT_{(\hkappa+1)}^{\gU})|+1} \mid \gD_u} &= \sum_{s = 0}^n \frac{1}{s+1} \frac{n!}{s!(n-s)!} p^s(1 - p)^{n-s}\\
    &= \sum_{s = 0}^n \frac{1}{(n+1)p} \frac{(n+1)!}{(s+1)!(n-s)!} p^{s+1}(1 - p)^{n-s} \\
    &= \frac{1}{(n+1)p} \LRl{1 - \LRs{1 - p}^{n+1}}.
\end{align*}
Substituting it into \eqref{eq:FCR_lower_expand}, we can get the desired lower bound.
\end{proof}

\subsection{Proof of Lemma \ref{lemma:zero_gap_ranking}} \label{proof:lemma:zero_gap_ranking}
\begin{proof}
Let us first define a new set $\hgS_{c,+j} = \LRl{i\in \gC_{+j}: T_i \leq \ermT_{(\kappa)}^{\gU_{-j}}}$.
Then we know $\hgS_{c,+j} = \hgS_c(\ermT_{(\kappa)}^{\gU_{-j}}) \cup \{j\}$ and $|\hgS_{c,+j}| = |\hgS_c(\ermT_{(\kappa)}^{\gU_{-j}})|+1$ under the event $T_j \leq \ermT_{(\kappa)}^{\gU_{-j}}$. Therefore, it suffices to show
\begin{align}\label{eq:zero_gap_conclusion_trans}
    &\E_{\gU_{-j}}\LRm{\frac{\Indicator{R_j > \ermR_{\lceil (1-\alpha)|\hgS_{c,+j}|\rceil}^{\hgS_{c,+j}}, T_j\leq \ermT_{(\kappa)}^{\gU_{-j}}, T_k\leq \ermT_{(\kappa)}^{\gU_{-j}}}}{|\hgS_{c,+j}|}}\nonumber\\
    =& \E_{\gU_{-j}}\LRm{\frac{\Indicator{R_k > \ermR_{\lceil (1-\alpha)|\hgS_{c,+j}|\rceil}^{\hgS_{c,+j}}, T_j\leq \ermT_{(\kappa)}^{\gU_{-j}}, T_k\leq \ermT_{(\kappa)}^{\gU_{-j}}}}{|\hgS_{c,+j}|}}.
\end{align}
Denote $Z_i = (X_i,Y_i)$ for $i\in \gC \cup \gU$ and define the event
\begin{align*}
    \mathcal{A}(z; \gD_{\gC_{+j}}) = \LRl{[\{Z_i\}_{i\in \gC_{+j}}] = [z_1,\ldots,z_{n+1}]},
\end{align*}
where $z = [z_1,\ldots,z_{n+1}]$.
Given the data $\gD_{\gU_{-j}}$, the quantity $\ermT_{(\kappa)}^{\gU_{-j}}$ is fixed.  Under the event $\mathcal{A}(z; \gD_{\gC_{+j}})$, the following two unordered sets are fixed
\begin{align*}
    &\LRm{\LRl{\Indicator{T_i \leq \ermT_{(\kappa)}^{\gU_{-j}}}}_{i\in \gC_{+j}}} \mid \mathcal{A}(z; \gD_{\gC_{+j}}) = \LRm{\Indicator{t_1 \leq \ermT_{(\kappa)}^{\gU_{-j}}},\ldots,\Indicator{t_{n+1} \leq \ermT_{(\kappa)}^{\gU_{-j}}}},\\
    &\LRm{\LRl{R_i\Indicator{T_i \leq \ermT_{(\kappa)}^{\gU_{-j}}}}_{i\in \gC_{+j}}} \mid \mathcal{A}(z; \gD_{\gC_{+j}}) = \LRm{r_1 \Indicator{t_1 \leq \ermT_{(\kappa)}^{\gU_{-j}}},\ldots,r_{n+1} \Indicator{t_{n+1} \leq \ermT_{(\kappa)}^{\gU_{-j}}}}.
\end{align*}
It means that $|\hgS_{c,+j}|$ and the quantile $\ermR_{(\lceil(1-\alpha)|\hgS_{c,+j}|\rceil)}^{\hgS_{c,+j}}$ are also fixed and are functions of $z$ and $\ermT_{(\kappa)}^{\gU_{-j}}$. Then \eqref{eq:zero_gap_conclusion_trans} can be verified through similar analysis in Section \ref{proof:lemma:zero_gap_exchange_tau_1}.
\end{proof}

%\subsection{Calibration-assisted selection}
\subsection{Proof of Theorem \ref{thm:FCR_bound_cal}}

\textbf{Theorem \ref{thm:FCR_bound_cal} restated.}
 \textit{Suppose Assumptions \ref{assum:continuous}, \ref{assum:decouple_k} and \ref{assum:joint_CDF} hold.
If $n \gamma\geq 256\log n$ and $m\gamma \geq 32\log m$, the $\FCR$ value of Algorithm \ref{alg:SCOP}+ for the calibration-assisted selection procedures can be controlled at
\begin{align}
    \FCR^{\SCOP} - \alpha = O\LRl{\frac{1}{\rho \gamma^3}\LRs{\frac{I_c \log m}{ m} + \frac{\log n}{n}} + \frac{I_u \log m}{m\gamma}}.\nonumber
\end{align}}

\begin{proof}
According to Assumption \ref{assum:htau} such that $\hkappa \leq m-1$, we can upper bound $\FCR$ by
\begin{align}\label{eq:FCR_upper_M_12j}
    \FCR &= \sum_{\kappa = 1}^{m-1} \sum_{j\in \gU} \frac{1}{\kappa}\E\LRm{\Indicator{Y_j \not \in \PI_j^{\SCOP}}\Indicator{\hkappa = \kappa} \Indicator{j\in \hgS_u}}\nonumber\\
    &\leq \alpha - \sum_{\kappa = 1}^{m-1} \sum_{j\in \gU} \frac{1}{\kappa}\E\LRm{\Indicator{\hkappa = \kappa} \Indicator{j\in \hgS_u} \frac{\sum_{k \in \hgS_c(\ermT_{(\kappa+1)}^{\gU})} \delta_{j,k}(\ermT_{(\kappa+1)}^{\gU})}{|\hgS_c(\ermT_{(\kappa+1)}^{\gU})| + 1}}\nonumber\\
    &\Eqmark{i}{=} \alpha - \sum_{\kappa = 1}^{m-1} \sum_{j\in \gU} \frac{1}{\kappa}\E\LRm{\Indicator{\hkappa^{(j)} = \kappa} \Indicator{j\in \hgS_u} \frac{\sum_{k \in \hgS_c(\ermT_{(\kappa+1)}^{\gU})} \delta_{j,k}(\ermT_{(\kappa+1)}^{\gU})}{|\hgS_c(\ermT_{(\kappa+1)}^{\gU})| + 1}}\nonumber\\
    &\Eqmark{ii}{\leq} \alpha - \sum_{\kappa = 1}^{m-1} \sum_{j\in \gU} \frac{1}{\kappa}\E\LRm{\Indicator{\hkappa^{(j)} = \kappa} \Indicator{T_j \leq \ermT_{(\kappa)}^{\gU}} \frac{\sum_{k \in \hgS_c(\ermT_{(\kappa+1)}^{\gU})} \delta_{j,k}(\ermT_{(\kappa+1)}^{\gU})}{|\hgS_c(\ermT_{(\kappa+1)}^{\gU})| + 1}}\nonumber\\
    &\quad + \sum_{\kappa = 1}^{m-1} \sum_{j\in \gU} \frac{1}{\kappa}\E\LRm{\Indicator{\hkappa^{(j)} = \kappa} \Indicator{\ermT_{(\hkappa)}^{\gU} < T_j \leq \ermT_{(\hkappa^{(j)})}^{\gU}} }\nonumber\\
    &\Eqmark{iii}{\leq} \alpha - \sum_{j\in \gU} \underbrace{\sum_{\kappa = 1}^{m-1}\frac{1}{\kappa}\E\LRm{\Indicator{\hkappa^{(j)} = \kappa} \Indicator{T_j \leq \ermT_{(\kappa)}^{\gU_{-j}}} \frac{\sum_{k \in \hgS_c(\ermT_{(\kappa)}^{\gU_{-j}})} \delta_{j,k}(\ermT_{(\kappa)}^{\gU_{-j}})}{|\hgS_c(\ermT_{(\kappa)}^{\gU_{-j}})| + 1}}}_{\gM_{1,j}}\nonumber\\
    &\quad +  \sum_{j\in \gU} \underbrace{\sum_{\kappa = 1}^{m-1}\frac{1}{\kappa}\E\LRm{\Indicator{\hkappa^{(j)} = \kappa} \Indicator{\ermT_{(\hkappa^{(j)}-I_u)}^{\gU} < T_j \leq \ermT_{(\hkappa^{(j)})}^{\gU}} }}_{\gM_{2,j}},
    %&\quad +\sum_{\kappa = 1}^{m-1} \sum_{j\in \gU} \frac{1}{\kappa}\E\LRm{ \LRs{\Indicator{\hkappa^{(j,k)} = \kappa} - \Indicator{\hkappa^{(j)} = \kappa}}\Indicator{j\in\hgS_u} \frac{\sum_{k \in \hgS_c(\ermT_{(\kappa+1)}^{\gU})} \delta_{j,k}(\ermT_{(\kappa+1)}^{\gU})}{|\hgS_c(\ermT_{(\kappa+1)}^{\gU})| + 1} },
\end{align}
where $(i)$ holds due to Assumption \ref{assum:decouple_k} such that $\hkappa^{(j)} = \hkappa$ if $j \in \hgS_u$; $(ii)$ follows from $|\delta_{j,k}(\ermT_{(\kappa+1)}^{\gU})| \leq 1$ and $\{j\in \hgS_u\} = \{T_j \leq \ermT_{(\hkappa)}^{\gU}\}$; and $(iii)$ holds since $\hkappa^{(j)} \leq \hkappa + I_u$ by Assumption \ref{assum:decouple_k}.
Applying Lemma \ref{lemma:uniform_spacing_and_ratio} gives
\begin{align}%\label{eq:Iu_gap_concentration}
    &\sP\LRs{\ermT_{(\hkappa^{(j)})}^{\gU_{-j}} - \ermT_{(\hkappa^{(j)} - I_u)}^{\gU_{-j}} \geq I_u \cdot \frac{1}{1 - 2\sqrt{\frac{C\log m}{m}}} \frac{2C \log m}{m}}\nonumber\\
    &\qquad \leq \sP\LRs{\max_{0\leq \ell \leq m-1}\LRl{\ermT_{(\ell+1)}^{\gU_{-j}} - \ermT_{(\ell)}^{\gU_{-j}}} \geq \frac{1}{1 - 2\sqrt{\frac{C\log m}{m}}} \frac{2C \log m}{m}} \leq 2m^{-C}.\nonumber
\end{align}
Now taking $C=2$ and using $\hkappa^{(j)}\geq \hkappa \geq \lceil \gamma m \rceil$ in Assumption \ref{assum:decouple_k}, we can bound $\gM_{2,j}$ through
\begin{align}\label{eq:M_2j_upper}
    \gM_{2,j} &= \sum_{\kappa = 1}^{m-1}\frac{1}{\kappa}\E\LRm{\Indicator{\hkappa^{(j)} = \kappa} \Indicator{\ermT_{(\hkappa^{(j)}-I_u)}^{\gU_{-j}} < T_j \leq \ermT_{(\hkappa^{(j)})}^{\gU_{-j}}} }\nonumber\\
    &= \E\LRm{\frac{\ermT_{(\hkappa^{(j,k)})}^{\gU_{-j}} - \ermT_{(\hkappa^{(j)}-I_u)}^{\gU_{-j}}}{\hkappa^{(j)}}}\nonumber\\
    &\leq \frac{8I_u \log m}{m^2 \gamma} + 2m^{-2}\nonumber\\
    &\leq \frac{16 I_u \log m}{m^2 \gamma}.
\end{align}
where we used Lemma \ref{lemma:order_drop_one} since $\kappa \leq m-1$ and the fact $32\log m \leq m$.

In the following proof, for $j\in \gU$ and $k\in \gC$, we will use the following notations: 
\begin{itemize}
    \item $\E_{-j}[\cdot] = \E[\cdot \mid \{Z_i\}_{i \in \gC \cup (\gU \setminus \{j\})}]$;
    \item $\E_{-k}[\cdot]=\E_{\cdot}[\cdot \mid \{Z_i\}_{i \in \gU \cup (\gC \setminus \{k\})}]$;
    \item $\E_{-(j,k)}[\cdot] = \E[\cdot \mid \{Z_i\}_{i\in (\gC \cup \gU) \setminus \{j,k\}}]$;
    \item $\hgS_{c_{-k}}(\kappa) = \{i\in \gC\setminus \{k\}: T_i \leq \ermT_{(\kappa)}^{\gU_{-j}}\}$;
    \item $\hgS_{c}(\kappa)\equiv \hgS_{c}(\ermT_{(\kappa)}^{\gU_{-j}})$ and $\delta_{j,k}(\kappa)\equiv \delta_{j,k}(\ermT_{(\kappa)}^{\gU_{-j}})$.
\end{itemize}
Next we may decompose $\gM_{1,j}$ by
\begin{align}\label{eq:cal-assis-upper}
    \gM_{1,j}
    &\Eqmark{i}{=} \sum_{\kappa = 1}^{m-1}\sum_{k\in \gC} \frac{1}{\kappa}\E\LRm{ \Indicator{\hkappa^{(j,k)} = \kappa} \frac{\delta_{j,k}(\kappa) \Indicator{T_k \leq \ermT_{(\kappa)}^{\gU_{-j}}, T_j\leq \ermT_{(\kappa)}^{\gU_{-j}}}}{|\hgS_{c_{-k}}(\kappa)|+2} }\nonumber\\
    & + \sum_{\kappa = 1}^{m-1}\sum_{k\in \gC} \frac{1}{\kappa}\E\LRm{ \LRs{\Indicator{\hkappa^{(j)} = \kappa}-\Indicator{\hkappa^{(j,k)} = \kappa}} \frac{\delta_{j,k}(\kappa) \Indicator{T_k \leq \ermT_{(\kappa)}^{\gU_{-j}}, T_j\leq \ermT_{(\kappa)}^{\gU_{-j}}}}{|\hgS_{c_{-k}}(\kappa)|+2} }\nonumber\\
    &\Eqmark{ii}{=} \sum_{k\in \gC}\Bigg\{\E\LRm{\frac{1}{\hkappa^{(j,k)}}\frac{\delta_{j,k}(\hkappa^{(j,k)})}{|\hgS_{c_{-k}}(\hkappa^{(j,k)})| + 2}\Indicator{T_k \leq \ermT_{(\hkappa^{(j,k)})}^{\gU_{-j}}, T_j\leq \ermT_{(\hkappa^{(j,k)})}^{\gU_{-j}}} }\nonumber\\
    &\qquad \qquad \qquad - \E\LRm{\frac{1}{\hkappa^{(j)}}\frac{\delta_{j,k}(\hkappa^{(j)})}{|\hgS_{c_{-k}}(\hkappa^{(j)})| + 2}\Indicator{T_k \leq \ermT_{(\hkappa^{(j)})}^{\gU_{-j}}, T_j\leq \ermT_{(\hkappa^{(j)})}^{\gU_{-j}}} }\Bigg\},
\end{align}
where $(i)$ holds due to facts $\hgS_{c_{-k}}(\kappa) \cup \{k\} = \hgS_c(\kappa)$ and $|\hgS_c(\kappa)| = |\hgS_{c_{-k}}(\kappa)|+1$ under the event $T_k \leq \ermT_{(\kappa)}^{\gU_{-j}}$; and $(ii)$ holds because of the exchangeability between $(R_j,T_j)$ and $(R_k,T_k)$ such that
\begin{align}
    &\E_{-(j,k)}\LRm{\delta_{j,k}(\kappa)\Indicator{T_k \leq \ermT_{(\kappa)}^{\gU_{-j}}, T_j\leq \ermT_{(\kappa)}^{\gU_{-j}}}}\nonumber\\
    &\qquad = \E_{-(j,k)}\LRm{\Indicator{T_k \leq \ermT_{(\kappa)}^{\gU_{-j}}, T_j\leq \ermT_{(\kappa)}^{\gU_{-j}}} \Indicator{R_j > \ermR_{(\lceil(1-\alpha)(|\hgS_{c_{-k}}(\kappa)|+2)\rceil)}^{\hgS_{c_{-k}}(\kappa)\cup \{j,k\}}}}\nonumber\\
    &\qquad - \E_{-(j,k)}\LRm{\Indicator{T_k \leq \ermT_{(\kappa)}^{\gU_{-j}}, T_j\leq \ermT_{(\kappa)}^{\gU_{-j}}}  \Indicator{R_k > \ermR_{(\lceil(1-\alpha)(|\hgS_{c_{-k}}(\kappa)|+2)\rceil)}^{\hgS_{c_{-k}}(\kappa)\cup \{j,k\}}} }\nonumber\\
    &\qquad = 0.\nonumber
\end{align}
In fact, the value of $\ermR_{(\lceil(1-\alpha)(|\hgS_{c_{-k}}(\kappa)|+2)\rceil)}^{\hgS_{c_{-k}}(\kappa)\cup \{j,k\}}$ depends on only the unordered set $[Z_k,Z_j]$ once given $\{Z_i\}_{i\in (\gC \cup \gU) \setminus \{j,k\}}$. Therefore, we can verify the relation above by similar arguments in Appendix \ref{proof:lemma:zero_gap_exchange_tau_1}. Further, we can upper bound \eqref{eq:cal-assis-upper} by
\begin{align}\label{eq:M_1j_expansion}
    \gM_{1,j} &\leq \sum_{k\in \gC}\underbrace{\E\LRm{\frac{1}{\hkappa^{(j,k)}} \frac{\delta_{j,k}(\hkappa^{(j,k)})}{|\hgS_{c_{-k}}(\hkappa^{(j,k)})| + 2}\Indicator{T_j\leq \ermT_{(\hkappa^{(j,k)})}^{\gU_{-j}}, \ermT_{(\hkappa^{(j)})}^{\gU_{-j}} < T_k \leq \ermT_{(\hkappa^{(j,k)})}^{\gU_{-j}}} }}_{\Delta_{jk, 1}}\nonumber\\
    & + \sum_{k\in \gC} \underbrace{\E\LRm{\frac{1}{\hkappa^{(j,k)}}\frac{\delta_{j,k}(\hkappa^{(j,k)}) - \delta_{j,k}(\hkappa^{(j)})}{|\hgS_{c_{-k}}(\hkappa^{(j,k)})| + 2}\Indicator{T_k \leq \ermT_{(\hkappa^{(j)})}^{\gU_{-j}}, T_j\leq \ermT_{(\hkappa^{(j,k)})}^{\gU_{-j}}} }}_{\Delta_{jk, 2}}\nonumber\\
    & + \sum_{k\in \gC}\underbrace{\E\LRm{\frac{1}{\hkappa^{(j,k)}}\frac{\delta_{j,k}(\hkappa^{(j)})}{|\hgS_{c_{-k}}(\hkappa^{(j,k)})| + 2} \Indicator{\ermT_{(\hkappa^{(j)})}^{\gU_{-j}} < T_j \leq \ermT_{(\hkappa^{(j,k)})}^{\gU_{-j}}, T_k\leq \ermT_{(\hkappa^{(j)})}^{\gU_{-j}}} }}_{\Delta_{jk, 3}}\nonumber\\
    & + \sum_{k\in \gC}\underbrace{\E\LRm{\LRs{\frac{1}{\hkappa^{(j,k)}} - \frac{1}{\hkappa^{(j)}}} \frac{\delta_{j,k}(\hkappa^{(j)})}{|\hgS_{c_{-k}}(\hkappa^{(j,k)})| + 2}\Indicator{T_k \leq \ermT_{(\hkappa^{(j)})}^{\gU_{-j}}, T_j\leq \ermT_{(\hkappa^{(j)})}^{\gU_{-j}}} }}_{\Delta_{jk,4}}\nonumber\\
    & + \sum_{k\in \gC}\underbrace{\E\LRm{\frac{\delta_{j,k}(\hkappa^{(j)})}{\hkappa^{(j)}}\LRs{\frac{1}{|\hgS_{c_{-k}}(\hkappa^{(j,k)})| + 2} - \frac{1}{|\hgS_{c_{-k}}(\hkappa^{(j)})| + 2}} \Indicator{T_j \leq \ermT_{(\hkappa^{(j)})}^{\gU_{-j}}, T_k\leq \ermT_{(\hkappa^{(j)})}^{\gU_{-j}}} }}_{\Delta_{jk, 5}}.
\end{align}
\textbf{Bound $|\Delta_{jk, 1}|$.} Using $\hkappa^{(j,k)} \leq \hkappa^{(j)} + I_c$ in Assumption \ref{assum:decouple_k}, we have
\begin{align}\label{eq:Delta_jk_1_bound}
    |\Delta_{jk, 1}| &\leq \E\LRm{\frac{1}{\hkappa^{(j,k)}} \frac{1}{|\hgS_{c_{-k}}(\hkappa^{(j,k)})| + 2}\Indicator{\ermT_{(\hkappa^{(j)})}^{\gU_{-j}} < T_k \leq \ermT_{(\hkappa^{(j,k)})}^{\gU_{-j}}, T_j\leq \ermT_{(\hkappa^{(j,k)})}^{\gU_{-j}}} }\nonumber\\
    &\leq \E\LRm{\frac{1}{\hkappa^{(j,k)}} \frac{1}{|\hgS_{c_{-k}}(\hkappa^{(j,k)})| + 2}\Indicator{\ermT_{(\hkappa^{(j,k)} - I_c)}^{\gU_{-j}} < T_k \leq \ermT_{(\hkappa^{(j,k)})}^{\gU_{-j}}, T_j\leq \ermT_{(\hkappa^{(j,k)})}^{\gU_{-j}}} }\nonumber\\
    &= \E\LRm{\frac{1}{\hkappa^{(j,k)}} \frac{1}{|\hgS_{c_{-k}}(\hkappa^{(j,k)})| + 2} \E_{-(j,k)}\LRm{\Indicator{\ermT_{(\hkappa^{(j,k)} - I_c)}^{\gU_{-j}} < T_k \leq \ermT_{(\hkappa^{(j,k)})}^{\gU_{-j}}, T_j\leq \ermT_{(\hkappa^{(j,k)})}^{\gU_{-j}}}} }\nonumber\\
    &= \E\LRm{\frac{\ermT_{(\hkappa^{(j,k)})}^{\gU_{-j}}}{\hkappa^{(j,k)}} \frac{\ermT_{(\hkappa^{(j,k)})}^{\gU_{-j}} - \ermT_{(\hkappa^{(j,k)} - I_c)}^{\gU_{-j}}}{|\hgS_{c_{-k}}(\hkappa^{(j,k)})| + 2} }.
\end{align}
\textbf{Bound $|\Delta_{jk, 2}|$.} We write the difference as
\begin{align*}
    \delta_{j,k}(\hkappa^{(j,k)}) - \delta_{j,k}(\hkappa^{(j)}) &= \Indicator{R_j > \ermR_{(\lceil(1-\alpha)(|\hgS_c(\hkappa^{(j,k)})|+1)\rceil)}^{\hgS_c(\hkappa^{(j,k)}) \cup \{j\}}} - \Indicator{R_j > \ermR_{(\lceil(1-\alpha)(|\hgS_c(\hkappa^{(j)})|+1)\rceil)}^{\hgS_c(\hkappa^{(j)}) \cup \{j\}}}\nonumber\\
    &+ \Indicator{R_k > \ermR_{(\lceil(1-\alpha)(|\hgS_c(\hkappa^{(j)})|+1)\rceil)}^{\hgS_c(\hkappa^{(j)})\cup \{j\}}} - \Indicator{R_k > \ermR_{(\lceil(1-\alpha)(|\hgS_c(\hkappa^{(j,k)})|+1)\rceil)}^{\hgS_c(\hkappa^{(j,k)})\cup \{j\}}}\nonumber\\
    &= \delta^{(j)}(R_j) - \delta^{(j)}(R_k).
\end{align*}
Then we can bound $\Delta_{jk,2}$ by
\begin{align}\label{eq:Delta_2_jk_expand}
    |\Delta_{jk, 2}| &\leq \LRabs{\E\LRm{\frac{1}{\hkappa^{(j,k)}}\frac{\delta^{(j)}(R_j)}{|\hgS_{c_{-k}}(\hkappa^{(j,k)})| + 2}\Indicator{T_k \leq \ermT_{(\hkappa^{(j)})}^{\gU_{-j}}, T_j\leq \ermT_{(\hkappa^{(j,k)})}^{\gU_{-j}}} }}\nonumber\\
    & + \LRabs{\E\LRm{\frac{1}{\hkappa^{(j,k)}}\frac{\delta^{(j)}(R_k)}{|\hgS_{c_{-k}}(\hkappa^{(j,k)})| + 2}\Indicator{T_k \leq \ermT_{(\hkappa^{(j)})}^{\gU_{-j}}, T_j\leq \ermT_{(\hkappa^{(j,k)})}^{\gU_{-j}}} }}\nonumber\\
    &\leq \E\LRm{\frac{1}{\hkappa^{(j,k)}}\frac{\LRabs{\delta^{(j)}(R_j)} }{|\hgS_{c_{-k}}(\hkappa^{(j,k)})| + 2} } + \LRabs{\E\LRm{\frac{1}{\hkappa^{(j,k)}}\frac{\delta^{(j)}(R_k) \Indicator{T_k \leq \ermT_{(\hkappa^{(j)})}^{\gU_{-j}}}}{|\hgS_{c_{-k}}(\hkappa^{(j,k)})| + 2} }}\nonumber\\
    &= \E\LRm{\frac{1}{\hkappa^{(j,k)}}\frac{1 }{|\hgS_{c_{-k}}(\hkappa^{(j,k)})| + 2} \E_{-j}\LRm{\LRabs{\delta^{(j)}(R_j)} } }\nonumber\\
    & + \LRabs{\E\LRm{\frac{1}{\hkappa^{(j,k)}}\frac{1}{|\hgS_{c_{-k}}(\hkappa^{(j,k)})| + 2} \E_{-k}\LRm{\delta^{(j)}(R_k) \Indicator{T_k \leq \ermT_{(\hkappa^{(j)})}^{\gU_{-j}}}} } },
\end{align}
where equality holds since $|\hgS_{c_{-k}}(\hkappa^{(j,k)})|$ is independent of samples $Z_j$ and $Z_k$. From the definition of $\delta^{(j)}(R_j)$, we have
\begin{align}\label{eq:delta_j_expand}
    &\E_{-j}\LRm{\LRabs{\delta^{(j)}(R_j)} }\nonumber\\
    &= \E_{-j} \LRm{\LRabs{\Indicator{R_j > \ermR_{(\lceil(1-\alpha)(|\hgS_c(\hkappa^{(j,k)})|+1)\rceil)}^{\hgS_c(\hkappa^{(j,k)})\cup\{j\}}} - \Indicator{R_j > \ermR_{(\lceil(1-\alpha)(|\hgS_c(\hkappa^{(j)})|+1)\rceil)}^{\hgS_c(\hkappa^{(j)})\cup\{j\}}}}}\nonumber\\
    &= \E_{-j} \LRm{\LRabs{\Indicator{R_j > \ermR_{(\lceil(1-\alpha)(|\hgS_c(\hkappa^{(j,k)})|+1)\rceil)}^{\hgS_c(\hkappa^{(j,k)})}} - \Indicator{R_j > \ermR_{(\lceil(1-\alpha)(|\hgS_c(\hkappa^{(j)})|+1)\rceil)}^{\hgS_c(\hkappa^{(j)})}}}},
\end{align}
where the last equality holds due to Lemma \ref{lemma:order_drop_one} by dropping $j$. Note that $\hgS_c(\hkappa^{(j)}) \subseteq \hgS_c(\hkappa^{(j,k)})$ holds because $\hkappa^{(j,k)} \geq \hkappa^{(j)}$ due to Assumption \ref{assum:decouple_k}. Define
\begin{align*}%\label{eq:def_djk}
     |\hgS_c(\hkappa^{(j,k)})| - |\hgS_c(\hkappa^{(j)})| &= \sum_{i\in \hgS_c(\hkappa^{(j,k)})} \Indicator{\ermT_{(\hkappa^{(j)})}^{\gU_{-j}} < T_i \leq \ermT_{(\hkappa^{(j,k)})}^{\gU_{-j}}}\\
     &\leq \sum_{i\in \hgS_c(\hkappa^{(j,k)})} \Indicator{\ermT_{(\hkappa^{(j,k)} - I_c)}^{\gU_{-j}} < T_i \leq \ermT_{(\hkappa^{(j,k)})}^{\gU_{-j}}}\\
     &=: d^{(j,k)},
\end{align*}
which is independent of samples $Z_j$ and $Z_k$. Next, we bound the rank of the quantile $\ermR_{(\lceil(1-\alpha)(|\hgS_c(\hkappa^{(j)})|+1)\rceil)}^{\hgS_c(\hkappa^{(j)})}$ in the set $\{R_i: i\in \hgS_c(\hkappa^{(j,k)})\}$. Notice that
\begin{align}\label{eq:rank_Sc_lower}
    &\sum_{i\in \hgS_c(\hkappa^{(j,k)})}\Indicator{R_i \leq \ermR_{(\lceil(1-\alpha)(|\hgS_c(\hkappa^{(j)})|+1)\rceil)}^{\hgS_c(\hkappa^{(j)})}}\nonumber\\
    &\qquad\geq \sum_{i\in \hgS_c(\hkappa^{(j)}) }\Indicator{R_i \leq \ermR_{(\lceil(1-\alpha)(|\hgS_c(\hkappa^{(j)})|+1)\rceil)}^{\hgS_c(\hkappa^{(j)})} }\nonumber\\
    &\qquad= \lceil(1-\alpha)(|\hgS_c(\hkappa^{(j)})|+1)\rceil \nonumber\\
    &\qquad\geq \lceil(1-\alpha)(|\hgS_c(\hkappa^{(j,k)})| - d^{(j,k)}+1)\rceil,
\end{align}
and
\begin{align}\label{eq:rank_Sc_upper}
    &\sum_{i\in \hgS_c(\hkappa^{(j,k)})} \Indicator{R_i \leq \ermR_{(\lceil(1-\alpha)(|\hgS_c(\hkappa^{(j)})|+1)\rceil)}^{\hgS_c(\hkappa^{(j)})}}\nonumber\\
    &\qquad\leq \sum_{i\in \hgS_c(\hkappa^{(j)})}\Indicator{R_i \leq \ermR_{(\lceil(1-\alpha)(|\hgS_c(\hkappa^{(j)})|+1)\rceil)}^{\hgS_c(\hkappa^{(j)})}} + d^{(j,k)} \nonumber\\
    &\qquad= \lceil(1-\alpha)(|\hgS_c(\hkappa^{(j)})|+1)\rceil + d^{(j,k)}\nonumber\\
    &\qquad\leq \lceil(1-\alpha)(|\hgS_c(\hkappa^{(j,k)})|+1)\rceil + d^{(j,k)}.
\end{align}
Combining \eqref{eq:rank_Sc_lower} and \eqref{eq:rank_Sc_upper}, we can guarantee
\begin{align}\label{eq:conformal_qunatile_change}
    &\LRabs{\ermR_{(\lceil(1-\alpha)(|\hgS_c(\hkappa^{(j,k)})|+1)\rceil)}^{\hgS_c(\hkappa^{(j,k)})} - \ermR_{(\lceil(1-\alpha)(|\hgS_c(\hkappa^{(j)})|+1)\rceil)}^{\hgS_c(\hkappa^{(j)})}}\nonumber\\
    &\qquad \leq  \max\Bigg\{\ermR_{(\lceil(1-\alpha)(|\hgS_c(\hkappa^{(j,k)})|+1)\rceil)}^{\hgS_c(\hkappa^{(j,k)})} - \ermR_{(\lceil(1-\alpha)(|\hgS_c(\hkappa^{(j,k)})|-d^{(j,k)} +1)\rceil)}^{\hgS_c(\hkappa^{(j,k)})},\nonumber\\
    &\qquad \qquad \qquad \qquad \ermR_{(\lceil(1-\alpha)(|\hgS_c(\hkappa^{(j,k)})|+1)\rceil + d^{(j,k)})}^{\hgS_c(\hkappa^{(j,k)})} - \ermR_{(\lceil(1-\alpha)(|\hgS_c(\hkappa^{(j,k)})|+1)\rceil)}^{\hgS_c(\hkappa^{(j,k)})} \Bigg\}\nonumber\\
    &\qquad \leq \ermR_{(\lceil(1-\alpha)(|\hgS_c(\hkappa^{(j,k)})|+1)\rceil + d^{(j,k)})}^{\hgS_c(\hkappa^{(j,k)})} - \ermR_{(\lceil(1-\alpha)( |\hgS_c(\hkappa^{(j,k)})| - d^{(j,k)} + 1)\rceil)}^{\hgS_c(\hkappa^{(j,k)})}\nonumber\\
    &\qquad = \ermR_{(U^{(j,k)})}^{\hgS_c(\hkappa^{(j,k)})} - \ermR_{(L^{(j,k)})}^{\hgS_c(\hkappa^{(j,k)})},
\end{align}
where $U^{(j,k)} = \lceil(1-\alpha)(|\hgS_c(\hkappa^{(j,k)})|+1)\rceil + d^{(j,k)}$ and $L^{(j,k)} = \lceil(1-\alpha)( |\hgS_c(\hkappa^{(j,k)})| - d^{(j,k)} + 1)\rceil$. Plugging \eqref{eq:conformal_qunatile_change} into \eqref{eq:delta_j_expand}, we can get
\begin{align}\label{eq:delta_j_bound}
    \E_{-j}\LRm{\LRabs{\delta^{(j)}(R_j)}} &\leq \sP_{-j}\LRs{\ermR_{(L^{(j,k)})}^{\hgS_c(\hkappa^{(j,k)})} < T_j \leq \ermR_{(U^{(j,k)})}^{\hgS_c(\hkappa^{(j,k)})}} = \ermR_{(U^{(j,k)})}^{\hgS_c(\hkappa^{(j,k)})} - \ermR_{(L^{(j,k)})}^{\hgS_c(\hkappa^{(j,k)})}.
\end{align}
For $\delta^{(j)}(R_k)$ in \eqref{eq:Delta_2_jk_expand}, using similar arguments, we can still get
\begin{align}\label{eq:delta_k_bound}
    &\LRabs{\E_{-k}\LRm{\delta^{(j)}(R_k) \Indicator{T_k \leq \ermT_{(\hkappa^{(j)})}^{\gU_{-j}}} }}\nonumber\\
    &\leq \E_{-k} \LRm{\Indicator{T_k \leq \ermT_{(\hkappa^{(j)})}^{\gU_{-j}}} \LRabs{\Indicator{R_k > \ermR_{(\lceil (1 - \alpha)(|\hgS_c(\hkappa^{(j,k)})|+1)\rceil)}^{\hgS_c(\hkappa^{(j,k)})\cup\{j\}}} - \Indicator{R_k > \ermR_{(\lceil (1 - \alpha)(|\hgS_c(\hkappa^{(j)})|+1) \rceil)}^{\hgS_c(\hkappa^{(j)})\cup\{j\}}}} }\nonumber\\
    &\Eqmark{i}{=} \E_{-k} \LRm{\Indicator{T_k \leq \ermT_{(\hkappa^{(j)})}^{\gU_{-j}}}\LRabs{\Indicator{R_k > \ermR_{(\lceil (1 - \alpha)(|\hgS_c(\hkappa^{(j,k)})|+1)\rceil)}^{\hgS_{c_{-k}}(\hkappa^{(j,k)})\cup\{j\}}} - \Indicator{R_k > \ermR_{(\lceil (1 - \alpha)(|\hgS_c(\hkappa^{(j)})|+1) \rceil)}^{\hgS_{c_{-k}}(\hkappa^{(j)})\cup\{j\}}}} }\nonumber\\
    &\Eqmark{ii}{\leq} \E_{-k} \LRm{\Indicator{\ermR_{(L^{(j,k)})}^{\hgS_{c_{-k}}(\hkappa^{(j,k)}) \cup\{j\}} < R_k \leq \ermR_{(U^{(j,k)})}^{\hgS_{c_{-k}}(\hkappa^{(j,k)}) \cup\{j\}}} }\nonumber\\
    &\leq \ermR_{(U^{(j,k)})}^{\hgS_{c_{-k}}(\hkappa^{(j,k)})\cup\{j\}} - \ermR_{(L^{(j,k)})}^{\hgS_{c_{-k}}(\hkappa^{(j,k)})\cup\{j\}}\nonumber\\
    &\Eqmark{iii}{\leq} \ermR_{(U^{(j,k)} + 2)}^{\hgS_c(\hkappa^{(j,k)})} - \ermR_{(L^{(j,k)}-2)}^{\hgS_c(\hkappa^{(j,k)})},
\end{align}
where the equality $(i)$ holds because $k\in \hgS_c(\hkappa^{(j)}) \subseteq \hgS_c(\hkappa^{(j,k)})$ under the event $\{T_k \leq \ermT_{(\hkappa^{(j)})}^{\gU_{-j}}\}$, so we can apply Lemma \ref{lemma:order_drop_one}; $(ii)$ holds due to $|\hgS_{c_{-k}}(\hkappa^{(j,k)})| - |\hgS_{c_{-k}}(\hkappa^{(j)})| = |\hgS_{c}(\hkappa^{(j,k)})| - |\hgS_{c}(\hkappa^{(j)})|$ and dropping the $\Indicator{T_k \leq \ermT_{(\hkappa^{(j)})}^{\gU_{-j}}}$; and $(iii)$ follows from Lemma \ref{lemma:order_drop_one} twice with respective to dropping $j$ and adding $k$. Substituting \eqref{eq:delta_j_bound} and \eqref{eq:delta_k_bound} into \eqref{eq:Delta_2_jk_expand} leads
\begin{align}\label{eq:Delta_jk_2_bound}
    |\Delta_{jk, 2}| &\leq \E\LRm{\frac{1}{\hkappa^{(j,k)}}\frac{2 \LRs{\ermR_{(U^{(j,k)}+2)}^{\hgS_c(\hkappa^{(j,k)})} - \ermR_{(L^{(j,k)}-2)}^{\hgS_c(\hkappa^{(j,k)})}}}{|\hgS_{c_{-k}}(\hkappa^{(j,k)})| + 2}}.
\end{align}
\textbf{Bound $|\Delta_{jk, 3}|$.}
Using the assumption $\hkappa^{(j)}\leq \hkappa^{(j,k)} \leq \hkappa^{(j)} + I_c$, we have
\begin{align}\label{eq:Delta_jk_3_bound}
    |\Delta_{jk, 3}| 
    &\Eqmark{i}{=} \left|\E\LRm{\frac{1}{\hkappa^{(j,k)}}\frac{\delta_{j,k}(\hkappa^{(j,k)})}{|\hgS_{c_{-k}}(\hkappa^{(j)})| + 2} \Indicator{\ermT_{(\hkappa^{(j)})}^{\gU_{-j}} < T_j \leq \ermT_{(\hkappa^{(j,k)})}^{\gU_{-j}}, T_k\leq \ermT_{(\hkappa^{(j)})}^{\gU_{-j}}} }\right|\nonumber\\
    &\leq \E\LRm{\frac{1}{\hkappa^{(j,k)}}\frac{\Indicator{ T_k\leq \ermT_{(\hkappa^{(j)})}^{\gU_{-j}}}}{|\hgS_{c_{-k}}(\hkappa^{(j)})| + 2}\sP_{-j}\LRs{\ermT_{(\hkappa^{(j)})}^{\gU_{-j}} < T_j \leq \ermT_{(\hkappa^{(j,k)})}^{\gU_{-j}}} }\nonumber\\
    &\leq \E\LRm{\frac{\ermT_{(\hkappa^{(j,k)})}^{\gU_{-j}} - \ermT_{(\hkappa^{(j)})}^{\gU_{-j}}}{\hkappa^{(j,k)}}\frac{\Indicator{ T_k\leq \ermT_{(\hkappa^{(j)})}^{\gU_{-j}}}}{|\hgS_{c_{-k}}(\hkappa^{(j)})| + 2} }\nonumber\\
    &= \E\LRm{\frac{\ermT_{(\hkappa^{(j,k)})}^{\gU_{-j}} - \ermT_{(\hkappa^{(j)})}^{\gU_{-j}}}{\hkappa^{(j,k)}}\frac{\Indicator{ T_k\leq \ermT_{(\hkappa^{(j)})}^{\gU_{-j}}}}{|\hgS_c(\hkappa^{(j)})| + 1} }\nonumber\\
    &\Eqmark{ii}{\leq} \E\LRm{\frac{\ermT_{(\hkappa^{(j,k)})}^{\gU_{-j}} - \ermT_{(\hkappa^{(j)})}^{\gU_{-j}}}{\gamma m}\frac{\Indicator{ T_k\leq \ermT_{(\hkappa^{(j)})}^{\gU_{-j}}}}{|\hgS_c(\hkappa^{(j)})| + 1} },
\end{align}
where $(i)$ holds due to $\hgS_{c_{-k}}(\hkappa^{(j)}) \cup \{k\} = \hgS_c(\hkappa^{(j)})$ under the event $T_k\leq \ermT_{(\hkappa^{(j)})}^{\gU_{-j}}$; and $(ii)$ follows from $\hkappa^{(j,k)} \geq \hkappa^{(j)} \geq \hkappa \geq \lceil \gamma m \rceil$ in Assumption \ref{assum:decouple_k}.\newline
\textbf{Bound $|\Delta_{jk, 4}|$.}
For this term, we have
\begin{align}\label{eq:Delta_jk_4_bound}
    |\Delta_{jk, 4}| &\leq \E\LRm{\LRs{\frac{1}{\hkappa^{(j)}} - \frac{1}{\hkappa^{(j,k)}} } \frac{1}{|\hgS_{c_{-k}}(\hkappa^{(j)})| + 2}\Indicator{T_k \leq \ermT_{(\hkappa^{(j)})}^{\gU_{-j}}, T_j\leq \ermT_{(\hkappa^{(j)})}^{\gU_{-j}}} }\nonumber\\
    &\leq \E\LRm{\LRs{\frac{1}{\hkappa^{(j)}} - \frac{1}{\hkappa^{(j,k)}} } \frac{1}{|\hgS_{c_{-k}}(\hkappa^{(j)})| + 2}\Indicator{T_k \leq \ermT_{(\hkappa^{(j)})}^{\gU_{-j}}, T_j\leq \ermT_{(\hkappa^{(j,k)})}^{\gU_{-j}}} }\nonumber\\
    &= \E\LRm{\frac{\ermT_{(\hkappa^{(j,k)})}^{\gU_{-j}}}{\hkappa^{(j,k)}} \frac{\hkappa^{(j,k)} - \hkappa^{(j)}}{\hkappa^{(j)}} \frac{\Indicator{T_k \leq \ermT_{(\hkappa^{(j)})}^{\gU_{-j}}} }{|\hgS_{c_{-k}}(\hkappa^{(j)})| + 2} }\nonumber\\
    &= \E\LRm{\frac{\ermT_{(\hkappa^{(j,k)})}^{\gU_{-j}}}{\hkappa^{(j,k)}} \frac{\hkappa^{(j,k)} - \hkappa^{(j)}}{\hkappa^{(j)}} \frac{\Indicator{T_k \leq \ermT_{(\hkappa^{(j)})}^{\gU_{-j}}} }{|\hgS_c(\hkappa^{(j)})| + 1} }\nonumber\\
    &\leq \E\LRm{\frac{\ermT_{(\hkappa^{(j,k)})}^{\gU_{-j}}}{\hkappa^{(j,k)}} \frac{I_c}{\gamma m} \frac{\Indicator{T_k \leq \ermT_{(\hkappa^{(j)})}^{\gU_{-j}}} }{|\hgS_c(\hkappa^{(j)})| + 1} }.
\end{align}
\textbf{Bound $|\Delta_{jk, 5}|$.} For this term, we have
\begin{align}\label{eq:Delta_jk_5_bound}
    |\Delta_{jk, 5}| &\leq \E\LRm{\frac{1}{\hkappa^{(j)}}\LRs{\frac{1}{|\hgS_{c_{-k}}(\hkappa^{(j)})| + 2} - \frac{1}{|\hgS_{c_{-k}}(\hkappa^{(j,k)})| + 2}} \Indicator{T_j \leq \ermT_{(\hkappa^{(j)})}^{\gU_{-j}}, T_k\leq \ermT_{(\hkappa^{(j)})}^{\gU_{-j}}} }\nonumber\\
    &= \E\LRm{\frac{\ermT_{(\hkappa^{(j)})}^{\gU_{-j}}}{\hkappa^{(j)}}\LRs{\frac{1}{|\hgS_{c_{-k}}(\hkappa^{(j)})| + 2} - \frac{1}{|\hgS_{c_{-k}}(\hkappa^{(j,k)})| + 2}} \Indicator{T_k\leq \ermT_{(\hkappa^{(j)})}^{\gU_{-j}}} }\nonumber\\
    &\Eqmark{i}{=} \E\LRm{\frac{\ermT_{(\hkappa^{(j)})}^{\gU_{-j}}}{\hkappa^{(j)}}\LRs{\frac{1}{|\hgS_c(\hkappa^{(j)})| + 1} - \frac{1}{|\hgS_c(\hkappa^{(j,k)})| + 1}} \Indicator{T_k\leq \ermT_{(\hkappa^{(j)})}^{\gU_{-j}}} }\nonumber\\
    &= \E\LRm{\frac{\ermT_{(\hkappa^{(j)})}^{\gU_{-j}}}{\hkappa^{(j)}}\frac{d^{(j,k)}}{|\hgS_c(\hkappa^{(j,k)})| + 1} \frac{\Indicator{T_k\leq \ermT_{(\hkappa^{(j)})}^{\gU_{-j}}}}{|\hgS_c(\hkappa^{(j)})| + 1}  }\nonumber\\
    &\Eqmark{ii}{\leq} \E\LRm{\frac{1}{\gamma m}\frac{d^{(j,k)}}{|\hgS_c(\hkappa^{(j,k)})| + 1} \frac{\Indicator{T_k\leq \ermT_{(\hkappa^{(j)})}^{\gU_{-j}}}}{|\hgS_c(\hkappa^{(j)})| + 1}  },
\end{align}
where $(i)$ holds due to $\hgS_{c_{-k}}(\hkappa^{(j)}) \cup \{k\} = \hgS_c(\hkappa^{(j)})$ under the event $T_k\leq \ermT_{(\hkappa^{(j)})}^{\gU_{-j}}$; and $(ii)$ follows from $\hkappa^{(j)} \geq \hkappa \geq \lceil \gamma m \rceil$ in Assumption \ref{assum:decouple_k}. Plugging \eqref{eq:Delta_jk_1_bound}, \eqref{eq:Delta_jk_2_bound}-\eqref{eq:Delta_jk_5_bound} into \eqref{eq:M_1j_expansion}, we get
\begin{align}\label{eq:M_1j_upper_mid}
    &\gM_{1,j} \leq (\Delta_{jk,1} + \Delta_{jk,2}) + (\Delta_{jk,3} + \Delta_{jk,4} + \Delta_{jk,5})\nonumber\\
    &\leq \sum_{k\in \gC} \E\LRm{\frac{1}{\hkappa^{(j,k)}}\frac{1}{|\hgS_{c_{-k}}(\hkappa^{(j,k)})| + 2} \LRl{\ermT_{(\hkappa^{(j,k)})}^{\gU_{-j}} - \ermT_{(\hkappa^{(j,k)} - I_c)}^{\gU_{-j}} + 2 \LRs{\ermR_{(U^{(j,k)}+2)}^{\hgS_c(\hkappa^{(j,k)})} - \ermR_{(L^{(j,k)}-2)}^{\hgS_c(\hkappa^{(j,k)})}}}}\nonumber\\
    &+ \sum_{k\in \gC} \E\LRm{\frac{\Indicator{k\in \hgS_c(\hkappa^{(j)})}}{|\hgS_c(\hkappa^{(j)})| + 1}\frac{1}{\gamma m} \LRl{\ermT_{(\hkappa^{(j,k)})}^{\gU_{-j}} - \ermT_{(\hkappa^{(j,k)} - I_c)}^{\gU_{-j}} +\frac{\ermT_{(\hkappa^{(j,k)})}^{\gU_{-j}}}{\hkappa^{(j,k)}} + \frac{d^{(j,k)}}{|\hgS_{c_{-k}}(\hkappa^{(j,k)})| + 1}} }.
\end{align}
In the following proof, we assume $n \leq m$. Otherwise, we replace all $\log n$ with $\log m$.
Applying \eqref{eq:max_scaled_order_stat} in Lemma \ref{lemma:uniform_spacing_and_ratio}, we can get
\begin{align}\label{eq:order_stat_ratio_1}
    \sP\LRs{\frac{\ermT_{(\hkappa^{(j,k)})}^{\gU_{-j}}}{\hkappa^{(j,k)}} \geq \frac{1}{1 - 2\sqrt{\frac{C\log m}{m}}} \frac{2C \log m}{m}} \leq 2m^{-C}.
\end{align}
Applying \eqref{eq:max_spacing} in Lemma \ref{lemma:uniform_spacing_and_ratio}, we can get
\begin{align}\label{eq:spacing_Ic}
    \sP\LRs{\ermT_{(\hkappa^{(j,k)})}^{\gU_{-j}} - \ermT_{(\hkappa^{(j,k)} - I_c)}^{\gU_{-j}} \geq I_c \cdot \frac{1}{1 - 2\sqrt{\frac{C\log m}{m}}} \frac{2C \log m}{m}} \leq 2m^{-C}.
\end{align}
Now we take $C = 4$. Lemma \ref{lemma:denominator_lower_bound} and Assumption \ref{assum:decouple_k} guarantee that
\begin{align}\label{eq:good_event_1}
    \ermT_{(\hkappa^{(j,k)})}^{\gU_{-j}} \geq \ermT_{(\hkappa^{(j)})}^{\gU_{-j}}  \geq \ermT_{(\lceil \gamma m \rceil)}^{\gU_{-j}} \geq \frac{\gamma}{2},
\end{align}
holds with probability at least $1 - 2m^{-C}$. By the definitions of $U^{(j,k)}$ and $L^{(j,k)}$, we have
\begin{align}
    &\ermR_{(U^{(j,k)}+2)}^{\hgS_c(\hkappa^{(j,k)})} - \ermR_{(L^{(j,k)}-2)}^{\hgS_c(\hkappa^{(j,k)})}\nonumber\\
    &\qquad = \sum_{\ell = L^{(j,k)} - 2}^{U^{(j,k)} + 2} \ermR_{(\ell+1)}^{\hgS_c(\hkappa^{(j,k)})} - \ermR_{(\ell)}^{\hgS_c(\hkappa^{(j,k)})}\nonumber\\
    &\qquad \leq \LRs{U^{(j,k)} - L^{(j,k)}+4}\max_{1 \leq \ell \leq |\hgS_c(\hkappa^{(j,k)})|-1}\LRl{\ermR_{(\ell+1)}^{\hgS_c(\hkappa^{(j,k)})} - \ermR_{(\ell)}^{\hgS_c(\hkappa^{(j,k)})}}\nonumber\\
    &\qquad \leq 2 (d^{(j,k)}+2)\max_{1 \leq \ell \leq |\hgS_c(\hkappa^{(j,k)})|-1}\LRl{\ermR_{(\ell+1)}^{\hgS_c(\hkappa^{(j,k)})} - \ermR_{(\ell)}^{\hgS_c(\hkappa^{(j,k)})}},\nonumber
\end{align}
where
\begin{align}
    d^{(j,k)} &= \sum_{i \in \hgS_c(\hkappa^{(j,k)})}\Indicator{\ermT_{(\hkappa^{(j,k)} - I_c)}^{\gU_{-j}} < T_i \leq \ermT_{(\hkappa^{(j,k)})}^{\gU_{-j}}}.\nonumber
\end{align}
Recalling the notations $\hgS_c(t) = \{i\in \gC: T_i \leq t\}$ and $\hgS_c^{(j,k)}\equiv \hgS_c(\ermT_{(\hkappa^{(j,k)})}^{\gU_{-j}})$. Let $a_m = \frac{16I_c \log m}{m}$.
With probability at least $1 - 4m^{-4}$ (from \eqref{eq:spacing_Ic} and \eqref{eq:good_event_1}), we can guarantee
\begin{align}\label{eq:R_diff_chaining}
    &\ermR_{(U^{(j,k)})}^{\hgS_c(\hkappa^{(j,k)})} - \ermR_{(L^{(j,k)})}^{\hgS_c(\hkappa^{(j,k)})}\nonumber\\
    &\qquad\leq 2\sup_{t \in [\gamma/2, 1]} \LRl{\LRs{ \sum_{i \in \hgS_c(t)}\Indicator{t - a_m < T_i \leq t} + 2} \cdot \max_{1 \leq \ell \leq |\hgS_c(t)|-1}\LRl{\ermR_{(\ell+1)}^{\hgS_c(t)} - \ermR_{(\ell)}^{\hgS_c(t)}}}\nonumber\\
    &\qquad =: 2 \sup_{t \in [\gamma/2, 1]} \Delta_1(\hgS_c(t)),
\end{align}
which depends only on the calibration set. Similarly, with the same probability, we also have
\begin{align}\label{eq:djk_ratio_chaining}
    \frac{d^{(j,k)}}{|\hgS_c(\hkappa^{(j,k)})| + 1} &\leq \sup_{t \in [\gamma/2, 1]} \frac{\sum_{i \in \hgS_c(t)}\Indicator{t - a_m < T_i \leq t}}{|\hgS_c(t)|+1}=: \sup_{t \in [\gamma/2, 1]} \Delta_2(\hgS_c(t)).
\end{align}
Notice that $|\hgS_c(\hkappa^{(j)})| \geq |\hgS_c(\gamma/2)|$ due to \eqref{eq:good_event_1}.
Plugging \eqref{eq:order_stat_ratio_1}, \eqref{eq:spacing_Ic}, \eqref{eq:R_diff_chaining} and \eqref{eq:djk_ratio_chaining} into \eqref{eq:M_1j_upper_mid}, we have
\begin{align}
    \gM_{1,j} &\Eqmark{i}{\leq} n\cdot \E\LRm{\frac{1}{|\hgS_c(\gamma/2)|+1}\frac{2}{m\gamma} \LRl{\frac{16 I_c \log m}{m} +  \sup_{t \in [\gamma/2, 1]} \Delta_1(\hgS_c(t)) + 10m^{-4}} }\nonumber\\
    &\qquad + \sum_{k\in \gC}\E\LRm{\frac{\Indicator{ T_k\leq \ermT_{(\hkappa^{(j)})}^{\gU_{-j}}}}{|\hgS_c(\gamma/2)| + 1}\frac{1}{\gamma m}\LRl{\frac{32I_c \log m}{m} + \sup_{t \in [\gamma/2, 1]} \Delta_2(\hgS_c(t)) + 10m^{-4}} }\nonumber\\
    &\Eqmark{ii}{\leq} \E\LRm{\frac{4}{m\gamma^2} \LRl{\frac{16 I_c \log m}{m} +  \sup_{t \in [\gamma/2, 1]} \Delta_1(\hgS_c(t)) + 10m^{-4}} }\nonumber\\
    &\qquad + \E\LRm{\frac{1}{\gamma m}\LRl{\frac{32I_c \log m}{m} + \sup_{t \in [\gamma/2, 1]} \Delta_2(\hgS_c(t)) + 10m^{-4}} }\nonumber\\
    &\Eqmark{iii}{\lesssim} \frac{1}{\rho m \gamma^3}\LRs{\frac{I_c \log m}{ m} + \frac{\log n}{n}}.\label{eq:M_1j_upper}
\end{align}
where $(i)$ follows from $|\hgS_{c_{-k}}(\hkappa^{(j,k)})|+1 \geq |\hgS_c(\gamma/2)|$ in \eqref{eq:good_event_1}; $(ii)$ follows from Lemma \ref{lemma:hgSc_size_lower_bound} such that $|\hgS_c(\gamma/2)| \geq n \gamma/2$; and $(iii)$ follows from Lemma \ref{lemma:sup_chaining_bound} and $a_m = 16I_c\log m/m$. Now plugging \eqref{eq:M_1j_upper} and \eqref{eq:M_2j_upper} into \eqref{eq:FCR_upper_M_12j}, we can prove the desired upper bound.

For the lower bound \eqref{eq:conformal_miscover_gap_lower}, similar to \eqref{eq:M_1j_expansion}, we have
\begin{align*}
    \FCR &\geq \alpha + \sum_{j\in \gU}\sum_{k\in \gC}\sum_{\ell = 1}^5\Delta_{jk, \ell}  - \sum_{\kappa = 1}^{m-1} \sum_{j\in \gU} \frac{1}{\kappa}\E\LRm{\Indicator{\hkappa = \kappa} \Indicator{j\in \hgS_u} \frac{1}{|\hgS_c(\ermT_{(\kappa)}^{\gU_{-j}})| + 1}}\nonumber\\
    &\Eqmark{i}{\geq} \alpha + \sum_{j\in \gU}\sum_{k\in \gC} \sum_{\ell = 1}^5\Delta_{jk, \ell} - \sum_{\kappa = 1}^{m-1} \sum_{j\in \gU} \frac{1}{\kappa}\E\LRm{\Indicator{\hkappa^{(j)} = \kappa}  \frac{\Indicator{T_j \leq \ermT_{(\kappa)}^{\gU_{-j} }}}{|\hgS_c(\ermT_{(\kappa)}^{\gU_{-j}})| + 1}}\nonumber\\
    &= \alpha + \sum_{j\in \gU}\sum_{k\in \gC} \sum_{\ell = 1}^5\Delta_{jk, \ell} - \sum_{\kappa = 1}^{m-1} \sum_{j\in \gU} \frac{1}{\kappa}\E\LRm{\Indicator{\hkappa^{(j)} = \kappa} \frac{\ermT_{(\kappa)}^{\gU_{-j} }}{|\hgS_c(\ermT_{(\kappa)}^{\gU_{-j}})| + 1}}\nonumber\\
    &= \alpha + \sum_{j\in \gU}\sum_{k\in \gC} \sum_{\ell = 1}^5\Delta_{jk, \ell} - \sum_{j\in \gU} \E\LRm{\frac{\ermT_{(\hkappa^{(j)})}^{\gU_{-j} }}{\hkappa^{(j)}}\frac{1}{|\hgS_c(\hkappa^{(j)})| + 1}}\nonumber\\
    &\Eqmark{ii}{\geq} \alpha + \sum_{j\in \gU}\sum_{k\in \gC} \sum_{\ell = 1}^5|\Delta_{jk, \ell}| - \frac{1}{\gamma}\E\LRm{ \frac{1}{|\hgS_c(\gamma/2)| + 1}},
\end{align*}
where $(i)$ holds because $\Indicator{T_j \leq \ermT_{(\hkappa)}^{\gU}} = \Indicator{T_j \leq \ermT_{(\hkappa)}^{\gU_{-j}}} \leq \Indicator{T_j \leq \ermT_{(\hkappa^{(j)})}^{\gU_{-j}}}$; and $(ii)$ holds due to $\hkappa^{(j)} \geq \lceil \gamma m\rceil$.  Invoking the upper bounds for $\Delta_{jk,\ell}$ for $\ell = 1,\ldots,5$, we can get the desired lower bound.
\end{proof}

\begin{lemma}\label{lemma:sup_chaining_bound}
    Under the conditions of Theorem \ref{thm:FCR_bound_cal}. If $64C\log n \leq n \gamma$, with probability at least $1 - 8 n^{-4}$, we have
    \begin{align*}
        \sup_{t \in [\gamma/2, 1]} \Delta_1(\hgS_c(t)) \leq \frac{48 \log n}{\rho \gamma} \LRs{a_m + n^{-2} + \frac{48 \log n}{n}},
    \end{align*}
    and
    \begin{align*}
       \sup_{t \in [\gamma/2, 1]} \Delta_2(\hgS_c(t)) \leq \frac{4}{\gamma} \LRs{a_m + n^{-2} + \frac{48 \log n}{n}}.
    \end{align*}
\end{lemma}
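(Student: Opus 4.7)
The plan is to convert the pointwise concentration bounds of Lemmas \ref{lemma:Sc_max_spacing}, \ref{lemma:Sc_interval_concentration}, and \ref{lemma:hgSc_size_lower_bound} into uniform bounds over $t \in [\gamma/2, 1]$ via grid discretization, union bounds, and monotonicity. A single uniform lower bound on the common denominator $|\hgS_c(t)|+1$ is the starting point: since $\hgS_c(t) \supseteq \hgS_c(\gamma/2)$ for $t\geq \gamma/2$ by monotonicity of the threshold in $t$, applying Lemma \ref{lemma:hgSc_size_lower_bound} at $t=\gamma/2$ with $C=4$---whose hypothesis $8C\log n/(n\gamma/2)\leq 1$ is ensured by $64 C\log n \leq n\gamma$---yields $|\hgS_c(t)|+1 \geq n\gamma/4$ for every $t \in [\gamma/2, 1]$ on an event of probability at least $1 - 2 n^{-4}$.

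Next I would control $N(t) := |\hgS_c(t)| - |\hgS_c(t-a_m)|$ uniformly in $t$, since $N(t)$ is both the numerator of $\Delta_2(\hgS_c(t))$ and, up to an additive $2$, the first factor of $\Delta_1(\hgS_c(t))$. On a grid $t_k = \gamma/2 + k\delta$ with $\delta = n^{-2}$ and $K \leq n^2$ points, monotonicity of $|\hgS_c(\cdot)|$ gives, for any $t \in [t_k, t_{k+1}]$,
\begin{align*}
N(t) \leq |\hgS_c(t_{k+1})| - |\hgS_c(t_k - a_m)|,
\end{align*}
which counts $T_i$'s in a deterministic interval of length $a_m + \delta$. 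Applying Lemma \ref{lemma:Sc_interval_concentration} to each of the $\leq n^2$ such intervals with $C = 6$ and taking a union bound (failure probability $\leq 4 n^{-4}$), followed by AM--GM to absorb the $\sqrt{(a_m+\delta)\log n/n}$ cross term into the leading pieces, yields $N(t)/n \leq a_m + n^{-2} + 48\log n/n$ for all $t \in [\gamma/2, 1]$ simultaneously. Dividing by $|\hgS_c(t)|+1 \geq n\gamma/4$ delivers the bound on $\sup_t \Delta_2$ with the stated constant $4/\gamma$.

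For $\Delta_1$ the inner max spacing $M(t) = \max_{1\leq \ell \leq |\hgS_c(t)|-1}\{R_{(\ell+1)}^{\hgS_c(t)} - R_{(\ell)}^{\hgS_c(t)}\}$ is \emph{not} monotone in $t$, since a residual added above the current range of $\{R_i\}_{i \in \hgS_c(t)}$ creates a previously nonexistent inner spacing. My fix is to augment with boundary spacings from $0$ and to $1$: the resulting ``full'' max spacing $\tilde M(t)$ is non-increasing in $t$ because any newly included residual can only split one existing spacing into two strictly smaller pieces. Thus $\sup_{t\geq\gamma/2} M(t) \leq \tilde M(\gamma/2)$. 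Applying Lemma \ref{lemma:Sc_max_spacing} at $t = \gamma/2$ with $C=4$, together with $|\hgS_c(\gamma/2)|+1 \geq n\gamma/4$ and $(1 - 2\sqrt{C\log n/(|\hgS_c|+1)})^{-1} \leq 2$ (from $n\gamma \geq 256 \log n$), gives $\tilde M(\gamma/2) \leq 48\log n/(\rho n\gamma)$ with failure probability at most $2 n^{-4}$. Multiplying by the uniform bound on $(N(t)+2)/n$ from the preceding paragraph yields the bound on $\sup_t \Delta_1$.

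The main obstacle is the lack of monotonicity in $t$ of both $N(t)$ and $M(t)$; the grid-plus-enveloping argument handles the former, and the boundary-augmentation trick handles the latter. A final union bound over the failure events---Lemma \ref{lemma:hgSc_size_lower_bound} once ($2n^{-4}$), Lemma \ref{lemma:Sc_max_spacing} once ($2n^{-4}$), and Lemma \ref{lemma:Sc_interval_concentration} applied to at most $n^2$ grid intervals with $C=6$ ($4n^{-4}$)---gives total failure probability at most $8 n^{-4}$, matching the statement.
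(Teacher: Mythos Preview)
Your approach is essentially the paper's: grid $[\gamma/2,1]$ into $n^2$ subintervals, control the interval count $N(t)$ uniformly via Lemma~\ref{lemma:Sc_interval_concentration} plus a union bound, lower bound $|\hgS_c(t)|+1$ uniformly by $n\gamma/4$ via Lemma~\ref{lemma:hgSc_size_lower_bound} at $t=\gamma/2$, and bound the residual max spacing at $t=\gamma/2$ via Lemma~\ref{lemma:Sc_max_spacing}. Your single-interval envelope $|\hgS_c(t_{k+1})|-|\hgS_c(t_k-a_m)|$ is just the sum of the paper's two pieces $\sum_i \Indicator{t_\ell-a_m<T_i\leq t_\ell}+\sum_i \Indicator{t_\ell<T_i\leq t_{\ell+1}}$, so there is no substantive difference there.

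Where you genuinely improve on the paper is the max-spacing step. The paper asserts
\[
\sup_{t\in[\gamma/2,1]}\max_{1\leq\ell\leq|\hgS_c(t)|-1}\bigl\{\ermR_{(\ell+1)}^{\hgS_c(t)}-\ermR_{(\ell)}^{\hgS_c(t)}\bigr\}
\leq \max_{1\leq\ell\leq|\hgS_c(\gamma/2)|-1}\bigl\{\ermR_{(\ell+1)}^{\hgS_c(\gamma/2)}-\ermR_{(\ell)}^{\hgS_c(\gamma/2)}\bigr\},
\]
justifying it by ``the maximum spacing of $\ermR^{\gS_1}$ is always no smaller than that of $\ermR^{\gS_2}$ if $\gS_1\subseteq\gS_2$.'' As you correctly note, this is false for \emph{inner} spacings: adding a point above the current maximum creates a new inner gap that can exceed all previous ones. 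Your boundary-augmentation $\tilde M(t)$ repairs this, since enlarging the set can only subdivide an existing spacing once the endpoints $0$ and $1$ are included. One small caveat: Lemma~\ref{lemma:Sc_max_spacing} as stated bounds $\max_{0\leq\ell\leq|\hgS_c(t)|-1}$, which omits the top gap $1-\ermR_{(|\hgS_c(t)|)}^{\hgS_c(t)}$; to control $\tilde M(\gamma/2)$ you need the full range $0\leq\ell\leq|\hgS_c(t)|$. This is fine because the proof of Lemma~\ref{lemma:uniform_spacing_and_ratio} (and hence of Lemma~\ref{lemma:Sc_max_spacing}) actually handles all $|\hgS_c(t)|+1$ spacings via the exponential representation, so just cite the proof rather than the statement. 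Your probability bookkeeping and constants match the paper's up to the same level of looseness present there.
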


\subsection{Proof of Lemma \ref{lemma:sup_chaining_bound}}
\begin{proof}
Now we divide $[\gamma/2,1]$ into equal-length subintervals $[t_{\ell-1}, t_{\ell}]$ for $\ell = 1,...,n^{K}$, where $t_0 = \gamma/2$ and $t_{n^{K}} = 1$. For any $t \in [\gamma/2,1]$, we can find $\ell \leq n^{K}$ such that $t \in [t_{\ell}, t_{\ell+1}]$. Notice that,
\begin{align}\label{eq:d_jk_chaining}
    \sup_{t\in [\gamma/2,1]} &\sum_{i \in \hgS_c(t)}\Indicator{t - a_m < T_i \leq t}\nonumber\\
    &\leq \sup_{t\in [\gamma/2,1]} \sum_{i \in \gC} \Indicator{t - a_m < T_i \leq t}\nonumber\\
    &\leq \max_{0\leq \ell \leq n^K-1} \sum_{i \in \gC}\Indicator{t_{\ell} - a_m < T_i \leq t_{\ell}}\nonumber\\
    &\qquad + \max_{0\leq \ell \leq n^K-1}\sup_{t\in [t_{\ell}, t_{\ell+1}]} \sum_{i \in \gC}\LRs{\Indicator{t - a_m < T_i \leq t} - \Indicator{t_{\ell} - a_m < T_i \leq t_{\ell}}}\nonumber\\
    &= \max_{0\leq \ell \leq n^K-1} \sum_{i \in \gC} \Indicator{t_{\ell} - a_m < T_i \leq t_{\ell}}\nonumber\\
    &\qquad + \max_{0\leq \ell \leq n^K-1}\sup_{t\in [t_{\ell}, t_{\ell+1}]} \sum_{i \in \gC}\LRs{\Indicator{t_{\ell} < T_i \leq t} - \Indicator{t_{\ell} - a_m < T_i \leq t - a_m}}\nonumber\\
    &\leq \max_{0\leq \ell \leq n^K-1} \LRl{\sum_{i \in \gC}\Indicator{t_{\ell} - a_m < T_i \leq t_{\ell}} + \sum_{i \in \gC} \Indicator{t_{\ell} < T_i \leq t_{\ell+1}}}.
\end{align}
Applying Lemma \ref{lemma:Sc_interval_concentration}, we have
\begin{align}\label{eq:djk_chaining_bound_1}
    \sP\LRs{\frac{1}{n}\sum_{i \in \gC} \Indicator{t_{\ell} - a_m < T_i \leq t_{\ell}} \leq 2a_m + \frac{12C \log n}{n}} \geq 1 - 2n^{-C},
\end{align}
and
\begin{align}\label{eq:djk_chaining_bound_2}
    \sP\LRs{\frac{1}{n}\sum_{i \in \gC} \Indicator{t_{\ell} < T_i \leq t_{\ell+1}} \leq 2n^{-K} + \frac{12C \log n}{n}} \geq 1 - 2n^{-C}.
\end{align}
Plugging \eqref{eq:djk_chaining_bound_1} and \eqref{eq:djk_chaining_bound_2} into \eqref{eq:d_jk_chaining}, with probability at least $1 - 4n^{-C+K}$, we can guarantee that
\begin{align}\label{eq:djk_chaining_bound}
   \sup_{t\in [\gamma/2,1]} \sum_{i \in \gC}\Indicator{t - a_m < T_i \leq t} \leq n\cdot a_m + n^{-K+1} + 12C \log n.
\end{align}
Applying Lemma \ref{lemma:Sc_max_spacing} and \ref{lemma:hgSc_size_lower_bound} with $t = \gamma/2$, with probability at least $1 - 4n^{-C}$, we can guarantee that
\begin{align}\label{eq:R_diff_chaining_bound}
    \sup_{t\in [\gamma/2,1]}\max_{1 \leq \ell \leq |\hgS_c(t)|-1}\LRl{\ermR_{(\ell+1)}^{\hgS_c(t)} - \ermR_{(\ell)}^{\hgS_c(t)}} &\leq \max_{1 \leq \ell \leq |\hgS_c(\gamma/2)|-1}\LRl{\ermR_{(\ell+1)}^{\hgS_c(\gamma/2)} - \ermR_{(\ell)}^{\hgS_c(\gamma/2)}}\nonumber\\
    &\leq \frac{1}{\rho} \frac{1}{1 - 2\sqrt{\frac{C \log n}{|\hgS_c(\gamma/2)| +1}}} \frac{2C \log n}{|\hgS_c(\gamma/2)| + 1}\nonumber\\
    &\leq \frac{1}{\rho} \frac{1}{1 - 2\sqrt{\frac{C \log n}{n \gamma/4 +1}}} \frac{2C \log n}{n \gamma/4 + 1},
\end{align}
where the first inequality holds since $\hgS_c(\gamma/2) \subseteq \hgS_c(t)$ for any $t \geq \gamma/2$. In fact, we also used the fact that the maximum spacing of $\ermR^{\gS_1}$ is always no smaller than that of $\ermR^{\gS_2}$ if $\gS_1 \subseteq \gS_2$. Choosing $C = 8$ and $K = 2$, and combining \eqref{eq:djk_chaining_bound} and \eqref{eq:R_diff_chaining_bound}, we have
\begin{align*}
  \sup_{t\in [\gamma/2,1]}\Delta_1(\hgS_c(t)) &\leq  \frac{96 \log n}{\rho \gamma} \LRs{a_m + n^{-2} + \frac{48 \log n}{n}},
\end{align*}
holds with probability at least $1 - 8n^{-4}$. 

From the definition of $\Delta_2(\hgS_c(t))$, with probability at least $1 - 8n^{-4}$, we have
\begin{align*}
    \sup_{t\in [\gamma/2,1]} \Delta_2(\hgS_c(t)) &=  \sup_{t\in [\gamma/2,1]} \frac{\sum_{i \in \hgS_c(t)}\Indicator{t - a_m < T_i \leq t}}{|\hgS_c(t)|+1}\\
    &\leq \frac{1}{|\hgS_c(\gamma/2)|+1} \sup_{t\in [\gamma/2,1]} \sum_{i \in \gC}\Indicator{t - a_m < T_i \leq t}\\
    &\leq \frac{8}{\gamma} \LRs{a_m + n^{-2} + \frac{48 \log n}{n}},
\end{align*}
where we used Lemma \ref{lemma:hgSc_size_lower_bound} and \eqref{eq:djk_chaining_bound}.

\end{proof}

\setcounter{equation}{0}
\def\theequation{D.\arabic{equation}}

\section{Proof of Auxiliary Lemmas in Section \ref{appen:aux_lemmas}}

\subsection{Proof of Lemma \ref{lemma:uniform_spacing_and_ratio}}\label{proof:lemma:uniform_spacing_and_ratio}
\begin{lemma}[Representation of spacing of uniform order statistics \citep{arnold2008first}]\label{lemma:spacing}
Let $U_1,\cdots ,U_n \stackrel{i.i.d.}{\sim}\unif([0,1])$, and $U_{(1)}\leq U_{(2)}\leq \cdots \leq U_{(n)}$ be their order statistics. Then
\begin{align*}
    \LRs{U_{(1)} - U_{(0)},\cdots,U_{(n+1)} - U_{(n)}} \stackrel{d}{=} \LRs{\frac{V_1}{\sum_{k=1}^{n+1} V_k},\cdots,\frac{V_{n+1}}{\sum_{k=1}^{n+1} V_k}},
\end{align*}
where $U_{0}=0$, $U_{(n+1)} = 1$, and $V_1,\cdots,V_{n+1}\stackrel{i.i.d.}{\sim} \operatorname{Exp}(1)$.
\end{lemma}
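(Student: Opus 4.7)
The plan is to prove Lemma \ref{lemma:spacing} by showing that both sides have the same joint distribution, namely the uniform distribution on the probability simplex $\Delta_n = \{(s_1,\ldots,s_{n+1})\in\sR_{\geq 0}^{n+1} : \sum_{i=1}^{n+1} s_i = 1\}$, which coincides with the Dirichlet$(1,\ldots,1)$ distribution.

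For the left-hand side, I would start from the standard joint density of the order statistics $(U_{(1)},\ldots,U_{(n)})$, which equals $n!$ on the region $\{0 \leq u_1 \leq \cdots \leq u_n \leq 1\}$. Applying the linear change of variables $S_i = U_{(i)} - U_{(i-1)}$ for $i=1,\ldots,n$ (with $U_{(0)}=0$), the Jacobian is $1$, so $(S_1,\ldots,S_n)$ has density $n!$ on the set $\{s_i \geq 0,\ \sum_{i=1}^n s_i \leq 1\}$. Defining $S_{n+1} := U_{(n+1)} - U_{(n)} = 1 - \sum_{i=1}^n S_i$ then gives the singular law of $(S_1,\ldots,S_{n+1})$ supported on $\Delta_n$ and uniform there.

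For the right-hand side, starting from the product density $\prod_{i=1}^{n+1} e^{-v_i}$ of the i.i.d.\ $\operatorname{Exp}(1)$ variables on $\sR_{\geq 0}^{n+1}$, I would perform the change of variables $(V_1,\ldots,V_{n+1}) \mapsto (W_1,\ldots,W_n,T)$ defined by $T = \sum_{k=1}^{n+1} V_k$, $W_i = V_i/T$ for $i\leq n$, inverted by $V_i = T W_i$ for $i\leq n$ and $V_{n+1} = T(1-\sum_{i=1}^n W_i)$. The Jacobian of this inverse map equals $T^n$, so the joint density of $(W_1,\ldots,W_n,T)$ is $T^n e^{-T}$ on the product of the projected simplex and $\sR_+$. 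Marginalizing out $T$ via $\int_0^\infty T^n e^{-T}\,dT = n!$ shows that $(W_1,\ldots,W_n)$ again has density $n!$ on the simplex, so $(V_1/T,\ldots,V_{n+1}/T)$ and $(S_1,\ldots,S_{n+1})$ have the same law, which is the claimed equality in distribution.

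There is no real obstacle here: the only delicate point is handling the $(n+1)$-dimensional representation cleanly despite the simplex constraint, which I resolve by working with $n$-dimensional marginals and writing the last coordinate as an explicit affine function of the others. Both computations then reduce to matching the same constant density $n!$ on $\Delta_n$, so the proof consists of two elementary Jacobian calculations and a Gamma integral.
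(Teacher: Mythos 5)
Your argument is correct: both Jacobian computations are right (the lower-triangular spacing map has unit Jacobian, the normalization map has Jacobian $T^n$, and the Gamma integral $\int_0^\infty t^n e^{-t}\,dt = n!$ matches the two densities on the simplex), and you handle the degenerate last coordinate properly by noting it is the same affine function of the first $n$ coordinates on both sides. Note that the paper itself offers no proof of this lemma — it is quoted as a classical fact from \citet{arnold2008first} — and your density-matching argument is essentially the standard textbook derivation of that result, so there is nothing to reconcile with the paper's treatment.
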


\begin{lemma}[Quantile transformation of order statistics, Theorem 1.2.5 in \citet{reiss2012approximate}]\label{lemma:quantile_transform}
Let $X_1,\cdots, X_n$ be i.i.d. random variables with cumulative distribution function $F(\cdot)$, and $U_1,\cdots ,U_n \stackrel{i.i.d.}{\sim}\unif([0,1])$, then
\begin{align*}
    \LRs{F^{-1}(U_{(1)}),\cdots, F^{-1}(U_{(n)})} \stackrel{d}{=} \LRs{X_{(1)},\cdots,X_{(n)}}.
\end{align*}
\end{lemma}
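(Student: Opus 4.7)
The plan is to reduce the joint distributional equality to a combination of two standard facts: the marginal probability integral transform, and the fact that order statistics are a deterministic (symmetric) function of the sample that commutes with monotone maps.

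First I would recall/verify the marginal statement: if $U \sim \unif([0,1])$ and $F^{-1}(u) := \inf\{x : F(x) \geq u\}$ denotes the generalized inverse (quantile function), then $F^{-1}(U) \stackrel{d}{=} X_1$. The key identity here is $\{u : F^{-1}(u) \leq x\} = \{u : u \leq F(x)\}$, which holds for every $x \in \sR$ and follows from the right-continuity and monotonicity of $F$. Applying this gives $\sP(F^{-1}(U) \leq x) = \sP(U \leq F(x)) = F(x)$, so $F^{-1}(U)$ has CDF $F$.

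Second, since $U_1,\ldots,U_n$ are i.i.d. $\unif([0,1])$ and $F^{-1}$ is a (deterministic) measurable function, the random variables $Y_i := F^{-1}(U_i)$ for $i \in [n]$ are i.i.d. with CDF $F$. Therefore, as a joint random vector, $(Y_1,\ldots,Y_n) \stackrel{d}{=} (X_1,\ldots,X_n)$. Applying the order statistic map (which is a fixed, measurable function $\sR^n \to \sR^n$) to both sides preserves distributional equality, so
\begin{equation*}
(Y_{(1)},\ldots,Y_{(n)}) \stackrel{d}{=} (X_{(1)},\ldots,X_{(n)}).
\end{equation*}

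Third, I would argue that $Y_{(k)} = F^{-1}(U_{(k)})$ for every $k \in [n]$. This is where monotonicity of $F^{-1}$ enters: because $F^{-1}$ is non-decreasing, the ordering of $\{F^{-1}(U_i)\}$ is determined by the ordering of $\{U_i\}$, hence the $k$-th smallest of $\{F^{-1}(U_i)\}$ equals $F^{-1}$ applied to the $k$-th smallest of $\{U_i\}$. Combining this with the previous step yields
\begin{equation*}
(F^{-1}(U_{(1)}),\ldots,F^{-1}(U_{(n)})) = (Y_{(1)},\ldots,Y_{(n)}) \stackrel{d}{=} (X_{(1)},\ldots,X_{(n)}),
\end{equation*}
which is exactly the claim. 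The only subtle point, and essentially the sole potential obstacle, is the careful handling of $F^{-1}$ when $F$ is not strictly increasing or not continuous; this is resolved cleanly by using the generalized inverse and the identity $\{F^{-1}(u) \leq x\} = \{u \leq F(x)\}$ noted in the first step, after which everything else is routine.
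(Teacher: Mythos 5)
Your proposal is correct. Note that the paper itself supplies no proof of this lemma: it is imported verbatim as Theorem 1.2.5 of Reiss (2012) and used as a black box (in the proof of Lemma \ref{lemma:Sc_max_spacing}), so there is no argument in the paper to compare against beyond the citation. What you give is a correct, self-contained derivation of the cited fact: the identity $\{u : F^{-1}(u)\leq x\}=\{u : u\leq F(x)\}$ for the generalized inverse yields the marginal probability integral transform; independence then gives the joint equality $\LRs{F^{-1}(U_1),\ldots,F^{-1}(U_n)} \stackrel{d}{=} \LRs{X_1,\ldots,X_n}$; and since $F^{-1}$ is non-decreasing, the sorted values of $\{F^{-1}(U_i)\}_{i\leq n}$ are exactly $F^{-1}(U_{(1)}),\ldots,F^{-1}(U_{(n)})$ (this holds even with ties, since sorting acts on the multiset), so applying the fixed measurable sorting map to both sides of the joint equality gives the statement. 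Your flagged subtlety — handling non-strictly-increasing or discontinuous $F$ via the generalized inverse — is indeed the only delicate point, and you resolve it correctly.
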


\textbf{Fact.} For the random variable $X \sim \operatorname{Exp}(\lambda)$, it holds that $\sP\LRs{X \geq x} = e^{-\lambda x}$.

\textbf{Fact.} For the random variable $X \sim \chi^2_{\nu}$, it holds that $\sP\LRs{X - \nu \geq x} \leq e^{-\nu x^2/8}$.

\begin{proof}
We first prove the conclusion \eqref{eq:max_spacing}. Using the spacing representation in Lemma \ref{lemma:spacing}, we have
\begin{align}\label{eq:tail_R_spacing_1}
    &\sP_{\gD_{u}}\LRs{\max_{0\leq \ell \leq n}\LRl{U_{(\ell+1)} - U_{(\ell)}} \geq \frac{1}{1 - 2\sqrt{\frac{C \log n}{n+1}}}\frac{2 C \log n}{n+1}}\nonumber\\
    &\qquad = \sP\LRs{\max_{0\leq \ell \leq n} \frac{V_{\ell}}{\sum_{i=1}^{n+1} V_i}  \geq \frac{1}{1 - 2\sqrt{\frac{C \log n}{n+1}}}\frac{2C \log n}{n+1}}\nonumber\\
    &\qquad \leq \sP\LRs{\max_{0\leq \ell \leq n} V_{\ell} \geq 2C \log n} + \sP\LRs{\frac{1}{n+1} \sum_{i=1}^{n+1} V_{i} \leq 1 - 2\sqrt{\frac{C \log n}{n+1}}},
\end{align}
where $U_{(0)} = 0$ and $U_{(n+1)} = 1$.
By the tail probability of $\operatorname{Exp}(1)$ and union bound, we know
\begin{align}\label{eq:tail_max_ell}
    \sP\LRs{\max_{0\leq \ell \leq n} V_{\ell} \geq 2C \log n} &\leq \LRs{n+1}\sP\LRs{V_1 \geq 2C \log n}
    \leq n^{-C},
\end{align}
holds for any $C \geq 1$. In addition, we know that $\frac{1}{2}\sum_{i=1}^{n+1} V_{i} \sim \Gamma(n+1, 2)\stackrel{d}{=} \chi^2_{2(n+1)}$. 
Using the tail bound of $\chi^2$ distribution with $\nu = 2(n+1)$, we have
\begin{align}\label{eq:tail_average}
    \sP\LRs{\frac{1}{n+1} \sum_{i=1}^{n+1} V_{i} \leq 1 - 2\sqrt{\frac{C \log n}{n+1}}} \leq n^{-C}.
\end{align}
Substituting \eqref{eq:tail_max_ell} and \eqref{eq:tail_average} into \eqref{eq:tail_R_spacing_1} gives the desired result. For the second conclusion \eqref{eq:max_scaled_order_stat}, we have
\begin{align*}
    &\sP\LRs{\max_{1\leq k\leq n}\frac{U_{(k)}}{k} \geq \frac{1}{1 - 2\sqrt{\frac{C \log n}{n+1}}}\frac{C \log n}{n+1}}\\
    &\qquad = \sP\LRs{\max_{1\leq k\leq n} \frac{\frac{1}{k} \sum_{\ell=1}^k V_{\ell}}{\sum_{i = 1}^{n+1} V_{i}} \geq \frac{1}{1 - 2\sqrt{\frac{C \log n}{n+1}}}\frac{2C \log n}{n+1}} \\
    &\qquad \leq \sP\LRs{\max_{0\leq \ell \leq n} \frac{V_{\ell}}{\sum_{i=1}^{n+1} V_i}  \geq \frac{1}{1 - 2\sqrt{\frac{C \log n}{n+1}}}\frac{2C \log n}{n+1}}\\
    &\qquad \leq 2n^{-C}.
\end{align*}
\end{proof}

\subsection{Proof of Lemma \ref{lemma:Sc_max_spacing}}\label{proof:lemma:Sc_max_spacing}
\begin{proof}
Notice that, for any $\gS_c \subseteq \gC$, we can write
\begin{align}\label{eq:hSc_event}
    \LRl{\hgS_c(t) = \gS_c} = \LRs{\bigcap_{i\in \gS_c}\LRl{T_i \leq t}} \bigcap \LRs{\bigcap_{i\in \gC\setminus \gS_c}\LRl{T_i > t}}.
\end{align}
Then for any $i\in \gS_c$, we have
\begin{align*}
     \sP\LRs{R_i \leq r \mid \hgS_c(t) = \gS_c} &= \sP\LRs{R_i \leq r \mid T_i \leq t} = \frac{F_{(R, T)}(r, t)}{t} = G(r).
\end{align*}
Hence, given $\hgS_c(t) = \gS_c$, $\{R_i\}_{i \in \gS_{c}}$ are independent and identically distributed random variables with the common cumulative distribution function $G(\cdot)$. Applying Lemma \ref{lemma:quantile_transform}, we know there exist $U_i \stackrel{i.i.d.}{\sim} \unif([0,1])$ for $i\in \gS_c$ such that
\begin{align}\label{eq:G_transform}
    \LRs{\ermR_{(1)}^{\gS_c},\cdots,\ermR_{(|\gS_c|)}^{\gS_c}\mid \hgS_c(t) = \gS_c} \stackrel{d}{=} \LRs{G^{-1}(U_{(1)},\cdots,G^{-1}(U_{(|\gS_c|)}}.
\end{align} 
Let $G^{-1}(\cdot)$ be the inverse function of $G(\cdot)$, and use our assumption on $\frac{\partial}{\partial r}F_{(R,T)}(r, t) \geq \rho t $, we can get
\begin{align}\label{eq:G_inverse_grad_upper_bound}
    \frac{d}{d r}G^{-1}(r) = \LRs{\frac{d}{d r}G(r)}^{-1} = \frac{t}{\frac{\partial}{\partial r} F_{(R,T)}(r, t)} \leq \frac{1}{\rho}.
\end{align}
Then for any $x \geq 0$, we have
\begin{align*}%\label{eq:tail_R_spacing}
    &\sP\LRs{\max_{0\leq \ell \leq |\hgS_c(t)|-1}\LRl{\ermR_{(\ell+1)}^{\hgS_c(t)} - \ermR_{(\ell+1)}^{\hgS_c(t)}} \geq \frac{x}{\rho}\mid \hgS_c(t) = \gS_c}\nonumber\\
    &\qquad \Eqmark{i}{=} \sP\LRs{\max_{0\leq \ell \leq |\gS_c|-1}\LRl{\ermR_{(\ell+1)}^{\gS_c} - \ermR_{(\ell)}^{\gS_c}} \geq \frac{x}{\rho} \mid \bigcap_{i\in \gS_c}\LRl{T_i \leq t}}\nonumber\\
    &\qquad \Eqmark{ii}{=} \sP\LRs{\max_{0\leq \ell \leq |\gS_c|-1}\LRl{G^{-1}(U_{(\ell+1)}) - G^{-1}(U_{(\ell)})} \geq \frac{x}{\rho}}\nonumber\\
    &\qquad \Eqmark{iii}{\leq} \sP\LRs{\max_{0\leq \ell \leq |\gS_c|-1}\LRl{U_{(\ell+1)} - U_{(\ell)}} \geq x},
\end{align*}
where $(i)$ holds due to \eqref{eq:hSc_event} and the fact that $\{T_i\}_{i\in \gS_c}$ are independent of $\{T_i\}_{i\in \gC\setminus\gS_c}$, $(ii)$ follows from \eqref{eq:G_transform}, and $(iii)$ comes from \eqref{eq:G_inverse_grad_upper_bound}. Invoking Lemma \ref{lemma:uniform_spacing_and_ratio}, we can finish the proof.
\end{proof}

\subsection{Proof of Lemma \ref{lemma:Sc_interval_concentration}}\label{proof:lemma:Sc_interval_concentration}
\begin{proof}%[Proof of Lemma \ref{lemma:Sc_interval_concentration}]
    Recall the definition $Z_i = \Indicator{t_1 < T_i \leq t_2} - (t_2 - t_1)$, then we know $\E[Z_i] = 0$.
    Next, we will bound the moment generating function of $\sum_{i\in \gS_c}Z_i$. For any $\lambda > 0$, we have
    \begin{align}\label{eq:MGF_sum_Z}
        \E\LRm{e^{\lambda \sum_{i\in \gC} Z_i}} = \E\LRm{\prod_{i\in \gC} e^{\lambda Z_i} } &= \prod_{i\in \gC}\E\LRm{ e^{\lambda Z_i}}\nonumber\\
        &\Eqmark{i}{\leq} \prod_{i\in \gC}\LRs{1 + \lambda^2 \E\LRm{Z_i^2 e^{\lambda |Z_i|}}}\nonumber\\
        &\Eqmark{ii}{\leq} \exp\LRs{\lambda^2\sum_{i\in \gC} \E\LRm{Z_i^2 e^{\lambda |Z_i|}}},
    \end{align}
    where $(i)$ holds due to the basic inequality $e^y - 1 - y \leq y^2 e^{|y|}$, $(ii)$ comes from the basic inequality $1+y \leq e^y$. Notice that
    \begin{align}\label{eq:MGF_with_variance}
        \sum_{i\in \gC}\E\LRm{Z_i^2 e^{\lambda |Z_i|}} &= n \E\LRm{Z_1^2 e^{\lambda |Z_1|}}\nonumber\\
        &\leq e^{\lambda}\cdot n \E\LRm{\LRs{\Indicator{t_1 < T_1 \leq t_2} - (t_2 - t_1)}^2 }\nonumber\\
        &\leq e^{\lambda}\cdot n (t_2 - t_1)\nonumber\\
        &= K^2(\lambda).
    \end{align}
    Using Markov's inequality and \eqref{eq:MGF_sum_Z}, for any $z \geq 0$, we can get
    \begin{align}
        \sP\LRs{\sum_{i\in \gC} Z_i \geq 2K(1) z}
        = \sP\LRs{e^{\lambda \sum_{i\in \gC} Z_i} \geq e^{2 \lambda K(1) z} } &\leq e^{-2 \lambda K(1) z}\E\LRm{e^{\lambda \sum_{i\in \gC} Z_i}}\nonumber\\
        &\leq \exp\LRl{-2 \lambda  K(1) z + \lambda^2 K^2(\lambda)},\nonumber
    \end{align}
    where the last inequality comes from \eqref{eq:MGF_sum_Z} and \eqref{eq:MGF_with_variance}. If $n \geq \frac{C\log n}{t_2 - t_1}$, we know that $K^2(1) \geq e C \log n$. Taking $z = (C \log n)^{1/2}$, $\lambda = z/K(1) \leq 1$, then we have
    \begin{align}\label{eq:exponential_ineq_Xi_1}
        \sP\LRs{\LRabs{\sum_{i\in \gC} Z_i} \geq 2K(1) z } &\leq 2\exp\LRl{-2z^2 + z^2 \frac{K^2(\lambda)}{K^2(1)}}\nonumber\\
        &\leq 2e^{-z^2} = 2n^{-C}.
    \end{align}
    If $n < \frac{C\log n}{t_2 - t_1}$ such that $K^2(1) < e C \log n$, applying \eqref{eq:MGF_with_variance} with $\lambda = 1$ gives
    \begin{align}\label{eq:exponential_ineq_Xi_2}
        \sP\LRs{\LRabs{\sum_{i\in \gC} Z_i} \geq 2e C\log n}
        &\leq 2e^{- 2e C\log n}\E\LRm{e^{\sum_{i\in \gC} Z_i}} \leq 2e^{- 2e C\log n + K^2(1)}\nonumber\\
        &\leq 2e^{- 2eC\log n + eC \log n} \leq 2n^{-C}.
    \end{align}
    Combining \eqref{eq:exponential_ineq_Xi_1} and \eqref{eq:exponential_ineq_Xi_2}, we can get
    \begin{align*}
        \sP\LRs{\frac{1}{n}\LRabs{\sum_{i\in \gC} Z_i} \geq  2\sqrt{\frac{e C \log n}{n} \cdot (t_2-t_1)} + \frac{2e C\log n}{n}}\leq 2n^{-C}.
    \end{align*}
    Thus we have completed the proof.
\end{proof}

\subsection{Proof of Lemma \ref{lemma:hgSc_size_lower_bound}}\label{proof:lemma:hgSc_size_lower_bound}
\begin{proof}
By the definition of $\hgS_c(t)$, we have
\begin{align*}
    |\hgS_c(t)| - nt = \sum_{i\in \gC}\Indicator{T_i\leq t} - nt = \sum_{i\in \gC}\Indicator{T_i\leq t} - \sP\LRs{T_i\leq t}.
\end{align*}
Applying Hoeffding's inequality, we have
\begin{align*}
    \sP\LRs{\LRabs{|\hgS_c(t)| - nt} \geq 2C\sqrt{n \log n}} \leq n^{-C}.
\end{align*}
Using the assumption $8C \log n / (nt) \leq 1$, we can finish the proof.
\end{proof}

\subsection{Proof of Lemma \ref{lemma:denominator_lower_bound}}\label{proof:lemma:denominator_lower_bound}
\begin{proof}
Invoking the spacing representation in Lemma \ref{lemma:spacing}, we have
\begin{align*}
    \sP\LRs{\ermT_{(\lceil \gamma m \rceil)}^{\gU_{-j}} \leq \frac{\gamma}{2}} &= \sP\LRs{\frac{\sum_{i = 1}^{\lceil \gamma m \rceil} V_i}{\sum_{k=1}^{m} V_i} \leq \frac{\gamma}{2}}\\
    &= \sP\LRs{\frac{\frac{1}{\lceil \gamma m \rceil}\sum_{i = 1}^{\lceil \gamma m \rceil} V_i}{\frac{1}{m}\sum_{k=1}^{m} V_k} \leq \frac{\gamma}{2}\frac{m}{\lceil \gamma m \rceil}}\\
    &\leq \sP\LRs{\frac{\frac{1}{\lceil \gamma m \rceil}\sum_{i = 1}^{\lceil \gamma m \rceil} V_i}{\frac{1}{m}\sum_{k=1}^{m} V_k} \leq \frac{1}{2}}\\
    &\leq \sP\LRs{\frac{1}{\lceil \gamma m \rceil} \sum_{i = 1}^{\lceil \gamma m \rceil} (V_i - 1) \leq -\frac{1}{4}} + \sP\LRs{\frac{1}{m} \sum_{k=1}^{m} (V_k-1) \geq \frac{1}{2}}\\
    &\Eqmark{i}{\leq} \sP\LRs{\frac{1}{\lceil \gamma m \rceil} \sum_{i = 1}^{\lceil \gamma m \rceil} (V_i - 1) \leq - 2\sqrt{\frac{C \log m}{\lceil \gamma m \rceil}}} + \sP\LRs{\frac{1}{m} \sum_{k=1}^{m} (V_k-1) \geq 2\sqrt{\frac{C \log m}{m}}}\\
    &\Eqmark{ii}{\leq} 2 m^{-C},
\end{align*}
where $(i)$ holds due to the assumption $64C \log m \leq \lceil \gamma m \rceil$; and $(ii)$ follows from the tail probability of Chi-square distribution.
\end{proof}

\setcounter{equation}{0}
\def\theequation{E.\arabic{equation}}
\section{Additional details on the algorithms}

\begin{figure}[htbp]
\centering
\includegraphics[width=0.9\textwidth]{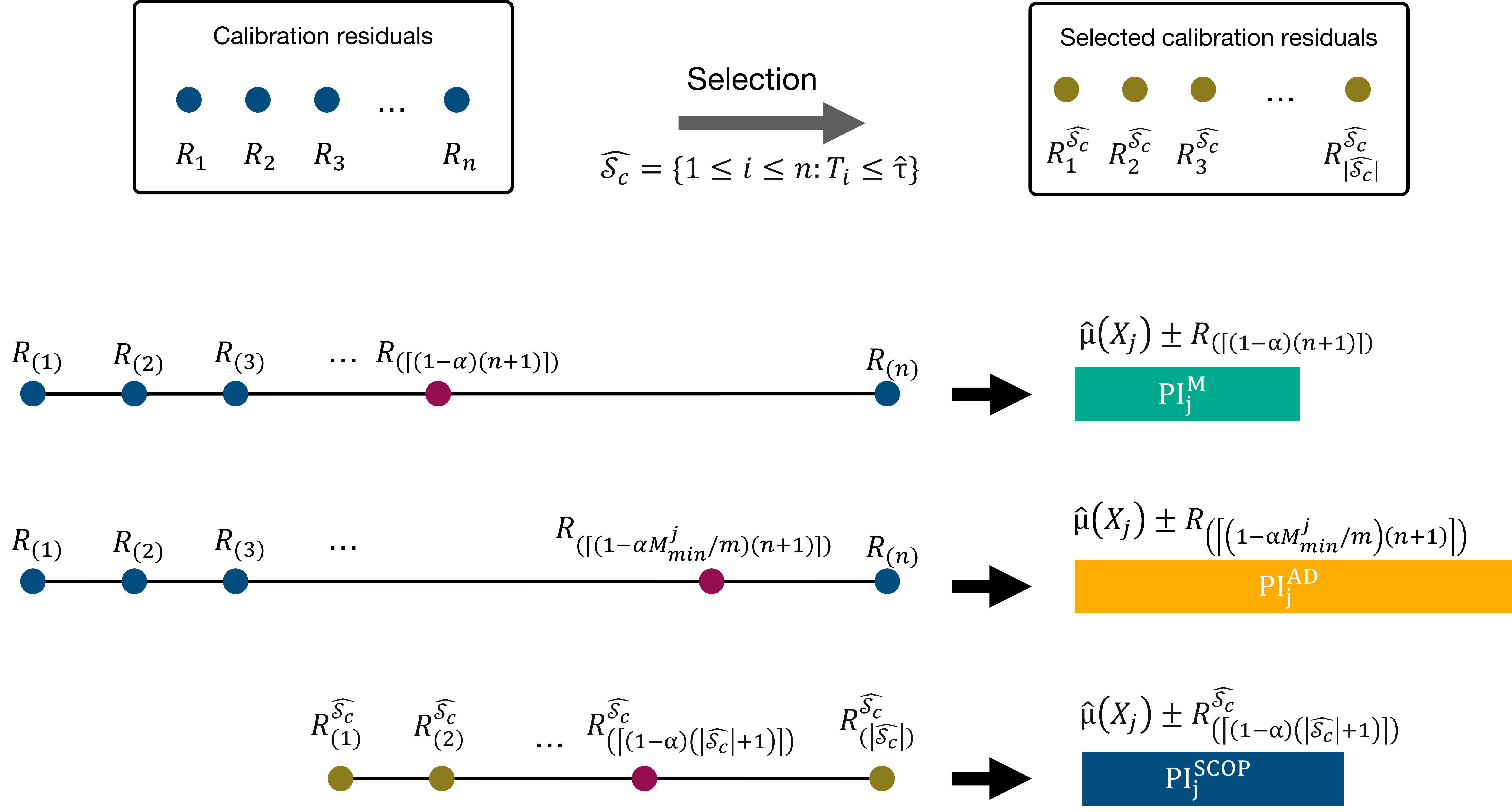}
\caption{The illustration of three different prediction intervals for the selected test point $j\in \hgS_u$ (that is $T_j \leq \hat{\tau}$): $\PI_j^{\rm{M}}$, $\PI_j^{\rm{AD}}$ and $\PI_j^{\rm{SCOP}}$.}
\label{fig:cartoon}
\end{figure}
\subsection{Illustration for the mentioned algorithms}
We illustrate the implementation of three different methods (our proposed method, selective conditional conformal prediction; ordinary conformal prediction; and adjusted conformal prediction) and enhance the visual representation of our work in Figure \ref{fig:cartoon}.

\subsection{Details on the $\FCR$-adjusted conformal prediction}
Recall The $\FCR$-adjusted conformal prediction intervals can be written as
\begin{align*}
    \PI_j^{\text{AD}} = \hat{\mu}(X_j) \pm Q_{\alpha_j}\LRs{\LRl{R_i}_{i\in \gC}},
\end{align*}
where $\alpha_j = \alpha\times  M_{\min}^j/m$ and \[
 M^j_{\min} = \min_{t}\LRl{|\hgS_u^{j\gets t}|:j\in \hgS_u^{j\gets t}}.
 \]
For many plausible selection rules such as fixed-threshold selection, the $ M_{\min}^j$ can be replaced by the cardinality of the selected subset $|\widehat{\gS}_{u}|$. 
In practice, for ease of computation, one may prefer to use this simplification, even though it does not have a theoretical guarantee for many data-dependent selection rules.

The $\FCR$-adjusted method is known to be quite conservative because it does not incorporate the selection event into the calculation \citep{weinstein2013selection}. Take the Top-K selection as an intuitive example. The selected set $\hat{\gS}_u$ is fixed with $|\hat{\gS}_u|=K$ and the $\FCR$ can be written as  
\begin{align*}%\label{eq:FCR_decompose}
    \FCR = \frac{1}{K}\sum_{j\in\gU}\sP(j\in\hat{\gS}_u,Y_j \not \in \PI_j).
\end{align*}
Since the marginal $\PI^{\text{AD}}_j$ reaches the $1-\alpha K/m$ confidence level for any fixed $K$, the $\FCR$-adjusted method achieves the $\FCR$ control via 
\[
\FCR=\frac{1}{K}\sum_{j\in\gU}\sP\LRs{j\in\hat{\gS}_u, Y_j \not \in \PI^{\text{AD}}_j}\leq \frac{1}{K}\sum_{j\in\gU} \sP\LRs{Y_j \not \in \PI^{\text{AD}}_j}\leq\alpha,
\]
where the first inequality might be rather loose.

\subsection{Details on the selection with conformal $p$-values}
Let us first recall the definition of conformal $p$-values. We split the calibration set according to the null hypothesis and alternative hypothesis, that is $\gC = \gC_0 \cup \gC_1$. The conformal $p$-value is defined as
\begin{align*}
    p_j = \frac{1 + \sum_{i\in \gC_0} \Indicator{T_i \leq T_j}}{|\gC_0| + 1}.
\end{align*}
Therefore, the ranking threshold $\hkappa$ depends only on the test set $\gD_u$ and the null calibration set $\gD_{c_0}$, where $\gD_{c_0} = \{(X_i,Y_i):i \in \gC, Y_i \leq b_0\}$.
Next, we illustrate the application of Algorithm \ref{alg:SCOP} under a more general selection rule: step-up procedures \citet{lei2020}, which includes the Benjamini--Hochberg procedure. Let $0\leq \delta(1) \leq \cdots \leq \delta(m) \leq 1$ denote an increasing sequence of thresholds, we choose the ranking threshold for step-up procedures as
\begin{align}\label{eq:setp-up-procedure}
    \hkappa = \max\LRl{r:\ p_{(r)} \leq \delta(r)},
\end{align}
where $p_{(r)}$ is the $r$-th smallest conformal $p$-value. Specially, the Benjamini--Hochberg procedure takes $\delta(r)=r\beta/m$. We provide the detailed procedure of our proposed method under step-up procedures in Algorithm \ref{alg:step-up}.

\begin{algorithm}[tb]\fontsize{9.8pt}{11.76pt}\selectfont
	\renewcommand{\algorithmicrequire}{\textbf{Input:}}
	\renewcommand{\algorithmicensure}{\textbf{Output:}}
	\caption{Selective conditional conformal prediction under selection with conformal $p$-values}
	\label{alg:step-up}
	\begin{algorithmic}
		\REQUIRE Training data $\gD_t$, calibration data $\gD_c$, test data $\gD_u$, threshold sequence $\{\delta(r):r\in [m]\}$.
		
		\STATE \textbf{Step 1} Fit prediction model $\hmu(\cdot)$ and score function $g(\cdot)$ on $\gD_t$. Compute the score values $\ermT^{\gC} = \{T_i = g(X_i):i \in \gC\}$ and $\ermT^{\gU} = \{T_i = g(X_i):i \in \gU\}$.
		
		\STATE \textbf{Step 2} Compute the conformal $p$-values $\{p_i: i \in \gU\}$ according to \eqref{eq:conformal_p_values} based on $\gD_{\gC_0}$. Apply the Benjamini--Hochberg procedure with target level $\beta$ to $\ermT^{\gU}$ and obtain (\ref{eq:setp-up-procedure}).
% 		\begin{align*}
% 		    \hkappa = \max\LRl{r:\ p_{(r)} \leq \delta(r)}.
% 		\end{align*}
		Obtain the post-selection subsets: $\hgS_u = \{i\in \gU: T_i \leq \ermT_{(\hkappa)}^{\gU}\}$ and $\hgS_c = \{i\in \gC: T_i \leq \ermT_{(\hkappa+1)}^{\gU}\}$.
		
     \STATE \textbf{Step 3} Compute residuals: $\ermR^{\hgS_c} = \{R_i = |Y_i - \hmu(X_i)|: i\in \hgS_c\}$. Find the $\lceil(1-\alpha)(|\hgS_c|+1)\rceil$-st smallest value of $\ermR^{\hgS_c}$, denoted by $\ermR^{\hgS_c}_{(\lceil (1 -\alpha)(|\hgS_c|+1)\rceil)}$.
	
   \STATE \textbf{Step 4} Construct $\PI_j$ for each $j \in \hgS_u$ as $\PI_j = \LRm{\hmu(X_j) - \ermR^{\hgS_c}_{(\lceil (1 -\alpha)(|\hgS_c|+1)\rceil)}, \hmu(X_j) + \ermR^{\hgS_c}_{(\lceil (1 -\alpha)(|\hgS_c|+1)\rceil)}}$.
		
        \ENSURE $\{\PI_j: j \in \hgS_u\}$.
	\end{algorithmic}  
\end{algorithm}

\textbf{Proposition \ref{pro:$p$-value_T-value} restated.}
Let $p_{(1)} \leq ... \leq p_{(m)}$ be order statistics of conformal $p$-values in the test set $\gU$. For any $i\in \gU$, it holds that $\{p_i \leq p_{(\hkappa)}\} = \{T_i \leq \ermT_{(\hkappa)}\}$.

\begin{proof}
We prove the conclusion under step-up procedures.
Notice that, there may exist ties in $\{p_j:j\in \gU\}$, but there is no tie in $\{T_j: j \in \gU\}$. Let $\rank(p_j) = \sum_{i\in \gU}\Indicator{p_i \leq p_j}$ and $\rank(T_j) = \sum_{i\in \gU}\Indicator{T_i \leq T_j}$ be the ranks of $p_j$ and $T_j$ in the test set, respectively. Then we have $\rank(p_j) \geq \rank(T_j)$, which means
\begin{align*}
    \LRl{p_j \leq p_{(\hkappa)}} = \LRl{\rank(p_j) \leq \hkappa} \subseteq \LRl{\rank(T_j) \leq \hkappa} = \LRl{T_j \leq \ermT_{(\hkappa)}^{\gU}}.
\end{align*}
By the definition of $\hkappa = \max\LRl{r : p_{(r)} \leq \delta(r)}$, we know
\begin{align*}
    \LRl{T_j \leq \ermT_{(\hkappa)}^{\gU}} = \LRl{T_j \leq \max\LRl{T_i: p_i = p_{(\hkappa)}}}  \Longrightarrow \LRl{p_j \leq p_{(\hkappa)}}.
\end{align*}
Therefore, we have proved that $\{p_j \leq p_{(\hkappa)}\} = \{T_j \leq \ermT_{(\hkappa)}\}$ for any $j \in \gU$. 
\end{proof}

\setcounter{equation}{0}
\def\theequation{F.\arabic{equation}}

\section{Additional numerical studies}

\subsection{Visualization of the selective conditional densities of $R_i$}
To visualize the effects of the selection procedure and the approximation via calibration set, we consider a linear model with heterogeneous noise, where we use ordinary least-squares for predictions and select $\hgS_u=\{j\in\gU:\hmu(X_j)\leq -1\}$. In Figure \ref{fig:compare}, we display the densities of $R_i$ for $i\in\gD_c$, $R_i$ for $i\in\hgS_c$ and the density of $R_j$ for $j\in\hgS_u$, respectively. The selection procedure significantly distorts the distribution of residuals, but the conditional uncertainty on $\hgS_u$ can be well approximated by that on $\hgS_c$.% but not on $\gD_c$.} 
\begin{figure}
\centering
\includegraphics[width = 0.65\linewidth]{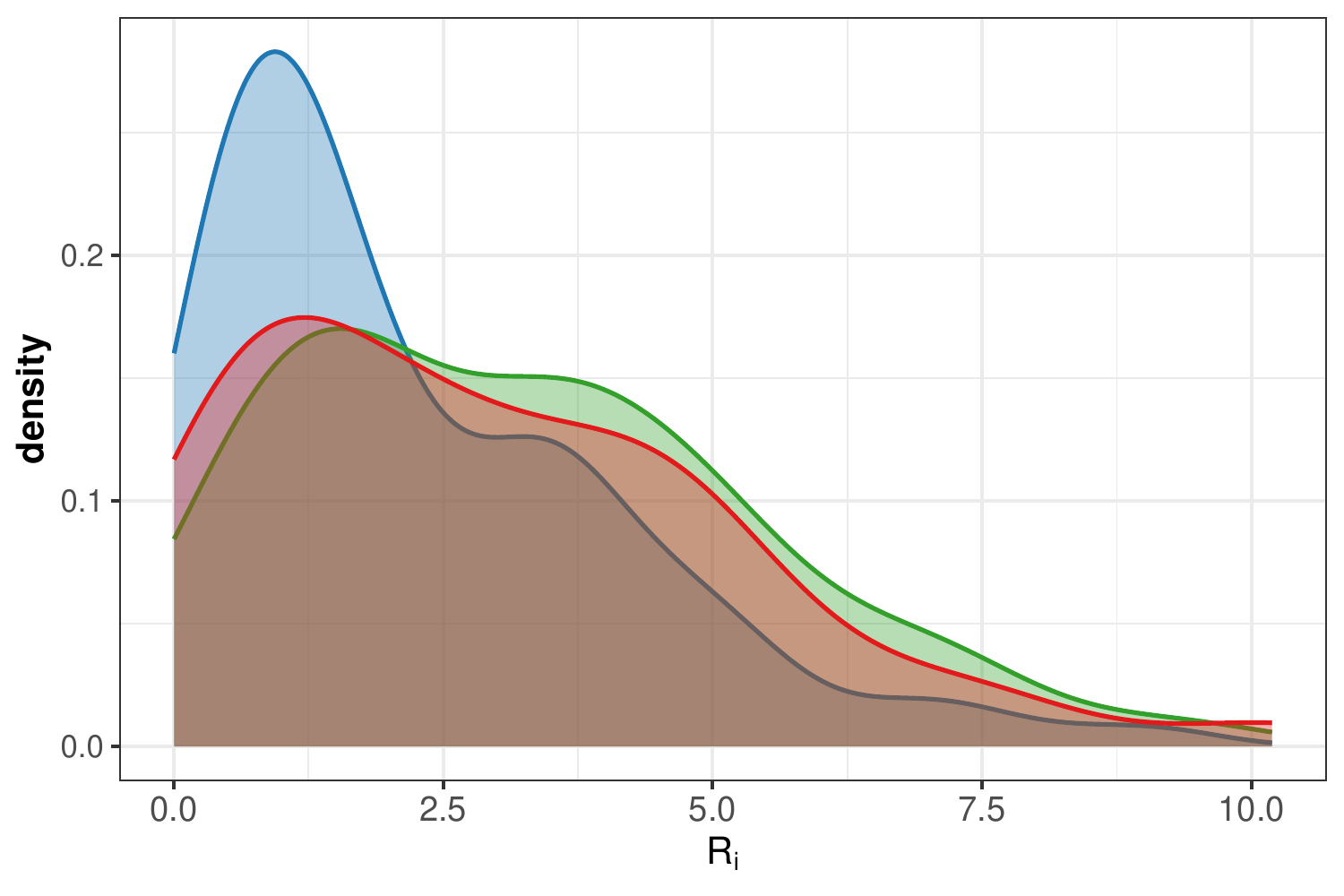}
% note that files may not be rotated
\caption{The densities of $R_i$ for $i\in\gD_c$ (in blue), $R_i$ for $i\in\hgS_c$ (in green) and the density of $R_j$ for $j\in\hgS_u$ (in red), respectively. There are $2n=400$ labeled data and $m=200$ test data generated from a linear model with heterogeneous noise, that is the Scenario A. The selection rule is $\hgS=\{k:\hmu(X_k)\leq -1\}$.}
\label{fig:compare}
\end{figure}

\subsection{Visualization of the prediction interval}
We provide a plot of interval bands for better illustration. The simulation setting is similar to Scenario A except that we fix the data generating model as $Y=2X+\epsilon$ where $X\sim \unif([-1,1])$ is a one-dimensional uniform random variable and $\epsilon\mid X\sim N(0,(1+|2X|)^2)$. We would like to select a subset from test data by threshold {T-test($30\%$)}, i.e. 
$30\%$-quantile of predictions $\hmu(X)$ in test set. The sizes of training, calibration, and test data are fixed as $100$, $500$, and $500$, respectively. The target $\FCR$ level is $\alpha=10\%$. The true prediction interval for each selected $X_j$ is also considered. It can be constructed by $2X_j\pm z_{\alpha/2}(1+2|X_j|)$ where $z_{\alpha/2}$ is the $\alpha/2$ upper quantile of standard normal distribution.

Figure \ref{fig:PIplot} displays the selected samples and their prediction intervals. The brown area is the 90\%-level true prediction interval, where the false coverage proportion is $12\%$ and the average length is $7.97$. As for our method, it produces the red interval with length $8.55$ and maintains the false coverage proportion at $10\%$. We can see that our method approaches the oracle ones with the true prediction intervals in terms of FCR control, and meanwhile our method yields satisfactory prediction intervals whose lengths are slightly greater than the oracle ones. In addition, we depicted the prediction intervals of two benchmarks. The green interval is from the ordinary conformal prediction. Note that it is almost always inside the region of the true prediction interval but yields a much inflated false coverage proportion, i.e., $21\%$. The blue interval, which is the FCR-adjusted method, is the widest and always outside of the true prediction interval due to constructing larger $(1-\alpha |\hgS_u|/m)$ prediction intervals.

\begin{figure}
\centering
\includegraphics[width=0.85\textwidth]{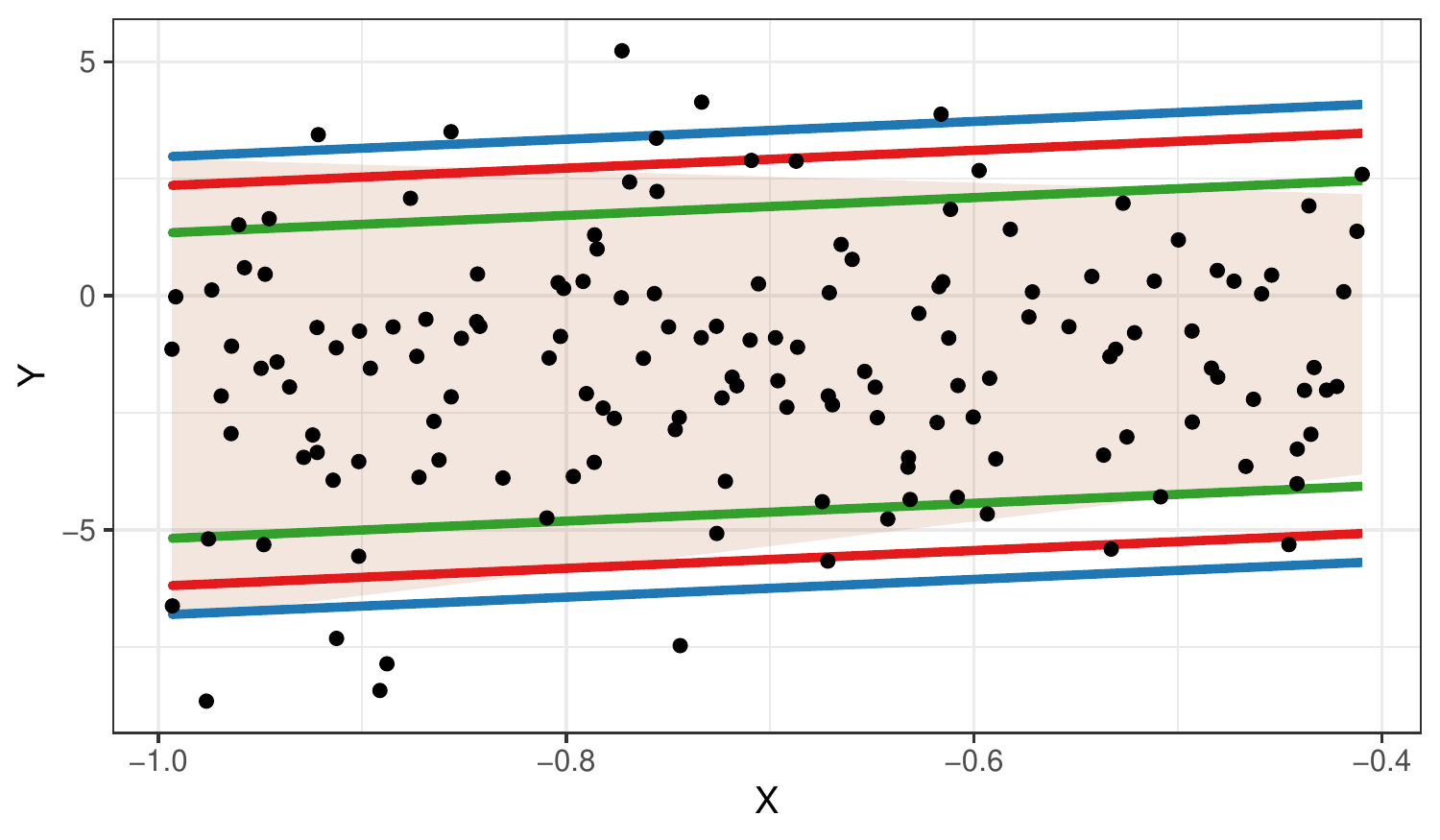}
\caption{The illustration plot of prediction interval for selected individuals. Red: selection conditional conformal prediction. Green: ordinary conformal prediction. Blue: $\FCR$-adjusted conformal prediction. And the brown area is the true prediction interval.}
\label{fig:PIplot}
\end{figure}

\subsection{Comparison with \citet{weinstein2020online}'s method}

The method in \citet{weinstein2020online} is designed for online control of $\FCR$, which would exhibit conservative performances in the offline setting. At each time point $t$, they construct $1-\alpha_t$ marginal confidence interval for some $\alpha_t<\alpha$. And the choice of $\alpha_t$ should keep the estimated false coverage proportion controlled at $\alpha$ at each time $t$, i.e.
    $$\widehat{\operatorname{FCP} }(t):=\frac{\sum_{i=1}^t\alpha_i}{1\vee \sum_{i=1}^tS_i}\leq\alpha.$$
    
   In the implementation of the online $\FCR$ control method, $\alpha_t$ is considerably small, leading to a confidence interval of inflated length. In Figure \ref{fig:quantileLORD}, we compared our method to the one in \citet{weinstein2020online} when three quantile-based thresholds i.e. {T-cal($q$)}, {T-test($q$)} and {T-exch($q$)} are considered under both scenarios with varying $q$. The online $\FCR$ control method in \citet{weinstein2020online} is implemented with its defaulted parameters by adding the offline unlabeled data one by one and then computing the $\FCR$ value when the ``online procedure'' stops.
As expected, this online method is not as efficient as our proposed method in the sense that it yields extremely conservative $\FCR$ values as well as inflated intervals. Thus, we focused on the comparisons between our method and other offline methods in the main text and put this figure in the Supplementary Material. 

\begin{figure}
\centering
\includegraphics[width=\textwidth]{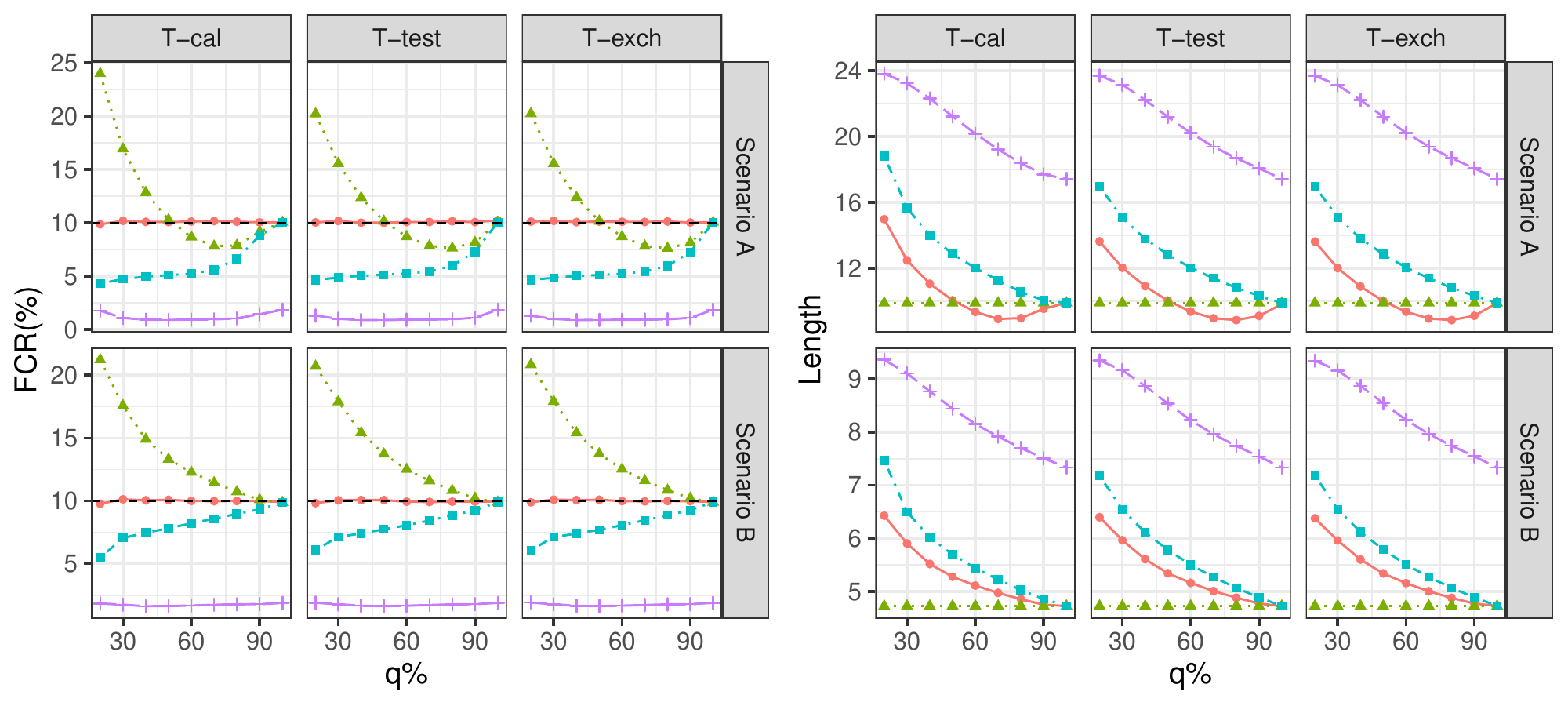}
% note that files may not be rotated
\caption{Empirical $\FCR$ (\%) and average length of $\PI_j$ for quantile based thresholds with varying quantile level $q\%$  of three methods as follows: selection conditional conformal prediction (circle dot solid red line); ordinary conformal prediction (triangle dashed green line); $\FCR$-adjusted conformal prediction (square dot-dash blue line); \citet{weinstein2020online}'s online $\FCR$ method (plus sign purple line). The black dashed line represents the target $\FCR$ level $10\%$.}
\label{fig:quantileLORD}
\end{figure}

\subsection{Extension to conformalized quantile regression}
In fact, our framework can readily be extended to more general nonconformity scores such as the one based on quantile regression \citep{romano2019conformalized} or distributional regression \citep{victor2021distributional} to adapt to heteroscedastic cases. Similar theoretical results are still valid for those more general methods. We make an additional simulation based on Scenario A using conformalized quantile regression \citep{romano2019conformalized} to illustrate the flexibility of our method. 

Conformalized quantile regression firstly fits two conditional quantile regressions at level $\alpha_{\operatorname{lo}}$ and $\alpha_{\operatorname{hi}}$ on the training set, where $\alpha_{\operatorname{lo}}=\alpha/2$ and $\alpha_{\operatorname{hi}}=1-\alpha/2$. Denote these two estimated regression functions as $\hat{q}_{\alpha_{\operatorname{lo}}}(\cdot)$ and $\hat{q}_{\alpha_{\operatorname{hi}}}(\cdot)$. Then the conformalized quantile regression computes a new nonconformity score $R^{\operatorname{CQR}}_i$ based on the calibration set:
$$R^{\operatorname{CQR}}_i=\max\{\hat{q}_{\alpha_{\operatorname{lo}}}(X_i)-Y_i,Y_i-\hat{q}_{\alpha_{\operatorname{hi}}}(X_i)\},i\in\gC.$$
We can replace the residual nonconformity score with this quantile regression based score, and the corresponding procedure is just the same as those for both $\FCR$-adjusted method and our proposed method given in Algorithm \ref{alg:SCOP}. In the additional simulation, we use linear quantile regression to fit the model, and the results are listed in Table \ref{table:cqr}. The adopted thresholds are 
{T-con}, {T-pos} and {T-top} as previously. %To avoid the situation that no individual is selected, here we change the FDR level for {T-pos} to $30\%$. 
These results demonstrate that our method can be validly combined with quantile regression nonconformity score to adapt to the heteroscedastic cases.
 \begin{table}[htbp]
\centering
\caption{Comparisons of empirical $\FCR$ (\%) and average length under Scenario A for quantile regression score. The sample sizes of the calibration set and the test set are fixed as $n=m=200$. The $b_0$ is fixed as $-1$ and the target $\FCR$ $\alpha=10\%$. SCOP: the proposed method; OCP: ordinary conformal interval; ACP: $\FCR$-adjusted conformal interval.}\label{table:cqr}
\resizebox{\textwidth}{!}{
\begin{tabular}{ccccccccccccc}
    && \multicolumn{3}{c}{{T-con($b_0$)}} && \multicolumn{3}{c}{{T-pos($b_0,30\%$)}} && \multicolumn{3}{c}{{T-top($60$)}}   \\ 
    && SCOP    & OCP    & ACP    && SCOP    & OCP    & ACP    && SCOP    & OCP    & ACP    \\                                 
{Scenario A}& $\FCR$ & 9.58    & 13.96  & 4.65   && 8.10    & 15.91  & 3.71 && 9.51 & 14.49 & 3.71    \\
{(quantile regression score)}&Length  & 13.05    & 11.33   & 16.25   && 15.14& 11.33   & 20.37  && 13.25 & 11.33 & 16.49    \\ 
\end{tabular}}
\end{table}

\subsection{Additional thresholds and scenarios }\label{subsec:add_thresh}
In geospatial analysis, researchers require to classify different regions into two groups by their values of average annual rainfall \citep{de2007geospatial}. The method is to perform clustering on the values of rainfall and find the boundary value of two groups, which is referred as to a ``natural break'' in scientific literature \citep{baah2015risk}. When a region has larger rainfall than the boundary value, it would be regarded as a high rainfall area. Therefore, If we want to select those high rainfall areas and are only available to the rainfall predictions obtained by machine learning algorithms, a natural choice is to cluster the rainfall predictions and find the boundary value as a selection threshold. Hence we consider an additional threshold which is based on this clustering method, denoted as {T-clu}.

The clustering method used in the natural break analysis is the Fisher optimal division \citep{fisher1958grouping}. In our implementation, we put the calibration set and test set together for clustering, which ensures the exchangeability of the threshold. Denote the threshold as $\hat{\tau}$. There are two groups determined by $\hat{\tau}$, i.e. $\gS_1(\hat{\tau})=\{i\in\gC\cup\gU:T_i\leq \hat{\tau}\}$ and $\gS_1^c(\hat{\tau})=\gC\cup\gU/\gS_1(\hat{\tau})$. We want to choose $\hat{\tau}$ such that the within-group discrepancy is minimized, i.e.
$$\htau=\mathop{\arg\min}\limits_{t\in \ermT^{\gC\cup\gU}} \sum_{i\in\gS_1(t)}(T_i-\Bar{\ermT}^{\gS_1(t)})^2+\sum_{k\in\gS_1^c(t)}(T_k-\Bar{\ermT}^{\gS_1^c(t)})^2,$$
 where $\Bar{\ermT}^{\gS_1(t)}$ and $\Bar{\ermT}^{\gS_2(t)}$ are the sample mean in $\gS_1(t)$ and $\gS_2(t)$, respectively.

Further, we consider an additional scenario to match the clustering threshold. {Scenario C} is an aggregation model formed by $\mu(X)=4\{X^{(1)}+1\}|X^{(3)}|\Indicator{{X^{(2)}>-0.4}}+4\{X^{(1)}-1\}\Indicator{X^{(2)}\leq-0.4}. $ The noise is $\epsilon\sim N(0,1)$ and independent of $X$. We use random forest to give predictions for Scenario C, where the algorithm is implemented by \texttt{R} package \texttt{randomForest}  with default parameters. The simulation results under Scenario C are shown in Table \ref{table:n-m-combination}. And the additional threshold {T-clu} is examined via our experiments in Table \ref{table:n-m-combination} and Table \ref{table:houseprice}.

\subsection{Additional Simulation results}
Table \ref{table:diff_algo} displays the results for both regression algorithms, ordinary least square and support vector machine, on the mentioned scenarios. It verifies that our method is model-agnostic and can wrap around various learning algorithms.

    \begin{table}[htbp]
\centering
\caption{Comparisons of empirical $\FCR$ (\%) and average length under different scenarios and thresholds with target $\FCR$ $\alpha=10\%$. The sample sizes of the calibration set and the test set are fixed as $n=m=200$. OLS: ordinary least square; SVM: support vector machine.}\label{table:diff_algo}
\resizebox{\textwidth}{!}{
\begin{tabular}{lcccccccccccc}
    && \multicolumn{3}{c}{{T-con($b_0$)}} && \multicolumn{3}{c}{{T-pos($b_0,20\%$)}} && \multicolumn{3}{c}{{T-top($60$)}}   \\ 
    && SCOP    & OCP    & ACP    && SCOP    & OCP    & ACP    && SCOP    & OCP    & ACP    \\ 
\specialrule{0em}{2pt}{2pt}
\multirow{2}{*}{{Scenario A (OLS)}}& $\FCR$ & 9.76   & 14.67  & 4.91   && 5.45    & 13.19& 2.17    && 9.73 & 15.26 & 4.90   \\
&Length  & 11.83   & 9.91   & 14.87  && 16.06   & 9.91   & 22.57 && 12.09 & 9.91 & 15.10   \\    
\specialrule{0em}{2pt}{2pt}
\multirow{2}{*}{{Scenario A (SVM)}}& $\FCR$ & 9.80   & 14.87  & 4.15   && 5.08    & 9.86  & 1.20    && 9.87 & 13.69 & 4.37   \\
&Length  & 12.40   & 10.20   & 16.32  && 15.58   & 10.20   & 24.04 && 11.90 & 10.20 & 15.62   \\    
\specialrule{0em}{2pt}{2pt}
\multirow{2}{*}{{Scenario B (OLS)}}& $\FCR$ & 9.99    & 12.25  & 5.05   && 9.91    & 12.13  & 5.12 && 9.90 & 13.01 & 4.67    \\
&Length  & 5.03    & 4.67   & 5.94   && 5.01    & 4.67   & 5.91  && 5.13 & 4.67 & 6.15    \\   
\specialrule{0em}{2pt}{2pt}
\multirow{2}{*}{{Scenario B (SVM)}}& $\FCR$ & 9.84    & 16.99  & 7.04   && 9.93    & 15.83  & 7.09 && 9.80 & 17.71& 6.98    \\
&Length  & 5.87    & 4.72   & 6.41   && 5.68    & 4.72   & 6.23  && 5.95 & 4.72 & 6.54    \\ 
\end{tabular}}
\end{table}

Table \ref{table:non-normal} shows the results under Scenario B when the noise term follows 
$t$-distribution with two degrees of freedom, i.e. $\epsilon\sim t(2)$. We can see the proposed method still outperforms others in the fully non-normal situation, which further verifies the distribution-free property of our method.

\begin{table}[htbp]
\centering
\caption{Comparisons of empirical $\FCR$ (\%) and average length under Scenario B with non-normal error. The sample sizes of the calibration set and the test set are fixed as $n=m=200$. The $b_0$ is fixed as $-8$ and the target $\FCR$ $\alpha=10\%$.}\label{table:non-normal}
\resizebox{\textwidth}{!}{
\begin{tabular}{ccccccccccccc}
    && \multicolumn{3}{c}{{T-con($b_0$)}} && \multicolumn{3}{c}{{T-pos($b_0,20\%$)}} && \multicolumn{3}{c}{{T-top($60$)}}   \\ 
    && SCOP    & OCP    & ACP    && SCOP    & OCP    & ACP    && SCOP    & OCP    & ACP    \\                                 
{Scenario B}& $\FCR$ & 9.78    & 12.73  & 3.83   && 9.81    & 12.57  & 3.83 && 9.75 & 12.91 & 3.50    \\
{($\epsilon\sim t(2)$)}&Length  & 8.14    & 7.17   & 12.18   && 8.09    & 7.17   & 12.06  && 8.23 & 7.17 & 12.72    \\ 
\end{tabular}}
\end{table}

Table \ref{table:compar_algorithm+} compares the performances of Algorithm \ref{alg:SCOP} and Algorithm \ref{alg:SCOP}+ under the same setting as Table \ref{table:other_thresholds} in the main text.  It can be shown that Algorithm \ref{alg:SCOP} and Algorithm \ref{alg:SCOP}+ yield very close empirical results.

\begin{table}[htbp]
\centering
\caption{Comparisons of empirical $\FCR$ (\%) and average length with target $\FCR$ $\alpha=10\%$ for Algorithm \ref{alg:SCOP} and Algorithm \ref{alg:SCOP}+. The sample sizes of the calibration set and the test set are fixed as $n=m=200$}\label{table:compar_algorithm+}
\resizebox{\textwidth}{!}{
\begin{tabular}{cccccccccc}
 & &\multicolumn{2}{c}{{T-con($b_0$)}} && \multicolumn{2}{c}{{T-pos($b_0,20\%$)}} && \multicolumn{2}{c}{{T-top($60$)}}   \\ 
 & &  Algorithm \ref{alg:SCOP}  &Algorithm \ref{alg:SCOP}+      && Algorithm \ref{alg:SCOP} &Algorithm \ref{alg:SCOP}+      && Algorithm \ref{alg:SCOP}  &Algorithm \ref{alg:SCOP}+      \\ 
  
 \multirow{2}{*}{{Scenario A}}&$\FCR$ & 9.759   & 9.874    && 6.966 &5.449      && 9.725 & 9.917    \\
&Length  & 11.829   & 11.772   && 15.479 &16.062  && 12.115 &12.091    \\    
\specialrule{0em}{2pt}{2pt}
\multirow{2}{*}{{Scenario B}}& $\FCR$ & 9.843   & 9.973    && 9.819 &9.931      && 9.667 & 9.890    \\
&Length  & 5.866   & 5.845   && 5.737 &5.683  && 5.959 &5.954    \\    

\end{tabular}}
\end{table}

Table \ref{table:n-m-combination} reports the results with different sizes of calibration and test sets under Scenario C in Section \ref{subsec:add_thresh} by varying $n$ and $m$ from 100 to 200. Here four selection thresholds are included: {T-test($30\%$)}, {T-pos($-1.2$,$20\%$)}, {T-top($60$)} and {T-clu}.  We see that all three methods tend to yield narrowed prediction intervals as the calibration size $n$ increases. However, our method in Algorithm \ref{alg:SCOP} performs much better than ordinary conformal prediction and $\FCR$-adjusted conformal prediction in terms of $\FCR$ control across all the settings. This clearly demonstrates the efficiency of our proposed method, which is a data-driven method that enables $\FCR$ control with a wide range of selection procedures and meanwhile tends to build a relatively narrowed $\PI_j$ with $1-\alpha$ level.

\begin{table}[tb]
\caption{Empirical $\FCR$ (\%) and average length values under Scenario C with different combinations of $(n,m)$. The target $\FCR$ level is $\alpha=10\%$}\label{table:n-m-combination}
\resizebox{\textwidth}{!}{
\begin{tabular}{ccccccccccccccccc}
                       ($n$,$m$)    &     & \multicolumn{3}{c}{{T-test($30\%$)}} && \multicolumn{3}{c}{{T-pos($-1.2$,$20\%$)}} && \multicolumn{3}{c}{{T-top($60$)}} && \multicolumn{3}{c}{{T-clu}} \\ 
                           &     & SCOP    & OCP    & ACP    && SCOP    & OCP    & ACP    && SCOP    & OCP    & ACP    && SCOP   & OCP   & ACP  \\ 
                                                                 
\multirow{2}{*}{(100,100)} & $\FCR$ & 9.94   & 14.60 & 4.53   &&  10.16  & 14.83  & 4.49   &&9.88     &  8.94     &  5.40   &&  9.74   & 15.20  &  4.33   \\
                           & Length  &  7.17&  6.26&  8.64&&  7.13&  6.26&  8.72&&   6.10&   6.26&   7.32  &&  7.28&  6.26&  8.80   \\ 
\specialrule{0em}{2pt}{2pt}
 \multirow{2}{*}{(200,100)} & $\FCR$ & 10.08& 12.70& 3.89&& 10.34 &12.48 &3.75  &&  10.10&  7.92  &4.63  && 10.16& 13.16 &3.74   \\
                           & Length  &   5.76& 5.29& 7.29&& 5.69& 5.29 &7.28 && 4.92&  5.29 & 6.17&& 5.86& 5.29 &7.42      \\     
\specialrule{0em}{2pt}{2pt}
\multirow{2}{*}{(100,200)} & $\FCR$ & 9.84 &14.53& 4.59 &&9.91 &14.57 &4.41  && 9.84 &14.53  &4.59 &&9.81& 14.92 &4.40     \\
                           & Length  &   7.22 & 6.27 &8.67&& 7.23 & 6.27 &8.73   &&7.22  & 6.27  &8.67   &&7.28  &6.27 &8.79  \\ 
\specialrule{0em}{2pt}{2pt}
\multirow{2}{*}{(200,200)} & $\FCR$ & 9.81 &12.48 &3.78 &&9.96 &12.30& 3.86   &&9.81 &12.48  &3.78&&9.82 &12.93 &3.70      \\
                           & Length  & 5.79 &5.30 &7.32 &&5.73 &5.30 &7.27  &&5.79  &5.60  &7.32  && 5.85 &5.30  &7.41    \\                     
\end{tabular}}
\end{table}

\section{Additional real data applications on house price analysis}\label{appen:real_data}
We apply our method to implement the prediction of house prices of interest. As an economic indicator, a better understanding of house prices can provide meaningful suggestions to researchers and decision-makers in the real estate market \citep{anundsen2013self}. In recent decades, business analysts use machine learning tools to forecast house prices and determine the investment strategy \citep{park2015using}. In this example, we use our method to build prediction intervals for those house prices which exceed certain thresholds.

We consider one house price prediction dataset from Kaggle, which contains $4,251$ observations after removing the missing data (\texttt{https://www.kaggle.com/datasets/shree1992/housedata}). The data records the house price %(scaling by dividing one million) 
and other covariates about house area, location, building years and so on. We randomly sample $1,500$ observations and equally split them into three parts as training, calibration and test sets respectively in each repetition. We firstly train a random forest model to predict the house prices and consider three different thresholds to select those test observations with high predicted house prices: {T-test($70\%$)}, {T-pos($0.6$,$20\%$)} and {T-clu}. For example, the threshold {T-pos($0.6$,$20\%$)} means that one would like to select those observations with house prices larger than $0.6$ million under the $\FDR$ control at 20\% level.  After selection, we construct prediction intervals with $\alpha=10\%$. Table \ref{table:houseprice} reports the empirical $\FCR$ level and lengths of $\PI_j$ among $500$ replications. We observe that both our method and $\FCR$-adjusted conformal prediction achieve valid $\FCR$ control, but our method has more narrowed prediction intervals compared to $\FCR$-adjusted conformal prediction. The $\FCR$ values of the ordinary conformal prediction are much inflated which implies that many test samples with truly high house prices cannot be covered. It demonstrates that our proposed method works well for building prediction intervals of selected samples in practical applications.

\begin{table}[tb]
\centering
\caption{Empirical $\FCR$ (\%) and average length of the prediction intervals for house price dataset with $\alpha=10\%$}\label{table:houseprice}
\begin{tabular*}\hsize{@{}@{\extracolsep{\fill}}cccccccccccc@{}}
                     & \multicolumn{3}{c}{{T-test($70\%$)}}         & &\multicolumn{3}{c}{{T-pos($0.6$,$20\%$)}}          && \multicolumn{3}{c}{{T-clu}}          \\ 
    & SCOP    & OCP    & ACP    && SCOP    & OCP    & ACP    && SCOP    & OCP    & ACP    \\ 
FCR  &  8.91 &  24.80 &  7.56 && 8.02 & 33.58 & 4.87 && 8.57 &  38.49 & 6.41   \\
Length  & 1.23 & 0.67&  1.31 &&  1.58& 0.67 &3.13 &&1.69& 0.67 &1.87 \\ 
\end{tabular*}
\end{table}

\end{document}